\newtheorem{theorem}{Theorem}
\newtheorem{corollary}[theorem]{Corollary}
\newtheorem{definition}[theorem]{Definition}
\newtheorem{lemma}[theorem]{Lemma}
\newtheorem{proposition}[theorem]{Proposition}
\newtheorem{remark}[theorem]{Remark}
\newenvironment{proof}[1][Proof]{\noindent\textbf{#1.} }{\ \rule{0.5em}{0.5em}}
\begin{document}

\title{The quantum harmonic oscillator with icosahedral symmetry and some explicit wavefunctions}
\author{Charles F. Dunkl\\Department of Mathematics\\University of Virginia\\Charlottesville VA 22904-4137}
\maketitle

\begin{abstract}
The Dunkl Laplacian is used to define the Hamiltonian of a modified quantum
harmonic oscillator, associated with any finite reflection group. The
potential is a sum of the inverse squares of the linear functions whose zero
sets are the mirrors of the group's reflections. The symmetric group version
of this is known as the Calogero-Moser model of N identical particles on a
line. This paper focuses on the group of symmetries of the regular
icosahedron, associated to the root system of type H3. Special wavefunctions
are defined by a generating function arising from the vertices of the
icosahedron and have the key property of allowing easy calculation of the
effect of the Dunkl Laplacian. The ground state is the product of a Gaussian
function with powers of linear functions coming from the root system. Two
types of wavefunctions are considered, inhomogeneous polynomials with
specified top-degree part, and homogeneous harmonic polynomials. The squared
norms for both types are explicitly calculated. Symmetrization is applied to
produce the invariant polynomials of both types, as well as their squared
norms. The action of the angular momentum square on the harmonic homogeneous
polynomials is determined. There is also a sixth-order operator commuting with
the Hamiltonian and the group action.

\end{abstract}

\section{Introduction}

The icosahedron appears in geometry as one of the five regular solids, in
algebra as one of the rank three Coxeter groups and even in nature as a virus
molecule. This paper studies its role in quantum mechanics, in the form of a
potential with icosahedral symmetry in the Schr\"{o}dinger equation. Just as
in antiquity there are always mysteries about the icosahedron but some new
analytical insights will be presented here. We use the mechanism of Dunkl
operators and the Dunkl Laplacian to define the Hamiltonian of a modified
quantum harmonic oscillator, associated with any finite reflection group. The
potential is a sum of the inverse squares of the linear functions whose zero
sets are the mirrors of the group's reflections. The symmetric group version
of this is known as the Calogero-Moser model of $N$ identical particles on a
line with $r^{-2}$ interactions and harmonic confinement (Lapointe and Vinet
\cite{LV1996}). Quesne \cite{Q2010} proved superintegrability for the even
dihedral group oscillator. The abelian group cases (sign-changes of the
coordinates) have been studied by Genest and Vinet \cite{GV2014}, who used the
Hamiltonian to define several symmetry algebras.

Section \ref{RefGrps} provides the general background for the paper and
applies to all finite reflection groups. In Section \ref{Rootsys} there is
background information about finite reflection groups, Dunkl operators and the
Dunkl Laplacian. There are some formulas for harmonic polynomials, that is,
polynomials annihilated by the Dunkl Laplacian, which are needed later. Also
there are commutation relations for multiplication and Dunkl operators.
Section \ref{DHO} introduces the Dunkl harmonic oscillator and basic
properties, for any reflection group. The interplay between wavefunctions,
harmonic and Laguerre polynomials is described. The definition and properties
of important operators, namely the raising (creation) and lowering
(annihilation) types and the angular momentum squares, are presented. The
ground states for the Schr\"{o}dinger equation are products of an exponential
and a group-invariant product function formed from the root system. This leads
to the inner product structure which underlies the probabilities coming from
squares of wavefunctions. The formulas in this part apply in the general setting.

The icosahedral group and its root system $H_{3}$ are introduced in Section
\ref{IcosGrp}. There is a brief overview of the symmetry properties of the
regular icosahedron and the regular dodecahedron, and of the fundamental
invariant polynomials. Also the $H_{3}$ version of the Macdonald integral
appears; this serves to normalize the squared-norm formulas for wavefunctions.
The special wavefunctions of the title are defined in Section \ref{GenFunWF}.
They are produced by a generating function arising from the vertices of the
icosahedron and have the key property of allowing easy calculation of the
effect of the Dunkl Laplacian. Two types of wavefunctions are considered,
inhomogeneous polynomials with specified top-degree part, and homogeneous
harmonic polynomials. The squared norms for both types are explicitly
calculated. Symmetrization is applied to produce the invariant polynomials of
both types, as well as the squared norms. In Section \ref{2nd6thOps} two other
operators are discussed, the angular momentum square and a sixth-order
operator defined in terms of raising and lowering operators. There are
concluding remarks (Section \ref{ConcRem}) regarding some open technical
problems in the construction and analysis of wavefunctions for the icosahedral
model. The Appendix sketches a symbolic computation approach to analyzing some
operators defined in terms of Dunkl operators.

\section{\label{RefGrps}Reflection groups and background}

\subsection{\label{Rootsys}Root systems and Dunkl operators}

In $\mathbb{R}^{N}$ the inner product is $\left\langle x,y\right\rangle
:=\sum_{i=1}^{N}x_{i}y_{i}$ and $\left\vert x\right\vert ^{2}:=\left\langle
x,x\right\rangle $. If $v\neq0$ then the reflection $\sigma_{v}$ along $v$ is
defined by%
\[
x\sigma_{v}:=x-2\frac{\left\langle x,v\right\rangle }{\left\vert v\right\vert
^{2}}v.
\]
This is an isometry $\left\vert x\sigma_{v}\right\vert ^{2}=\left\vert
x\right\vert ^{2}$ and an involution $\sigma_{v}^{2}=I$. The set of fixed
points ($x\sigma_{v}=x$) is the hyperplane $\left\{  x:\left\langle
x,v\right\rangle =0\right\}  $. A finite root system is a subset $R$ of
nonzero elements of $\mathbb{R}^{N}$ satisfying $u,v\in R$ implies
$u\sigma_{v}\in R$. We use only reduced root systems, that is, if $u,cu\in R$
then $c=\pm1$. Define $W\left(  R\right)  $ to be the (\textit{finite
reflection}) group generated by $\left\{  \sigma_{v}:v\in R\right\}  $, a
finite subgroup of the orthogonal group $O_{N}\left(  \mathbb{R}\right)  .$
There is a decomposition of $R$ into positive roots $R_{+}=\left\{  v\in
R:\left\langle u_{0},v\right\rangle >0\right\}  $ and $R_{-}$; where $u_{0}$
is some fixed vector such that $\left\langle u_{0},v\right\rangle \neq0$ for
all $v\in R$. Since $\sigma_{v}=\sigma_{-v}$ the set $R_{+}$ is used to index
the reflections in $W\left(  R\right)  $. The root system $R$ is a union of
conjugacy classes $(W\left(  R\right)  $ orbits): $\sigma_{u}\sim\sigma_{v}$
if $u=vw$ for some $w\in W\left(  R\right)  .$ A \textit{multiplicity
function} $\kappa_{v}$ is a function on $R$ which is constant on each
conjugacy class, and usually here $\kappa_{v}\geq1$ or $\kappa_{v}$ is a
formal parameter. Set $\gamma_{\kappa}:=\sum_{v\in R_{+}}\kappa_{v}$. Define
the Dunkl operator ($1\leq i\leq N$)%
\[
\mathcal{D}_{i}f\left(  x\right)  :=\frac{\partial}{\partial x_{i}}f\left(
x\right)  +\sum\limits_{v\in R_{+}}\kappa_{v}\frac{f\left(  x\right)
-f\left(  x\sigma_{v}\right)  }{\left\langle x,v\right\rangle }v_{i}.
\]
Then $\mathcal{D}_{i}\mathcal{D}_{j}=\mathcal{D}_{j}\mathcal{D}_{i}$ for all
$i,j$ (Dunkl \cite{D1989}, also see Dunkl and Xu \cite[Sect. 6.4]{DX2014}).
Let $\nabla=\left(  \frac{\partial}{\partial x_{1}},\cdots,\frac{\partial
}{\partial x_{N}}\right)  $, $\Delta=\sum_{i=1}^{N}\left(  \frac{\partial
}{\partial x_{i}}\right)  ^{2}$ and $\nabla_{\kappa}=\left(  \mathcal{D}%
_{1},\ldots,\mathcal{D}_{N}\right)  .$

The Dunkl Laplacian is $\Delta_{\kappa}:=\sum_{i=1}^{N}\mathcal{D}_{i}^{2}$
and
\[
\Delta_{\kappa}f\left(  x\right)  =\Delta f\left(  x\right)  +\sum_{v\in
R_{+}}\kappa_{v}\left(  2\frac{\left\langle \nabla f\left(  x\right)
,v\right\rangle }{\left\langle x,v\right\rangle }-\left\vert v\right\vert
^{2}\frac{f\left(  x\right)  -f\left(  x\sigma_{v}\right)  }{\left\langle
x,v\right\rangle ^{2}}\right)  .
\]
Let $\mathbb{F}$ denote an extension field of $\mathbb{R}$ containing
$\omega>0$ and the values of $\kappa_{v}$ ($\mathbb{C}$ is not used here). Set
$\mathcal{P}:=\mathbb{F}\left[  x_{1},\ldots,x_{N}\right]  $ (polynomials in
$x$), for $n=1,2,\ldots$ let $\mathcal{P}_{n}=\left\{  p\in\mathcal{P}%
:p\left(  cx\right)  =c^{n}p\left(  x\right)  ,c\in\mathbb{F}\right\}  $
(homogeneous polynomials) and $\mathcal{P}_{n,\kappa}=\left\{  p\in
\mathcal{P}_{n}:\Delta_{\kappa}p\left(  x\right)  =0\right\}  $, the harmonic
homogeneous polynomials. The group $W\left(  R\right)  $ is represented on
$\mathcal{P}$ by $wp\left(  x\right)  =p\left(  xw\right)  $, $w\in W\left(
R\right)  $.

There are some basic commutation relations used throughout. (Note $\left[
A,B\right]  :=AB-BA$ for operators, and $\left\langle a,x\right\rangle
,\left\vert x\right\vert ^{2}$ are interpreted as multiplication operators,
for $a\in\mathbb{R}$):%
\begin{equation}
\left[  \left\langle a,\nabla_{\kappa}\right\rangle ,\left\langle
b,x\right\rangle \right]  =\left\langle a,b\right\rangle +2\sum_{v\in R_{+}%
}\kappa_{v}\frac{\left\langle a,v\right\rangle \left\langle b,v\right\rangle
}{\left\vert v\right\vert ^{2}}\sigma_{v} \label{xDx}%
\end{equation}%
\begin{equation}
\left[  \Delta_{\kappa},\left\langle b,x\right\rangle \right]  =2\left\langle
b,\nabla_{\kappa}\right\rangle ,~\left[  \left\vert x\right\vert
^{2},\left\langle a,\nabla_{\kappa}\right\rangle \right]  =-2\left\langle
a,x\right\rangle . \label{Dbxsq}%
\end{equation}
Denote $\delta:=\left\langle x,\nabla\right\rangle $ (thus $\delta p=np$ for
$p\in\mathcal{P}_{n}$), then%
\begin{equation}
\left[  \Delta_{\kappa},\left\vert x\right\vert ^{2}\right]  =2\left(
N+2\gamma_{\kappa}+2\delta\right)  . \label{[xsq,delta]}%
\end{equation}

\begin{proposition}
\label{Lap^m}Suppose $\phi\in\mathcal{P}_{n,\kappa}$ and $1\leq k\leq
m=1,2,3,\ldots$ then%
\begin{align}
\Delta_{\kappa}\left\vert x\right\vert ^{2k}\phi\left(  x\right)   &
=2k\left(  N+2\gamma_{\kappa}+2n+2k-2\right)  \left\vert x\right\vert
^{2k-2}\phi\left(  x\right) \label{delxsq1}\\
\Delta_{\kappa}^{m}\left\vert x\right\vert ^{2m}\phi\left(  x\right)   &
=2^{2m}m!\left(  \frac{N}{2}+\gamma_{\kappa}+n\right)  _{m}~\phi\left(
x\right)  .\nonumber
\end{align}

\end{proposition}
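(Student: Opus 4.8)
The plan is to prove the first identity \eqref{delxsq1} directly from the commutator relation \eqref{[xsq,delta]}, and then obtain the second identity by induction on $m$ using the first.

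First I would compute $\Delta_{\kappa}\left(\left\vert x\right\vert^{2}\phi\right)$ in the case $k=1$. Writing $\Delta_{\kappa}\left\vert x\right\vert^{2}\phi = \left[\Delta_{\kappa},\left\vert x\right\vert^{2}\right]\phi + \left\vert x\right\vert^{2}\Delta_{\kappa}\phi$ and using that $\phi$ is harmonic ($\Delta_{\kappa}\phi=0$) together with \eqref{[xsq,delta]}, the second term vanishes and the first becomes $2\left(N+2\gamma_{\kappa}+2\delta\right)\phi$. Since $\phi\in\mathcal{P}_{n,\kappa}\subset\mathcal{P}_{n}$ we have $\delta\phi = n\phi$, so this equals $2\left(N+2\gamma_{\kappa}+2n\right)\phi$, which matches \eqref{delxsq1} at $k=1$. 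For general $k$ I would proceed by induction on $k$: assuming the formula for $k-1$, I note that $\left\vert x\right\vert^{2k-2}\phi$ is harmonic only if $k=1$, so instead I apply the $k=1$-type computation with $\phi$ replaced by a homogeneous polynomial of degree $n+2k-2$. Concretely, $\Delta_{\kappa}\left(\left\vert x\right\vert^{2}\cdot\left\vert x\right\vert^{2k-2}\phi\right) = \left[\Delta_{\kappa},\left\vert x\right\vert^{2}\right]\left(\left\vert x\right\vert^{2k-2}\phi\right) + \left\vert x\right\vert^{2}\Delta_{\kappa}\left(\left\vert x\right\vert^{2k-2}\phi\right)$; the commutator term gives $2\left(N+2\gamma_{\kappa}+2(n+2k-2)\right)\left\vert x\right\vert^{2k-2}\phi$ using $\delta\left(\left\vert x\right\vert^{2k-2}\phi\right) = (n+2k-2)\left\vert x\right\vert^{2k-2}\phi$, and the second term is handled by the inductive hypothesis, contributing $\left\vert x\right\vert^{2}\cdot 2(k-1)\left(N+2\gamma_{\kappa}+2n+2k-4\right)\left\vert x\right\vert^{2k-4}\phi$. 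Adding the two and simplifying the coefficient of $\left\vert x\right\vert^{2k-2}\phi$ — which collapses to $2k\left(N+2\gamma_{\kappa}+2n+2k-2\right)$ — finishes the induction.

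For the second identity I would iterate \eqref{delxsq1}. Applying $\Delta_{\kappa}$ repeatedly, $\Delta_{\kappa}^{m}\left\vert x\right\vert^{2m}\phi$ peels off one factor of $\left\vert x\right\vert^{2}$ at each stage: the $j$-th application (counting from the outside) converts $\left\vert x\right\vert^{2m-2j+2}\phi$ into a multiple of $\left\vert x\right\vert^{2m-2j}\phi$ with factor $2(m-j+1)\left(N+2\gamma_{\kappa}+2n+2(m-j+1)-2\right) = 2(m-j+1)\left(N+2\gamma_{\kappa}+2n+2(m-j)\right)$. Taking the product over $j=1,\dots,m$, the factors of $2$ give $2^{m}$, the integers $m-j+1$ give $m!$, and the remaining factors $N+2\gamma_{\kappa}+2n+2(m-j)$ for $j=1,\dots,m$ are $N+2\gamma_{\kappa}+2n, N+2\gamma_{\kappa}+2n+2,\dots, N+2\gamma_{\kappa}+2n+2(m-1)$, i.e. $2^{m}\prod_{i=0}^{m-1}\left(\frac{N}{2}+\gamma_{\kappa}+n+i\right) = 2^{m}\left(\frac{N}{2}+\gamma_{\kappa}+n\right)_{m}$. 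Combining the $2^{m}$ from the Pochhammer rewriting with the earlier $2^{m}$ yields $2^{2m}m!\left(\frac{N}{2}+\gamma_{\kappa}+n\right)_{m}$.

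The only real obstacle is bookkeeping: one must be careful that at each stage the polynomial remaining is homogeneous of the correct degree (so that $\delta$ acts by the right scalar) and that $\phi$'s harmonicity is used only once, at the innermost level — after that $\Delta_{\kappa}\phi=0$ simply means no further terms are generated. The coefficient algebra is routine once the degree tracking is set up correctly; telescoping the product into a Pochhammer symbol is the standard final step.
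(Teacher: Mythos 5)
Your proof is correct and follows essentially the same route as the paper: the commutator relation \eqref{[xsq,delta]} combined with harmonicity gives the case $k=1$, induction on $k$ yields \eqref{delxsq1}, and iterating that formula $m$ times with the product telescoping into $2^{2m}m!\left(\tfrac{N}{2}+\gamma_{\kappa}+n\right)_{m}$ gives the second identity. The degree bookkeeping and coefficient algebra you describe match the paper's computation exactly.
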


\begin{proof}
By (\ref{[xsq,delta]}), which is the case $i=1$, and induction we prove the
first statement. Assume the statement is true for $i=1,2,\ldots,k-1$ then
(with $f=\left\vert x\right\vert ^{2k-2}\phi$)%
\begin{gather*}
\Delta_{\kappa}\left\vert x\right\vert ^{2k}\phi\left(  x\right)  =\left\vert
x\right\vert ^{2}\Delta_{\kappa}\left\vert x\right\vert ^{2k-2}\phi+2\left(
N+2\gamma_{\kappa}+2\delta\right)  \left\vert x\right\vert ^{2k-2}\phi\left(
x\right) \\
=\left\{  2\left(  k-1\right)  \left(  N+2\gamma_{\kappa}+2n+2k-4\right)
+2\left(  N+2\gamma_{\kappa}+4k-4+2n\right)  \right\}  \left\vert x\right\vert
^{2k-2}\phi\\
=2k\left(  N+2\gamma_{k}+2n+2k-2\right)  \left\vert x\right\vert ^{2k-2}\phi,
\end{gather*}
completing the induction. Let $C_{n,k}=4k\left(  \frac{N}{2}+\gamma_{\kappa
}+n+k-1\right)  $ for $k=1,2,\ldots,m$ then
\begin{align*}
\Delta_{\kappa}^{m-1}\Delta_{\kappa}\left\vert x\right\vert ^{2m}\phi\left(
x\right)   &  =C_{n,m}\Delta_{\kappa}^{m-2}\left(  \Delta_{\kappa}\left\vert
x\right\vert ^{2m-2}\phi\left(  x\right)  \right) \\
&  =C_{n,m}C_{n,m-1}\Delta_{\kappa}^{m-3}\left(  \Delta_{\kappa}\left\vert
x\right\vert ^{2m-4}\phi\left(  x\right)  \right)  =\ldots\\
&  =C_{n,m}C_{n,m-1}\cdots C_{n,2}\Delta_{\kappa}\left\vert x\right\vert
^{2}\phi\left(  x\right)  =\prod\limits_{i=1}^{m}C_{n,i}\phi\left(  x\right)
\end{align*}
and $\prod\limits_{i=1}^{m}C_{n,i}=2^{2m}m!\left(  \frac{N}{2}+\gamma_{\kappa
}+n\right)  _{m}$.
\end{proof}

There is a direct sum decomposition $\mathcal{P}_{n}=\sum_{j=0}^{\left\lfloor
n/2\right\rfloor }\oplus\left\vert x\right\vert ^{2j}\mathcal{P}_{n-2j,\kappa
}$; this is a consequence of the following two propositions (\cite[Thm.
7.1.15]{DX2014}) 

\begin{proposition}
\label{hmproj}For $n=1,2,\ldots$ set $\Lambda_{n}:=\sum\limits_{j=0}%
^{\left\lfloor n/2\right\rfloor }\dfrac{1}{4^{j}j!\left(  -N/2-\gamma_{\kappa
}-n+2\right)  _{j}}\left\vert x\right\vert ^{2j}\Delta_{\kappa}^{j}$ then
$p\in\mathcal{P}_{n}$ implies .$\Lambda_{n}p\in\mathcal{P}_{n,\kappa}$ 
\end{proposition}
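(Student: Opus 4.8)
The plan is to apply $\Delta_\kappa$ termwise to $\Lambda_n p=\sum_{j=0}^{\lfloor n/2\rfloor}c_j\,|x|^{2j}\Delta_\kappa^j p$, where $c_j:=\bigl(4^j\,j!\,(-N/2-\gamma_\kappa-n+2)_j\bigr)^{-1}$ (so $c_0=1$), and show that the result collapses to $0$. Since each summand $|x|^{2j}\Delta_\kappa^j p$ is homogeneous of degree $2j+(n-2j)=n$, this is exactly what is needed to conclude $\Lambda_n p\in\mathcal{P}_{n,\kappa}$.

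First I would record the version of \eqref{delxsq1} that does \emph{not} assume harmonicity of the argument: for $q\in\mathcal{P}_m$ and $j\geq1$,
\[
\Delta_\kappa\bigl(|x|^{2j}q\bigr)=|x|^{2j}\Delta_\kappa q+2j\bigl(N+2\gamma_\kappa+2m+2j-2\bigr)|x|^{2j-2}q .
\]
This is proved by the same induction on $j$ used for Proposition \ref{Lap^m}, starting from $[\Delta_\kappa,|x|^2]=2(N+2\gamma_\kappa+2\delta)$ of \eqref{[xsq,delta]} and writing $|x|^{2j}q=|x|^2\cdot|x|^{2j-2}q$; the only change from the harmonic case is that the term $|x|^{2j}\Delta_\kappa q$ now survives, while collecting the coefficient of $|x|^{2j-2}q$ telescopes to $2j(N+2\gamma_\kappa+2m+2j-2)$ exactly as before.

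Applying this with $q=\Delta_\kappa^j p\in\mathcal{P}_{n-2j}$ (so $m=n-2j$ and the coefficient is $2j(N+2\gamma_\kappa+2n-2j-2)$) and summing gives
\[
\Delta_\kappa\Lambda_n p=\sum_{j\geq0}c_j\,|x|^{2j}\Delta_\kappa^{j+1}p+\sum_{j\geq1}2j\bigl(N+2\gamma_\kappa+2n-2j-2\bigr)c_j\,|x|^{2j-2}\Delta_\kappa^j p .
\]
Shifting the index by one in the first sum (its top term vanishes because $\Delta_\kappa^{\lfloor n/2\rfloor+1}p=0$) and combining, the coefficient of $|x|^{2i-2}\Delta_\kappa^i p$ becomes $c_{i-1}+2i(N+2\gamma_\kappa+2n-2i-2)c_i$. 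So the whole expression reduces to the recursion $c_i=-c_{i-1}\big/\bigl(2i(N+2\gamma_\kappa+2n-2i-2)\bigr)$, which I would verify directly from the definition of $c_j$ using $(a)_i=(a)_{i-1}(a+i-1)$ with $a=-N/2-\gamma_\kappa-n+2$: one finds $c_i/c_{i-1}=1\big/\bigl(4i(-N/2-\gamma_\kappa-n+1+i)\bigr)$, while $2i(N+2\gamma_\kappa+2n-2i-2)=4i(N/2+\gamma_\kappa+n-i-1)$, so the two sides are negatives of one another. Hence $\Delta_\kappa\Lambda_n p=0$.

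The remaining points are purely bookkeeping: the $c_j$ are well defined because, under the standing hypothesis on $\kappa$, each of the $j$ factors $-N/2-\gamma_\kappa-n+1+i$ (for $1\leq i\leq j\leq\lfloor n/2\rfloor$) is nonzero, so $(-N/2-\gamma_\kappa-n+2)_j\neq0$; and truncating the sum at $j=\lfloor n/2\rfloor$ costs nothing since higher powers of $\Delta_\kappa$ annihilate $p$. I expect the only mildly delicate step to be the Pochhammer manipulation confirming that $c_i/c_{i-1}$ matches $-1\big/\bigl(2i(N+2\gamma_\kappa+2n-2i-2)\bigr)$; the generalized commutator identity above is just the induction already carried out for Proposition \ref{Lap^m} with one extra term retained.
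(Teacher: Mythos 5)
Your argument is correct. Note that the paper does not actually prove this proposition — it is quoted from \cite[Thm.\ 7.1.15]{DX2014} — so there is no in-paper proof to compare against; your termwise computation is the standard and natural route, and it is fully consistent with the machinery the paper does set up. The generalized identity $\Delta_\kappa\bigl(|x|^{2j}q\bigr)=|x|^{2j}\Delta_\kappa q+2j\bigl(N+2\gamma_\kappa+2m+2j-2\bigr)|x|^{2j-2}q$ for $q\in\mathcal{P}_m$ follows by exactly the induction used in Proposition \ref{Lap^m} with the surviving term retained, the index shift and the telescoping of $c_{i-1}+2i(N+2\gamma_\kappa+2n-2i-2)c_i$ check out against the Pochhammer recursion, and the remark on nonvanishing of $(-N/2-\gamma_\kappa-n+2)_j$ correctly disposes of the well-definedness issue. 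No gaps.
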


\begin{proposition}
\label{hmex}Suppose $p\in\mathcal{P}_{n}$ then%
\[
p\left(  x\right)  =\sum_{j=0}^{\left\lfloor n/2\right\rfloor }\dfrac{1}%
{4^{j}j!\left(  N/2+\gamma_{\kappa}+n-2j\right)  _{j}}\left\vert x\right\vert
^{2j}\Lambda_{n-2j}\Delta_{\kappa}^{j}p\left(  x\right)  .
\]

\end{proposition}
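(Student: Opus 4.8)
The plan is to prove the identity by induction on $n$, regarding it as the explicit inverse of the harmonic projection $\Lambda_n$ of Proposition~\ref{hmproj}. The cases $n=0,1$ are immediate: there $\lfloor n/2\rfloor=0$, so $\mathcal{P}_n=\mathcal{P}_{n,\kappa}$, the operator $\Lambda_n$ is the identity on $\mathcal{P}_n$, and both sides reduce to $p$. For the inductive step I assume the identity for all degrees less than $n$ and fix $p\in\mathcal{P}_n$.

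The first move is to peel off the top harmonic component. Set $\phi_0:=\Lambda_n p\in\mathcal{P}_{n,\kappa}$, using Proposition~\ref{hmproj}. Since the $j=0$ summand in the definition of $\Lambda_n$ is the identity operator, $p-\phi_0=-\sum_{j=1}^{\lfloor n/2\rfloor}\frac{1}{4^{j}j!(-N/2-\gamma_\kappa-n+2)_j}\,|x|^{2j}\Delta_\kappa^{j}p$, which is manifestly $|x|^{2}$ times a polynomial $q\in\mathcal{P}_{n-2}$. Applying the induction hypothesis to $q$ and regrouping $|x|^{2}\cdot|x|^{2i}=|x|^{2(i+1)}$ yields a decomposition $p=\sum_{k=0}^{\lfloor n/2\rfloor}|x|^{2k}\phi_k$ in which each $\phi_k$ is a $\Delta_\kappa$-harmonic homogeneous polynomial of degree $n-2k$ (using Proposition~\ref{hmproj} once more for the pieces coming from $q$). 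It remains only to identify $\phi_j$ intrinsically. Applying $\Delta_\kappa^{j}$ and using Proposition~\ref{Lap^m} term by term: the summands with $k<j$ vanish; the summand $k=j$ contributes $4^{j}j!\,(N/2+\gamma_\kappa+n-2j)_j\,\phi_j$, which is exactly the second identity of Proposition~\ref{Lap^m} with harmonic degree $n-2j$; and each summand with $k>j$ becomes a scalar multiple of $|x|^{2(k-j)}\phi_k$ by iterating the first identity. Now apply $\Lambda_{n-2j}$, which is linear and fixes the harmonic polynomial $\phi_j$: one obtains $\phi_j=\frac{1}{4^{j}j!(N/2+\gamma_\kappa+n-2j)_j}\Lambda_{n-2j}\Delta_\kappa^{j}p$, provided $\Lambda_{n-2j}$ annihilates every remaining term $|x|^{2(k-j)}\phi_k$ with $k>j$. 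Substituting the $\phi_j$ so obtained back into $p=\sum_k|x|^{2k}\phi_k$ then closes the induction.

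The crux is thus the auxiliary lemma: $\Lambda_M\bigl(|x|^{2m}\psi\bigr)=0$ whenever $\psi\in\mathcal{P}_{M-2m,\kappa}$ and $m\geq1$. I would prove it without any hypergeometric summation. By the first identity of Proposition~\ref{Lap^m}, each $\Delta_\kappa^{l}\bigl(|x|^{2m}\psi\bigr)$ is a scalar multiple of $|x|^{2(m-l)}\psi$, so every summand of $\Lambda_M\bigl(|x|^{2m}\psi\bigr)$ is a scalar multiple of $|x|^{2m}\psi$; hence $\Lambda_M\bigl(|x|^{2m}\psi\bigr)=\lambda\,|x|^{2m}\psi$ for a single scalar $\lambda$. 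On the other hand $\Lambda_M\bigl(|x|^{2m}\psi\bigr)$ is $\Delta_\kappa$-harmonic by Proposition~\ref{hmproj}, while $|x|^{2m}\psi$ is not, since $\Delta_\kappa|x|^{2m}\psi=2m\bigl(N+2\gamma_\kappa+2M-2m-2\bigr)|x|^{2m-2}\psi$ and the scalar factor is strictly positive for $\kappa_v\geq1$ (and nonzero for generic $\kappa$); therefore $\lambda=0$. The one place demanding real care throughout is the bookkeeping of nonvanishing of the Pochhammer denominators and of this last scalar factor under the paper's standing assumptions on $\kappa$; the combinatorial identities that could otherwise intervene are entirely forced by Proposition~\ref{Lap^m}, so I anticipate no deeper obstacle.
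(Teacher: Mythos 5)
Your argument is correct. Note that the paper itself offers no proof of Proposition~\ref{hmex} --- it is quoted, together with Proposition~\ref{hmproj}, from \cite[Thm.~7.1.15]{DX2014} --- so there is no in-paper argument to compare against; what you have written is a legitimate self-contained derivation from Propositions~\ref{Lap^m} and~\ref{hmproj}. The structure is sound: the induction produces some decomposition $p=\sum_k|x|^{2k}\phi_k$ with $\phi_k\in\mathcal{P}_{n-2k,\kappa}$ (the base cases $n=0,1$ and the observation that $p-\Lambda_np$ is divisible by $|x|^2$ are both right), and the identification of $\phi_j$ by applying $\Delta_\kappa^{j}$ followed by $\Lambda_{n-2j}$ is exactly where the Pochhammer factor $4^{j}j!\,(N/2+\gamma_\kappa+n-2j)_j$ is forced, matching the second identity of Proposition~\ref{Lap^m} at harmonic degree $n-2j$. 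Your auxiliary lemma $\Lambda_M(|x|^{2m}\psi)=0$ for $m\geq1$ is the genuinely nontrivial point, and your proof of it by the eigenvector-versus-harmonicity argument is a nice way to avoid the hypergeometric summation that a direct computation of the scalar $\lambda$ would require; the scalar $2m(N+2\gamma_\kappa+2M-2m-2)$ is indeed at least $N>0$ for $\kappa_v\geq0$, and the denominators $(-N/2-\gamma_\kappa-M+2)_j$ in $\Lambda_M$ are nonvanishing for the same reason, so the bookkeeping you flagged causes no trouble under the paper's standing assumptions on $\kappa$. In short, the proposal is a complete and correct proof of a statement the paper only cites.
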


\begin{definition}
For $a,b\in\mathbb{R}^{N}$ the angular momentum operator is $J_{a,b}%
:=\left\langle a,x\right\rangle \left\langle b,\nabla_{\kappa}\right\rangle
-\left\langle b,x\right\rangle \left\langle a,\nabla_{\kappa}\right\rangle .$
\end{definition}

\begin{proposition}
\label{Jprops}$J_{a,b}=\left\langle b,\nabla_{\kappa}\right\rangle
\left\langle a,x\right\rangle -\left\langle a,\nabla_{\kappa}\right\rangle
\left\langle b,x\right\rangle $; and $\left[  \Delta_{\kappa},J_{a,b}\right]
=0=\left[  \left\vert x\right\vert ^{2},J_{a,b}\right]  $.
\end{proposition}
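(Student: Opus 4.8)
The plan is to derive all three statements from just the two commutator families (\ref{xDx}) and (\ref{Dbxsq}), together with the single structural fact that the Dunkl operators pairwise commute (so any $\langle a,\nabla_{\kappa}\rangle$ commutes with $\Delta_{\kappa}$ and with any $\langle b,\nabla_{\kappa}\rangle$), and the trivial fact that $|x|^{2}$ and any $\langle a,x\rangle$ commute as multiplication operators. No new computation with the explicit form of $\mathcal{D}_{i}$ is needed.

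For the alternative expression for $J_{a,b}$, the key observation is that the right-hand side of (\ref{xDx}) is \emph{symmetric} under interchanging $a$ and $b$, hence $[\langle a,\nabla_{\kappa}\rangle,\langle b,x\rangle]=[\langle b,\nabla_{\kappa}\rangle,\langle a,x\rangle]$. I would write $\langle a,x\rangle\langle b,\nabla_{\kappa}\rangle=\langle b,\nabla_{\kappa}\rangle\langle a,x\rangle-[\langle b,\nabla_{\kappa}\rangle,\langle a,x\rangle]$ and the same with $a,b$ swapped, then subtract; the two commutator terms cancel because of this symmetry, leaving $J_{a,b}=\langle b,\nabla_{\kappa}\rangle\langle a,x\rangle-\langle a,\nabla_{\kappa}\rangle\langle b,x\rangle$.

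For $[\Delta_{\kappa},J_{a,b}]=0$ I would use the original definition, the Leibniz rule for commutators, the identity $[\Delta_{\kappa},\langle b,x\rangle]=2\langle b,\nabla_{\kappa}\rangle$ from (\ref{Dbxsq}), and $[\Delta_{\kappa},\langle b,\nabla_{\kappa}\rangle]=0$. Expanding $[\Delta_{\kappa},\langle a,x\rangle\langle b,\nabla_{\kappa}\rangle]=2\langle a,\nabla_{\kappa}\rangle\langle b,\nabla_{\kappa}\rangle$ and subtracting the $a\leftrightarrow b$ version reduces the whole commutator to $2[\langle a,\nabla_{\kappa}\rangle,\langle b,\nabla_{\kappa}\rangle]$, which is $0$. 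For $[|x|^{2},J_{a,b}]=0$ it is cleanest to start from the \emph{second} form of $J_{a,b}$ just obtained, apply Leibniz, use $[|x|^{2},\langle a,\nabla_{\kappa}\rangle]=-2\langle a,x\rangle$ from (\ref{Dbxsq}), and note $[|x|^{2},\langle a,x\rangle]=0$; the commutator collapses to $2\bigl(\langle a,x\rangle\langle b,x\rangle-\langle b,x\rangle\langle a,x\rangle\bigr)=0$.

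There is no serious obstacle here: the content is the symmetry of (\ref{xDx}) in $a,b$, which is simultaneously what makes the alternative formula for $J_{a,b}$ valid and what makes both big commutators telescope down to a commutator of Dunkl operators (resp. of multiplication operators), each vanishing for a structural reason. The only thing to watch is the bookkeeping of signs and operator ordering when applying the Leibniz rule.
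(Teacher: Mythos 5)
Your proposal is correct and follows essentially the same route as the paper: the first identity comes from the $a\leftrightarrow b$ symmetry of the right-hand side of (\ref{xDx}), and both commutators telescope via (\ref{Dbxsq}) to $2[\langle a,\nabla_{\kappa}\rangle,\langle b,\nabla_{\kappa}\rangle]=0$ and $2(\langle a,x\rangle\langle b,x\rangle-\langle b,x\rangle\langle a,x\rangle)=0$ respectively, exactly as in the paper's proof.
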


\begin{proof}
From (\ref{xDx}) the commutator $\left[  \left\langle a,x\right\rangle
,\left\langle b,\nabla_{\kappa}\right\rangle \right]  =-\left[  \left\langle
b,x\right\rangle ,\left\langle a,\nabla_{\kappa}\right\rangle \right]  $, and
this proves the first statement. Next (by \ref{Dbxsq})%
\begin{align*}
\left[  J_{a,b},\left\vert x\right\vert ^{2}\right]   &  =2\left\langle
a,x\right\rangle \left\langle b,x\right\rangle -2\left\langle b,x\right\rangle
\left\langle a,x\right\rangle =0,\\
J_{a,b}\Delta_{\kappa} &  =\left\langle a,x\right\rangle \Delta_{\kappa
}\left\langle b,\nabla_{\kappa}\right\rangle -\left\langle b,x\right\rangle
\Delta_{\kappa}\left\langle a,\nabla_{\kappa}\right\rangle \\
&  =\Delta_{\kappa}J_{a,b}+\left(  -2\left\langle a,\nabla_{\kappa
}\right\rangle \left\langle b,\nabla_{\kappa}\right\rangle +2\left\langle
b,\nabla_{\kappa}\right\rangle \left\langle a,\nabla_{\kappa}\right\rangle
\right)  =\Delta_{\kappa}J_{a,b}.
\end{align*}

\end{proof}

This family of angular momentum operators has been analyzed by Feigin and
Hakobyan \cite{FH2015}, with emphasis on the symmetric group and the
Calogero-Moser model.

The standard unit basis vectors are denoted $\varepsilon_{i}=\left(
0\ldots,\overset{i}{1},0\ldots\right)  $ for $1\leq i\leq N$.

\begin{definition}
The angular momentum square is $\mathcal{J}:=\sum\limits_{1\leq i<j\leq
N}J_{\varepsilon_{i},\varepsilon_{j}}^{2}$
\end{definition}

\begin{theorem}
\label{angsq}$\mathcal{J}=\left\vert x\right\vert ^{2}\Delta_{\kappa
}-\left\langle x,\nabla_{\kappa}\right\rangle ^{2}-\left(  N-2\right)
\left\langle x,\nabla_{\kappa}\right\rangle -2\sum\limits_{v\in R_{+}}%
\kappa_{v}\sigma_{v}\left\langle x,\nabla_{\kappa}\right\rangle $.
\end{theorem}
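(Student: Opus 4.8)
The plan is to expand $\mathcal{J}=\sum_{i<j}J_{\varepsilon_{i},\varepsilon_{j}}^{2}$ into second-order operators, push every Dunkl operator to the right of every multiplication operator by means of (\ref{xDx}), and collect the reflection terms that appear. Throughout write $E=\left\langle x,\nabla_{\kappa}\right\rangle $, and note that the special case $a=\varepsilon_{j},\ b=\varepsilon_{i}$ of (\ref{xDx}) is $\left[ \mathcal{D}_{j},x_{i}\right] =\delta_{ij}+2\sum_{v\in R_{+}}\kappa_{v}\dfrac{v_{i}v_{j}}{\left\vert v\right\vert ^{2}}\sigma_{v}$.

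First I would symmetrize the index sum: since $J_{\varepsilon_{i},\varepsilon_{i}}=0$ and $J_{\varepsilon_{j},\varepsilon_{i}}=-J_{\varepsilon_{i},\varepsilon_{j}}$, one has $\mathcal{J}=\tfrac{1}{2}\sum_{i,j=1}^{N}J_{\varepsilon_{i},\varepsilon_{j}}^{2}$. Substituting $J_{\varepsilon_{i},\varepsilon_{j}}=x_{i}\mathcal{D}_{j}-x_{j}\mathcal{D}_{i}$, expanding the square, and relabeling $i\leftrightarrow j$ in two of the four resulting double sums collapses this to
\[
\mathcal{J}=\sum_{i,j}x_{i}\mathcal{D}_{j}x_{i}\mathcal{D}_{j}-\sum_{i,j}x_{i}\mathcal{D}_{j}x_{j}\mathcal{D}_{i}=:A-B.
\]
Next I would evaluate $A$ and $B$ separately. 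For $A$, move the inner $\mathcal{D}_{j}$ past $x_{i}$ and sum: the $\delta_{ij}$ part of the commutator contributes $\sum_{i}x_{i}\mathcal{D}_{i}=E$, the leading part contributes $\left( \sum_{i}x_{i}^{2}\right) \left( \sum_{j}\mathcal{D}_{j}^{2}\right) =\left\vert x\right\vert ^{2}\Delta_{\kappa}$, and the reflection part, via $\sum_{i,j}v_{i}v_{j}x_{i}\sigma_{v}\mathcal{D}_{j}=\left\langle x,v\right\rangle \sigma_{v}\left\langle v,\nabla_{\kappa}\right\rangle $, contributes $S:=2\sum_{v\in R_{+}}\kappa_{v}\left\vert v\right\vert ^{-2}\left\langle x,v\right\rangle \sigma_{v}\left\langle v,\nabla_{\kappa}\right\rangle $; hence $A=\left\vert x\right\vert ^{2}\Delta_{\kappa}+E+S$.

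For $B$, carry out the $j$-sum first, using $\sum_{j}\mathcal{D}_{j}x_{j}=E+N+2\sum_{v\in R_{+}}\kappa_{v}\sigma_{v}$ (here $\sum_{j}v_{j}^{2}=\left\vert v\right\vert ^{2}$). The $N$-term yields $NE$; the $E$-term is handled by commuting $E$ past $\mathcal{D}_{i}$ via $\left[ E,\mathcal{D}_{i}\right] =-\mathcal{D}_{i}-2\sum_{v}\kappa_{v}v_{i}\left\vert v\right\vert ^{-2}\sigma_{v}\left\langle v,\nabla_{\kappa}\right\rangle $ (again from (\ref{xDx})), which gives $\sum_{i}x_{i}E\mathcal{D}_{i}=E^{2}-E-S$; and the reflection term, after the equivariance identity $\sigma_{v}\mathcal{D}_{i}=\bigl( \mathcal{D}_{i}-2v_{i}\left\vert v\right\vert ^{-2}\left\langle v,\nabla_{\kappa}\right\rangle \bigr) \sigma_{v}$ together with $\sigma_{v}\left\langle v,\nabla_{\kappa}\right\rangle =-\left\langle v,\nabla_{\kappa}\right\rangle \sigma_{v}$, works out to $2E\sum_{v}\kappa_{v}\sigma_{v}+2S$. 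Thus $B=E^{2}+(N-1)E+S+2E\sum_{v}\kappa_{v}\sigma_{v}$.

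Finally, subtracting, the operator $S$ cancels and
\[
\mathcal{J}=A-B=\left\vert x\right\vert ^{2}\Delta_{\kappa}-E^{2}-(N-2)E-2E\sum_{v\in R_{+}}\kappa_{v}\sigma_{v};
\]
since $E=\left\langle x,\nabla_{\kappa}\right\rangle $ commutes with each $\sigma_{v}$ (the Dunkl Euler operator is $W$-invariant, being a contraction of equivariant operators), the last summand equals $-2\sum_{v\in R_{+}}\kappa_{v}\sigma_{v}\left\langle x,\nabla_{\kappa}\right\rangle $, which is the claimed formula. I expect the only genuine difficulty to be the bookkeeping of the $\sigma_{v}$-terms: one must verify that the anomalous operator $S$ produced inside both $A$ and $B$ is \emph{the same} operator, so that it cancels in $A-B$, and this rests on the anticommutation of $\sigma_{v}$ with $\left\langle x,v\right\rangle $ and with $\left\langle v,\nabla_{\kappa}\right\rangle $ and on the equivariance of the Dunkl gradient under reflections.
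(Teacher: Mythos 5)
Your proposal is correct and takes essentially the same route as the paper: both arguments normal-order $\sum_{i<j}J_{\varepsilon_{i},\varepsilon_{j}}^{2}$ using the commutation relation (\ref{xDx}) together with the reflection-equivariance of the Dunkl gradient, the only difference being bookkeeping --- you keep the full double sum $A-B$ and track the single anomalous operator $S$ until it cancels, whereas the paper sums over $i\neq j$, completes the sums by adding $\left\langle x,\nabla_{\kappa}\right\rangle ^{2}$, and then matches coefficients of $\mathcal{D}_{i}$ against $\sum_{v\in R_{+}}\kappa_{v}\sigma_{v}\left\langle x,\nabla_{\kappa}\right\rangle $. Every identity you invoke (the commutator $\left[  \mathcal{D}_{j},x_{i}\right]  $, the anticommutation $\sigma_{v}\left\langle v,\nabla_{\kappa}\right\rangle =-\left\langle v,\nabla_{\kappa}\right\rangle \sigma_{v}$, and the $W$-invariance of $\left\langle x,\nabla_{\kappa}\right\rangle $) checks out, so the cancellation of $S$ is genuine.
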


\begin{proof}
Recall the commutation $\mathcal{D}_{i}x_{j}=x_{j}\mathcal{D}_{i}+\delta
_{ij}+2\alpha_{ij}$ with $\alpha_{ij}:=\sum\limits_{v\in R_{=}}\kappa
_{v}\dfrac{v_{i}v_{j}}{\left\vert v\right\vert ^{2}}\sigma_{v}$. First%
\begin{align*}
2\sum\limits_{1\leq i<j\leq N}J_{\varepsilon_{i},\varepsilon_{j}}^{2}  &
=\sum_{i\neq j}\left(  x_{i}\mathcal{D}_{j}x_{i}\mathcal{D}_{j}+x_{j}%
\mathcal{D}_{i}x_{j}\mathcal{D}_{i}-x_{i}\mathcal{D}_{j}x_{j}\mathcal{D}%
_{i}-x_{j}\mathcal{D}_{i}x_{i}\mathcal{D}_{j}\right) \\
&  =\sum_{i\neq j}\left(
\begin{array}
[c]{c}%
x_{i}^{2}\mathcal{D}_{j}^{2}+x_{j}^{2}\mathcal{D}_{i}^{2}+2x_{i}\alpha
_{ij}\mathcal{D}_{j}+2x_{j}\alpha_{ij}\mathcal{D}_{i}-2x_{i}x_{j}%
\mathcal{D}_{i}\mathcal{D}_{j}\\
-x_{i}\mathcal{D}_{i}-x_{j}\mathcal{D}_{j}-2x_{i}\alpha_{jj}\mathcal{D}%
_{i}-2x_{j}\alpha_{ii}\mathcal{D}_{j}%
\end{array}
\right)
\end{align*}%
\begin{align*}
\left\langle x,\nabla_{\kappa}\right\rangle ^{2}  &  =\sum_{i,j=1}^{N}%
x_{i}\mathcal{D}_{i}x_{j}\mathcal{D}_{j}=\sum_{i=1}^{N}\left(  x_{i}%
^{2}\mathcal{D}_{i}^{2}+x_{i}\mathcal{D}_{i}+2x_{i}\alpha_{ii}\mathcal{D}%
_{i}\right) \\
&  +\sum_{i\neq j}\left(  x_{i}x_{j}\mathcal{D}_{i}\mathcal{D}_{j}%
+2x_{i}\alpha_{ij}\mathcal{D}_{j}\right)  .
\end{align*}
Thus%
\begin{gather*}
2\sum\limits_{1\leq i<j\leq N}J_{\varepsilon_{i},\varepsilon_{j}}%
^{2}+2\left\langle x,\nabla_{\kappa}\right\rangle ^{2}=2\left\Vert
x\right\Vert ^{2}\Delta_{\kappa}+\left(  -2\left(  N-1\right)  +2\right)
\left\langle x,\nabla_{\kappa}\right\rangle \\
+2\sum_{i\neq j}\left(  x_{i}\alpha_{ij}\mathcal{D}_{j}+x_{j}\alpha
_{ij}\mathcal{D}_{i}-x_{i}\alpha_{jj}\mathcal{D}_{i}-x_{j}\alpha
_{ii}\mathcal{D}_{j}+2x_{i}\alpha_{ij}\mathcal{D}_{j}\right)  +4\sum_{i=1}%
^{N}x_{i}\alpha_{ii}\mathcal{D}_{i}.
\end{gather*}
Rewrite the second line (with some interchange of indices)
\[
\sum_{i=1}^{N}\left\{  4x_{i}\alpha_{ii}+\sum_{j\neq i}\left(  8x_{j}%
\alpha_{ij}-4x_{i}\alpha_{jj}\right)  \right\}  \mathcal{D}_{i}%
\]
Consider the coefficient of $\mathcal{D}_{i}$ in $\sum\limits_{v\in R_{+}%
}\kappa_{v}\sigma_{v}\left\langle x,\nabla_{\kappa}\right\rangle $ namely
(recall $\left(  \sigma_{v}x\right)  _{i}=x_{i}-2\frac{\left\langle
x,v\right\rangle }{\left\Vert v\right\Vert ^{2}}v_{i}$ and $x_{i}$ is a
multiplication operator)%
\begin{align*}
\sum\limits_{v\in R_{+}}\kappa_{v}\sigma_{v}x_{i}  &  =\sum\limits_{v\in
R_{+}}\kappa_{v}\left\{  x_{i}-\frac{2}{\left\vert v\right\vert ^{2}}%
\sum_{j=1}^{N}x_{j}v_{j}v_{i}\right\}  \sigma_{v}\\
&  =\sum\limits_{v\in R_{+}}\kappa_{v}x_{i}\sigma_{v}-2\sum_{j=1}^{N}%
x_{j}\alpha_{ij}=\sum_{j\neq i}\left(  x_{i}\alpha_{jj}-2x_{j}\alpha
_{ij}\right)  -x_{i}\alpha_{ii}%
\end{align*}
because $\sum\limits_{v\in R_{+}}\kappa_{v}x_{i}\sigma_{v}=\sum_{j=1}^{N}%
x_{i}\alpha_{jj}$ . Thus $2\sum\limits_{1\leq i<j\leq N}J_{\varepsilon
_{i},\varepsilon_{j}}^{2}=-2\left\langle x,\nabla_{\kappa}\right\rangle
^{2}+2\left\vert x\right\vert ^{2}\Delta_{\kappa}-2\left(  N-2\right)
\left\langle x,\nabla_{\kappa}\right\rangle -4\sum\limits_{v\in R_{+}}%
\kappa_{v}\sigma_{v}\left\langle x,\nabla_{\kappa}\right\rangle $.
\end{proof}

\begin{corollary}
$\mathcal{J}$ commutes with the $W\left(  R\right)  $-action.
\end{corollary}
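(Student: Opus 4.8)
The plan is to deduce the result directly from the closed form for $\mathcal{J}$ given in Theorem \ref{angsq}, by checking that every summand there is unchanged under conjugation by an arbitrary $w\in W(R)$. Since $w$ acts invertibly on $\mathcal{P}$ via $wp(x)=p(xw)$, ``commutes with the $W(R)$-action'' means $wTw^{-1}=T$ for all $w$, so it suffices to verify this for $T$ equal to each of $|x|^2\Delta_\kappa$, $\langle x,\nabla_\kappa\rangle^2$, $(N-2)\langle x,\nabla_\kappa\rangle$, and $\sum_{v\in R_+}\kappa_v\sigma_v\langle x,\nabla_\kappa\rangle$.

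First I would record the basic equivariance facts. The function $|x|^2$ is $W(R)$-invariant, so as a multiplication operator it commutes with each $w$. The Dunkl gradient is equivariant, $w\langle a,\nabla_\kappa\rangle w^{-1}=\langle aw^{-1},\nabla_\kappa\rangle$ (the vector transformation law for $\mathcal{D}_1,\dots,\mathcal{D}_N$, dual to $x\mapsto xw$); hence $\Delta_\kappa=\sum_i\mathcal{D}_i^2$ is $W$-invariant, and so is $\langle x,\nabla_\kappa\rangle=\sum_i x_i\mathcal{D}_i$, since there $x$ transforms by $w$ while $\nabla_\kappa$ transforms by $w^{-1}$ and the contraction is unchanged. (Equivalently, substituting $\langle x,v\rangle/\langle x,v\rangle=1$ in the definition of $\mathcal{D}_i$ gives $\langle x,\nabla_\kappa\rangle=\delta+\gamma_\kappa-\sum_{v\in R_+}\kappa_v\sigma_v$ with $\delta$ the homogeneity operator, manifestly $W$-invariant since $w$ preserves each $\mathcal{P}_n$.) Finally $\sum_{v\in R_+}\kappa_v\sigma_v$ is $W$-invariant: conjugation by $w$ permutes the reflections $\{\sigma_v:v\in R_+\}$ (because $w$ permutes $R$ and $\sigma_v=\sigma_{-v}$), and $\kappa$ is preserved because it is constant on conjugacy classes.

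Combining these: each factor appearing in a summand of Theorem \ref{angsq} commutes with every $w$ (in particular $\langle x,\nabla_\kappa\rangle$, being $W$-invariant, commutes with each $\sigma_v$ and hence with $\sum_{v\in R_+}\kappa_v\sigma_v$), so each summand does, and therefore $\mathcal{J}$ does. The only real obstacle is bookkeeping---getting the handedness of the equivariance law for $\nabla_\kappa$ right for the paper's right action $xw$, and confirming that conjugation sends $R_+$-indexed reflections to $R_+$-indexed reflections without sign ambiguities, both handled by $\sigma_v=\sigma_{-v}$ and $\kappa_{-v}=\kappa_v$. A cleaner alternative that avoids even this is to argue from $\mathcal{J}=\sum_{1\le i<j\le N}J_{\varepsilon_i,\varepsilon_j}^2=\tfrac12\sum_{i,j}J_{\varepsilon_i,\varepsilon_j}^2$ directly: $J_{a,b}$ is bilinear and alternating in $(a,b)$, so this double contraction is independent of the choice of orthonormal basis, and applying an isometry $w$ merely replaces $\{\varepsilon_i\}$ by another orthonormal basis; hence $w\mathcal{J}w^{-1}=\mathcal{J}$.
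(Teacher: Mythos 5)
Your proof is correct and follows essentially the same route as the paper: the paper also verifies the claim term by term from the formula in Theorem \ref{angsq}, using precisely the identity $\langle x,\nabla_\kappa\rangle=\delta+\sum_{v\in R_+}\kappa_v(1-\sigma_v)$ and the fact that $\sum_{v\in R_+}\kappa_v\sigma_v$ is central in the group algebra as a weighted sum of conjugacy classes. Your closing basis-independence argument for $\mathcal{J}=\tfrac12\sum_{i,j}J_{\varepsilon_i,\varepsilon_j}^2$ is a valid and cleaner alternative, but the main argument coincides with the paper's.
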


\begin{proof}
$\left\langle x,\nabla_{\kappa}\right\rangle =\delta+\sum\limits_{v\in R_{+}%
}\kappa_{v}\left(  1-\sigma_{v}\right)  $, and $\sum\limits_{v\in R_{+}}%
\kappa_{v}\sigma_{v}$ is in the center of the group algebra because it is a
weighted sum of conjugacy classes.
\end{proof}

\subsection{\label{DHO}The Dunkl harmonic oscillator}

This is the modified Schr\"{o}dinger equation (with parameter $\omega>0$ and
energy $E$)%
\[
\mathcal{H}\psi:=\left(  \omega^{2}\left\vert x\right\vert ^{2}-\Delta
_{\kappa}\right)  \psi=E\psi.
\]
The exponential ground state is $g\left(  x\right)  :=\exp\left(
-\frac{\omega}{2}\left\vert x\right\vert ^{2}\right)  $, as can be seen from
the transformation
\begin{equation}
g^{-1}\left(  \omega^{2}\left\vert x\right\vert ^{2}-\Delta_{\kappa}\right)
\left(  fg\right)  =-\Delta_{\kappa}f+\omega\left(  N+2\gamma_{\kappa}%
+2\delta\right)  f.\label{polHam}%
\end{equation}
which implies $\left(  \omega^{2}\left\vert x\right\vert ^{2}-\Delta_{\kappa
}\right)  g=\omega\left(  N+2\gamma_{\kappa}\right)  g$. Thus the
wavefunctions $\psi$ are of the form $p\left(  x\right)  g\left(  x\right)  $.

In order to transform $\mathcal{H}$ to the usual Schr\"{o}dinger equation with
reflections we introduce the weight function $h_{\kappa}\left(  x\right)
:=\prod\limits_{v\in R_{+}}\left\vert \left\langle x,v\right\rangle
\right\vert ^{\kappa_{v}}$ (a $W\left(  R\right)  $-invariant function,
positively homogeneous of degree $\gamma_{\kappa}$) and find
\begin{equation}
h_{\kappa}\left(  -\Delta_{\kappa}+\omega^{2}\left\vert x\right\vert
^{2}\right)  h_{\kappa}^{-1}=-\Delta+\omega^{2}\left\vert x\right\vert
^{2}+\sum_{v\in R_{+}}\frac{\kappa_{v}\left(  \kappa_{v}-\sigma_{v}\right)
\left\vert v\right\vert ^{2}}{\left\langle x,v\right\rangle ^{2}%
},\label{SchrV}%
\end{equation}
a Schr\"{o}dinger equation with the potential%
\[
V(x)=\omega^{2}\left\vert x\right\vert ^{2}+\sum_{v\in R_{+}}\frac{\kappa
_{v}\left(  \kappa_{v}-\sigma_{v}\right)  \left\vert v\right\vert ^{2}%
}{\left\langle x,v\right\rangle ^{2}},
\]
which includes reflections (for the proof see \cite[app.]{D2022}). The ground
state is $h_{\kappa}g$. For invariant functions the second term becomes
$\sum\limits_{v\in R_{+}}\dfrac{\kappa_{v}\left(  \kappa_{v}-1\right)
\left\vert v\right\vert ^{2}}{\left\langle x,v\right\rangle ^{2}}$. For the
special case where $R$ is the root system of type $A_{N-1}$ and $W\left(
R\right)  =\mathcal{S}_{N}$ (the symmetric group) this potential occurs in the
Calogero-Moser model of $N$ identical particles on a line with $r^{-2}$
interaction potential and harmonic confinement (see Lapointe and Vinet
\cite{LV1996}, also \cite[Sect. 11.6.3]{DX2014}).

Denote $\widetilde{\mathcal{H}}:=-\Delta_{\kappa}+\omega\left(  N+2\gamma
_{\kappa}+2\delta\right)  ,$ an operator on $\mathcal{P}$. In the ordinary
one-variable case the wavefunctions are $H_{n}\left(  \omega^{1/2}s\right)
\exp\left(  -\frac{\omega}{2}s^{2}\right)  $ with eigenvalue $E=\omega\left(
2n+1\right)  $ (Schr\"{o}dinger \cite[pp. 514-516]{S1926}, $H_{n}$ is the
Hermite polynomial of degree $n$).

\begin{definition}
Let $\mathcal{P}_{\leq n}$ denote the space of polynomials of degree $\leq n$
(that is, $\mathcal{P}_{\leq n}=\mathcal{P}_{n}+\mathcal{P}_{n-1}%
+\ldots+\mathcal{P}_{0}$) and let $\pi_{k}$ denote the projection on
$\mathcal{P}_{k}$ for $k\geq0$.
\end{definition}

\begin{proposition}
\label{expwp}Suppose $pg\in$ is a wavefunction and $p\in\mathcal{P}_{\leq n}$,
$\pi_{n}p\neq0$, then $\mathcal{H}\left(  pg\right)  =\omega\left(
N+2\gamma_{\kappa}+n\right)  pg$, $\pi_{n-2i+1}p=0$ for $i\leq\frac{n+1}{2}$,
$\pi_{n-2i}p=\dfrac{1}{i!}\left(  -\dfrac{\Delta_{\kappa}}{4\omega}\right)
^{i}\pi_{n}p$ for $i\leq\frac{n}{2}$ and $p=\exp\left(  -\dfrac{\Delta
_{\kappa}}{4\omega}\right)  \pi_{n}p$.
\end{proposition}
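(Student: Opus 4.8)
The plan is to pass to the polynomial picture via (\ref{polHam}): since $g=\exp(-\tfrac{\omega}{2}|x|^{2})$ never vanishes, $pg$ is a wavefunction with $\mathcal{H}(pg)=E\,pg$ exactly when $\widetilde{\mathcal{H}}p=Ep$, where $\widetilde{\mathcal{H}}=-\Delta_{\kappa}+\omega(N+2\gamma_{\kappa}+2\delta)$. I would then exploit the grading: $\delta$ acts on $\mathcal{P}_{k}$ as multiplication by $k$, so $\omega(N+2\gamma_{\kappa}+2\delta)$ is degree-preserving and acts as the scalar $\omega(N+2\gamma_{\kappa}+2k)$ on $\mathcal{P}_{k}$, whereas $\Delta_{\kappa}$ lowers the degree by exactly two. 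Hence $\widetilde{\mathcal{H}}$ maps $\mathcal{P}_{\le n}$ into itself, and writing $p=\sum_{k=0}^{n}\pi_{k}p$ with $\pi_{n}p\neq 0$ I would compare homogeneous components in $\widetilde{\mathcal{H}}p=Ep$ working down from the top.

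First, the degree-$n$ component of $\widetilde{\mathcal{H}}p$ equals $\omega(N+2\gamma_{\kappa}+2n)\pi_{n}p$, since the $\Delta_{\kappa}$-term lowers the degree; matching this with the degree-$n$ component $E\pi_{n}p$ of $Ep$ and cancelling $\pi_{n}p\neq 0$ fixes $E$, giving the claimed eigenvalue. Next, for $1\le j\le n$ the degree-$(n-j)$ component of $\widetilde{\mathcal{H}}p=Ep$ reads
\[
\omega(N+2\gamma_{\kappa}+2n-2j)\,\pi_{n-j}p-\Delta_{\kappa}\pi_{n-j+2}p=E\,\pi_{n-j}p ,
\]
and substituting the value of $E$ and simplifying leaves the one-step recursion $\pi_{n-j}p=-\tfrac{1}{2\omega j}\Delta_{\kappa}\pi_{n-j+2}p$, with the convention $\pi_{n+1}p=0$.

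The rest is induction. Taking $j$ through the odd values, $\pi_{n-1}p=-\tfrac{1}{2\omega}\Delta_{\kappa}\pi_{n+1}p=0$, and then $\pi_{n-3}p=\pi_{n-5}p=\cdots=0$, that is, $\pi_{n-2i+1}p=0$ for $i\le\tfrac{n+1}{2}$. Taking $j=2i$ even, the recursion becomes $\pi_{n-2i}p=-\tfrac{1}{4\omega i}\Delta_{\kappa}\pi_{n-2(i-1)}p$, and an induction on $i$ starting from $\pi_{n}p$ gives $\pi_{n-2i}p=\tfrac{1}{i!}\bigl(-\tfrac{\Delta_{\kappa}}{4\omega}\bigr)^{i}\pi_{n}p$ for $i\le\tfrac{n}{2}$. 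Summing the surviving components then produces $p=\sum_{i\ge 0}\tfrac{1}{i!}\bigl(-\tfrac{\Delta_{\kappa}}{4\omega}\bigr)^{i}\pi_{n}p=\exp\!\bigl(-\tfrac{\Delta_{\kappa}}{4\omega}\bigr)\pi_{n}p$, the sum being finite because $\Delta_{\kappa}^{i}\pi_{n}p=0$ once $2i>n$.

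I do not expect a real obstacle here: the equations obtained by reading off homogeneous components form a triangular system whose diagonal scalars $2\omega j$ (for $j\ge 1$) are nonzero, so the wavefunction condition forces $p$ into the stated form. The only points needing care are the sign in the recursion and keeping the odd-index components (which vanish) separate from the even-index ones (which are the powers of $\Delta_{\kappa}$ applied to $\pi_{n}p$); both are settled by the single inductive step above.
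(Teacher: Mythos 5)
Your proposal is correct and follows essentially the same route as the paper: pass to $\widetilde{\mathcal{H}}p=Ep$, expand in homogeneous components, read off $E$ from the top component, and solve the resulting triangular recursion $\pi_{n-j}p=-\tfrac{1}{2\omega j}\Delta_{\kappa}\pi_{n-j+2}p$ by induction, separating odd from even offsets. The only discrepancy is with the statement as printed ($\omega(N+2\gamma_{\kappa}+n)$), but your value $E=\omega(N+2\gamma_{\kappa}+2n)$ agrees with the paper's own proof and with its definition of $E_{n}$, so that is a typographical issue in the proposition rather than a flaw in your argument.
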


\begin{proof}
Expand the equation $-\Delta_{\kappa}p=\omega\left(  N+2\gamma_{\kappa
}+2\delta\right)  p-Ep$ in homogeneous components%
\[
-\Delta_{\kappa}\pi_{n-i+2}p=\left\{  \omega\left(  N+2\gamma_{\kappa
}+2n-2i\right)  -E\right\}  \pi_{n-i}p,i=0,1,\ldots,n.
\]
The equation at ~$i=1$ implies $E=\omega\left(  N+2\gamma_{\kappa}+2n\right)
$, and thus
\[
-\Delta_{\kappa}\pi_{n-i+2}p=-2i\omega\pi_{n-i}p,i=0,1,\ldots,n.
\]
Since $\pi_{k}p=0$ for $k>n$ it follows that $\pi_{n-1}p=0$ and by induction
that $\pi_{n-1-2i}p=0$ for $i\leq\frac{n-1}{2}$. Use the relation
$-\Delta_{\kappa}\pi_{n-2i+2}p=-4i\omega\pi_{n-2i}p$ inductively to show
$\pi_{n-2i}p=\frac{1}{i!}\left(  -\dfrac{\Delta_{\kappa}}{4\omega}\right)
^{i}\pi_{n}p$ and $\sum_{j=0}^{\left\lfloor n/2\right\rfloor }\dfrac{1}%
{i!}\left(  -\dfrac{\Delta_{\kappa}}{4\omega}\right)  ^{i}\pi_{n}p=\sum
_{j=0}^{\left\lfloor n/2\right\rfloor }\pi_{n-2i}p=p$. The series $\exp\left(
-\dfrac{\Delta_{\kappa}}{4\omega}\right)  \pi_{n}p$ terminates.
\end{proof}

\begin{corollary}
\label{exp2wav}With the same hypotheses $\exp\left(  \dfrac{\Delta_{\kappa}%
}{4\omega}\right)  p=\pi_{n}p$.
\end{corollary}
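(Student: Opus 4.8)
The plan is to read this off directly from the last assertion of Proposition \ref{expwp}, namely $p=\exp\!\left(-\frac{\Delta_{\kappa}}{4\omega}\right)\pi_{n}p$. First I would record the structural fact that $\Delta_{\kappa}$ maps $\mathcal{P}_{\leq m}$ into $\mathcal{P}_{\leq m-2}$, so $\Delta_{\kappa}$ acts locally nilpotently on $\mathcal{P}$; consequently, for any scalar $c$ the operator $\exp(c\Delta_{\kappa})$ is well defined on $\mathcal{P}$ (the series terminates when applied to any fixed polynomial), and the formal power series identity $e^{cX}e^{-cX}=1$ specializes, upon substituting the locally nilpotent operator $\Delta_{\kappa}$ for $X$, to $\exp(c\Delta_{\kappa})\exp(-c\Delta_{\kappa})=I$ as operators on $\mathcal{P}$. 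Applying $\exp\!\left(\frac{\Delta_{\kappa}}{4\omega}\right)$ to both sides of $p=\exp\!\left(-\frac{\Delta_{\kappa}}{4\omega}\right)\pi_{n}p$ then yields $\exp\!\left(\frac{\Delta_{\kappa}}{4\omega}\right)p=\pi_{n}p$ at once.

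Alternatively, and just as quickly, one can verify the identity by a direct expansion: write $p=\sum_{i\geq0}\pi_{n-2i}p$ and substitute $\pi_{n-2i}p=\frac{1}{i!}\left(-\frac{\Delta_{\kappa}}{4\omega}\right)^{i}\pi_{n}p$ from Proposition \ref{expwp}, obtaining $\exp\!\left(\frac{\Delta_{\kappa}}{4\omega}\right)p=\sum_{j,i\geq0}\frac{(-1)^{i}}{j!\,i!}\left(\frac{\Delta_{\kappa}}{4\omega}\right)^{i+j}\pi_{n}p$. Collecting the terms with $i+j=k$ produces the scalar factor $\frac{1}{k!}\sum_{i=0}^{k}\binom{k}{i}(-1)^{i}=\frac{1}{k!}(1-1)^{k}$, which vanishes for $k\geq1$ and equals $1$ for $k=0$, leaving $\pi_{n}p$.

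There is essentially no real obstacle here: the single point deserving a word of justification is that $\exp\!\left(\pm\frac{\Delta_{\kappa}}{4\omega}\right)$ are genuine mutually inverse operators on $\mathcal{P}$, which is exactly the local-nilpotency remark above — the same fact that made the series in Proposition \ref{expwp} terminate. Note that the hypothesis $\pi_{n}p\neq0$ plays no role in the computation; it only fixes $n$ as the exact top degree of $p$ so that the statement is unambiguous.
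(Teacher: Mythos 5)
Your argument is correct and is exactly the intended one: the paper states this corollary without proof, as an immediate consequence of Proposition \ref{expwp} together with the observation that $\exp\left(\pm\frac{\Delta_{\kappa}}{4\omega}\right)$ are mutually inverse on $\mathcal{P}$ by local nilpotency of $\Delta_{\kappa}$. Your second, direct-expansion verification is a fine alternative but adds nothing beyond the inversion argument.
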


For $n=0,1,2,\ldots$ set $E_{n}:=\omega\left(  N+2\gamma_{\kappa}+2n\right)
$. It follows that $\mathcal{H}\left(  pg\right)  =E_{n}pg$ implies
$p\in\mathcal{P}_{\leq n}$ and $\pi_{n}p\neq0$. Also $\phi\in\mathcal{P}%
_{n,\kappa}$ implies $\mathcal{H}\left(  \phi g\right)  =E_{n}\phi g$.

The Laguerre polynomial of degree $m$ and index $\alpha>-1$ satisfies%
\begin{align}
L_{n}^{\left(  \alpha\right)  }\left(  s\right)   &  =\frac{\left(
\alpha+1\right)  _{n}}{n!}\sum_{j=0}^{n}\frac{\left(  -n\right)  _{j}}{\left(
\alpha+1\right)  _{j}}\frac{s^{j}}{j!}\label{Lagdef}\\
\int_{0}^{\infty}L_{m}^{\left(  \alpha\right)  }\left(  s\right)
L_{k}^{\left(  \alpha\right)  }\left(  s\right)  s^{\alpha}e^{-s}\mathrm{d}s
&  =\delta_{mk}\frac{\Gamma\left(  \alpha+1+m\right)  }{m!}=\delta_{mk}%
\Gamma\left(  \alpha+1\right)  \frac{\left(  \alpha+1\right)  _{m}}{m!}.
\label{orthoLag}%
\end{align}

\begin{proposition}
\label{hmXLag}Suppose $p\in\mathcal{P}_{n,\kappa}$ and $m,n=1,2,3,\ldots$then
$\psi_{n,m}\left(  x\right)  :=L_{m}^{\left(  \alpha\right)  }\left(
\omega\left\vert x\right\vert ^{2}\right)  p\left(  x\right)  g\left(
x\right)  $ is a wavefunction with energy $E_{n+2m},$where $\alpha=\frac{N}%
{2}+\gamma_{\kappa}+n-1$.
\end{proposition}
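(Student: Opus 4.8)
The plan is to strip off the Gaussian and reduce everything to the classical Laguerre differential equation. By the conjugation formula (\ref{polHam}), $\mathcal{H}(fg)=E_{n+2m}\,fg$ holds if and only if $\widetilde{\mathcal{H}}f=E_{n+2m}f$, where $f(x):=L_{m}^{(\alpha)}(\omega|x|^{2})p(x)$ and $\widetilde{\mathcal{H}}=-\Delta_{\kappa}+\omega(N+2\gamma_{\kappa}+2\delta)$. So the whole statement becomes an identity for the single polynomial $f$, and I would work inside $\mathcal{P}$ from here on.

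The next step is to record how $\Delta_{\kappa}$ and $\delta$ act on functions of the shape $F(|x|^{2})\phi(x)$ with $\phi\in\mathcal{P}_{n,\kappa}$ and $F$ a one-variable polynomial. Writing $F(t)=\sum_{k}b_{k}t^{k}$ and applying the first identity of Proposition \ref{Lap^m} to each power $|x|^{2k}\phi$, after splitting the coefficient $2k(N+2\gamma_{\kappa}+2n+2k-2)$ as $4k(k-1)+2(N+2\gamma_{\kappa}+2n)k$, one obtains
\[
\Delta_{\kappa}\bigl(F(|x|^{2})\phi\bigr)=\Bigl(4|x|^{2}F''(|x|^{2})+4(\tfrac{N}{2}+\gamma_{\kappa}+n)F'(|x|^{2})\Bigr)\phi .
\]
Likewise $|x|^{2k}\phi\in\mathcal{P}_{n+2k}$, so Euler's homogeneity relation gives $\delta\bigl(F(|x|^{2})\phi\bigr)=\bigl(nF(|x|^{2})+2|x|^{2}F'(|x|^{2})\bigr)\phi$. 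The point to flag is that the constant $\tfrac{N}{2}+\gamma_{\kappa}+n$ occurring above is exactly $\alpha+1$ for the value $\alpha=\tfrac{N}{2}+\gamma_{\kappa}+n-1$ in the statement; this is precisely what matches the Laguerre index.

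Now I would substitute $F(t)=L_{m}^{(\alpha)}(\omega t)$. Writing $s:=\omega|x|^{2}$ and $L:=L_{m}^{(\alpha)}$, the formulas above give $\Delta_{\kappa}f=4\omega\bigl(sL''(s)+(\alpha+1)L'(s)\bigr)p$ and $2\omega\delta f=\bigl(2\omega nL(s)+4\omega sL'(s)\bigr)p$. Invoking the Laguerre equation $sL''(s)+(\alpha+1-s)L'(s)+mL(s)=0$ to rewrite $sL''+(\alpha+1)L'$ as $sL'-mL$, the two terms proportional to $|x|^{2}L'$ (one from $-\Delta_{\kappa}f$, one from $2\omega\delta f$) cancel, leaving $\widetilde{\mathcal{H}}f=\omega(4m+N+2\gamma_{\kappa}+2n)L(s)p=\omega\bigl(N+2\gamma_{\kappa}+2(n+2m)\bigr)f=E_{n+2m}f$, which is the assertion.

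None of this is genuinely hard: the content is the pair of action formulas in the second paragraph — repeated use of Proposition \ref{Lap^m} together with homogeneity — and the recognition that the classical Laguerre ODE, with precisely the index $\alpha=\tfrac{N}{2}+\gamma_{\kappa}+n-1$, collapses the three resulting terms into a single eigenvalue; the rest is bookkeeping. An alternative but less self-contained route would be to expand $\exp(-\Delta_{\kappa}/(4\omega))$ by the same identity (\ref{delxsq1}), which shows that $L_{m}^{(\alpha)}(\omega|x|^{2})p(x)$ is the scalar multiple $(-1)^{m}m!\,\omega^{-m}$ of $\exp(-\Delta_{\kappa}/(4\omega))(|x|^{2m}p(x))$, and then read off the eigenvalue from Proposition \ref{expwp}; I would prefer the direct computation, since it does not depend on running that proposition in reverse.
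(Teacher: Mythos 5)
Your argument is correct, and it takes a genuinely different route from the paper. The paper never touches the Laguerre differential equation: it starts from the homogeneous polynomial $\frac{(-1)^m}{m!}\omega^m\left\vert x\right\vert^{2m}p$, applies $\exp\left(-\Delta_{\kappa}/(4\omega)\right)$ term by term using the product formula $\prod_i C_{n,i}$ from Proposition \ref{Lap^m}, and recognizes the resulting sum as the series (\ref{Lagdef}) defining $L_m^{(\alpha)}$; the wavefunction property is then read off from Proposition \ref{expwp} (run in the converse direction, as you correctly flag in your last paragraph). You instead conjugate by $g$ via (\ref{polHam}) and verify the polynomial eigenvalue equation $\widetilde{\mathcal{H}}f=E_{n+2m}f$ directly, using the radial action formulas $\Delta_{\kappa}\bigl(F(\left\vert x\right\vert^2)\phi\bigr)=\bigl(4\left\vert x\right\vert^2F''+4(\alpha+1)F'\bigr)\phi$ and $\delta\bigl(F(\left\vert x\right\vert^2)\phi\bigr)=\bigl(nF+2\left\vert x\right\vert^2F'\bigr)\phi$ together with the classical ODE $sL''+(\alpha+1-s)L'+mL=0$; I checked the splitting $2k(N+2\gamma_{\kappa}+2n+2k-2)=4k(k-1)+2k(N+2\gamma_{\kappa}+2n)$ and the final cancellation, and both are right. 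Your version is more self-contained as a proof of the eigenvalue claim, since it does not need the converse of Proposition \ref{expwp}; its only external input is the Laguerre ODE, which the paper does not state but which follows in one line from (\ref{Lagdef}). What the paper's route buys, and yours does not directly provide, is the identity $\exp\left(-\Delta_{\kappa}/(4\omega)\right)\left\vert x\right\vert^{2m}p=\frac{(-1)^m m!}{\omega^m}L_m^{(\alpha)}(\omega\left\vert x\right\vert^2)p$, which is reused verbatim in the proofs of Corollary \ref{delLagh} and Proposition \ref{wav2laghm}; with your approach one would recover it afterwards by combining the established wavefunction property with Proposition \ref{expwp} applied in its stated (forward) direction.
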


\begin{proof}
By $($\ref{Lagdef}) $\pi_{n+2m}f_{n,m}=\frac{\left(  -1\right)  ^{m}}%
{m!}\omega^{m}\left\vert x\right\vert ^{2m}p\left(  x\right)  $ and%
\begin{align*}
\exp\left(  -\frac{\Delta_{\kappa}}{4\omega}\right)  \pi_{n+2m}f_{n,m}  &
=\frac{1}{m!}\sum_{i=0}^{m}\frac{\left(  -1\right)  ^{m-i}}{4^{i}i!}%
\omega^{m-i}\Delta_{\kappa}^{i}\left\vert x\right\vert ^{2m}p\left(  x\right)
\\
&  =\frac{1}{m!}\sum_{i=0}^{m}\frac{\left(  -1\right)  ^{m-i}}{4^{i}i!}%
\omega^{m-i}\prod\limits_{j=0}^{i-1}C_{n,m-j+1}~\left\vert x\right\vert
^{2m-2i}p\left(  x\right)
\end{align*}
(recall $C_{n,k}=4k\left(  \frac{N}{2}+\gamma_{\kappa}+n+k-1\right)  $ and
$C_{n,0}=1$ from Proposition \ref{Lap^m}, and $\prod\limits_{j=0}%
^{i-1}C_{n,m-i+1}=2^{2i}\frac{m!}{\left(  m-i\right)  !}\left(  \frac{N}%
{2}+\gamma_{\kappa}+n+m-i\right)  _{i}$ so the multiplier of $p$ is
\begin{align*}
&  \sum_{i=0}^{m}\dfrac{\left(  -1\right)  ^{m-i}}{i!\left(  m-i\right)
!}\left(  \frac{N}{2}+\gamma_{\kappa}+n-i\right)  _{i}\left(  \omega\left\vert
x\right\vert ^{2}\right)  ^{m-i}\\
&  =\frac{1}{m!}\sum_{j=0}^{m}\frac{\left(  -m\right)  _{j}\left(  \frac{N}%
{2}+\gamma_{\kappa}+n\right)  _{m}}{j!\left(  \frac{N}{2}+\gamma_{\kappa
}+n\right)  _{j}}\left(  \omega\left\vert x\right\vert ^{2}\right)  ^{j}%
=L_{m}^{\left(  N/2+\gamma_{\kappa}+n-1\right)  }\left(  \omega\left\vert
x\right\vert ^{2}\right)  ,
\end{align*}
changing the index of summation $i=m-j$.
\end{proof}

\begin{corollary}
\label{delLagh}With the above hypotheses%
\[
\Delta_{\kappa}L_{m}^{\left(  \alpha\right)  }\left(  \omega\left\vert
x\right\vert ^{2}\right)  p\left(  x\right)  =-4\omega\left(  \alpha+m\right)
L_{m-1}^{\left(  \alpha\right)  }\left(  \omega\left\vert x\right\vert
^{2}\right)  p\left(  x\right)  .
\]

\end{corollary}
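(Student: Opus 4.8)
The plan is to exploit the generating-function representation established in Proposition \ref{hmXLag}: since $p\in\mathcal{P}_{n,\kappa}$ and, by (\ref{Lagdef}), the top homogeneous part of $f_{n,m}$ is $\pi_{n+2m}f_{n,m}=\tfrac{(-1)^{m}}{m!}\omega^{m}\left\vert x\right\vert^{2m}p$, that proposition says
\[
L_{m}^{(\alpha)}(\omega\left\vert x\right\vert^{2})p(x)=\exp\!\left(-\frac{\Delta_{\kappa}}{4\omega}\right)\frac{(-1)^{m}\omega^{m}}{m!}\left\vert x\right\vert^{2m}p(x).
\]
Because $\left\vert x\right\vert^{2m}p\in\mathcal{P}_{n+2m}$ we have $\Delta_{\kappa}^{k}\left\vert x\right\vert^{2m}p=0$ for $k>m$, so the exponential is a finite sum; in particular $\Delta_{\kappa}$ commutes with it and can be pushed through.

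Carrying this out: apply $\Delta_{\kappa}$ to the displayed identity and use (\ref{delxsq1}) with $k=m$. Since $N+2\gamma_{\kappa}+2n+2m-2=2(\alpha+m)$ by the definition of $\alpha$, we get $\Delta_{\kappa}\left\vert x\right\vert^{2m}p=4m(\alpha+m)\left\vert x\right\vert^{2m-2}p$, hence
\[
\Delta_{\kappa}L_{m}^{(\alpha)}(\omega\left\vert x\right\vert^{2})p=\frac{(-1)^{m}\omega^{m}}{m!}4m(\alpha+m)\exp\!\left(-\frac{\Delta_{\kappa}}{4\omega}\right)\left\vert x\right\vert^{2m-2}p=-4\omega(\alpha+m)\exp\!\left(-\frac{\Delta_{\kappa}}{4\omega}\right)\frac{(-1)^{m-1}\omega^{m-1}}{(m-1)!}\left\vert x\right\vert^{2m-2}p,
\]
using $\tfrac{m}{m!}=\tfrac{1}{(m-1)!}$ and $(-1)^{m}\omega^{m}=-\omega(-1)^{m-1}\omega^{m-1}$. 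Now apply Proposition \ref{hmXLag} again, this time with $m-1$ in place of $m$ (the index $\alpha$ and the harmonic $p$ are unchanged): the remaining factor is exactly $L_{m-1}^{(\alpha)}(\omega\left\vert x\right\vert^{2})p$, which yields the claimed formula. (For $m=1$ this degenerates to $\Delta_{\kappa}[(\alpha+1-\omega\left\vert x\right\vert^{2})p]=-\omega\Delta_{\kappa}\left\vert x\right\vert^{2}p=-4\omega(\alpha+1)p$, consistent with $L_{0}^{(\alpha)}=1$.)

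There is no genuine obstacle here; the only thing to watch is the bookkeeping of the shift $m\mapsto m-1$ and the attendant sign and factorial arithmetic. If one prefers a proof not routed through Proposition \ref{hmXLag}, the same result follows by first deriving, term by term from (\ref{delxsq1}), the identity $\Delta_{\kappa}\bigl(F(\omega\left\vert x\right\vert^{2})p(x)\bigr)=4\omega\bigl[sF''(s)+(\alpha+1)F'(s)\bigr]_{s=\omega\left\vert x\right\vert^{2}}p(x)$ for any polynomial $F$ (again using $N+2\gamma_{\kappa}+2n+2k-2=2(\alpha+k)$), and then simplifying the bracket for $F=L_{m}^{(\alpha)}$ via the Laguerre differential equation $sF''+(\alpha+1-s)F'+mF=0$ together with the standard identity $sL_{m}^{(\alpha)\prime}(s)=mL_{m}^{(\alpha)}(s)-(m+\alpha)L_{m-1}^{(\alpha)}(s)$, which collapses it to $-(m+\alpha)L_{m-1}^{(\alpha)}(s)$.
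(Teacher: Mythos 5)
Your main argument is exactly the paper's proof: write $L_{m}^{(\alpha)}(\omega\vert x\vert^{2})p$ as $\exp\left(-\frac{\Delta_{\kappa}}{4\omega}\right)$ applied to its top-degree part $\frac{(-1)^{m}}{m!}\omega^{m}\vert x\vert^{2m}p$, push $\Delta_{\kappa}$ through the exponential, and apply (\ref{delxsq1}) with $k=m$ to land on the top-degree part of $-4\omega(\alpha+m)L_{m-1}^{(\alpha)}(\omega\vert x\vert^{2})p$. The bookkeeping is correct, so this is the same approach and there is nothing to add.
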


\begin{proof}
Apply formula (\ref{delxsq1}) to $\pi_{n+2m}f_{n,m}$ which results in
$-4\omega\left(  \alpha+m\right)  \pi_{n+2m-2}f_{n,m-1}$; and $\Delta_{\kappa
}$ commutes with $\exp\left(  -\dfrac{\Delta_{\kappa}}{4\omega}\right)  $.
\end{proof}

\begin{proposition}
\label{wav2laghm}Suppose $pg$ is a wavefunction with $\widetilde{\mathcal{H}%
}p=E_{n}p$, $\pi_{n}p\neq0$ then%
\[
p\left(  x\right)  =\sum_{j=0}^{\left\lfloor n/2\right\rfloor }\dfrac{\left(
-1\right)  ^{j}}{\left(  4\omega\right)  ^{j}\left(  N/2+\gamma_{k}%
+n-2j\right)  _{j}}L_{j}^{\left(  N/2+\gamma_{\kappa}+n-2j-1\right)  }\left(
\omega\left\vert x\right\vert ^{2}\right)  \Lambda_{n-2j}\Delta_{\kappa}%
^{j}\pi_{n}p\left(  x\right)  .
\]

\end{proposition}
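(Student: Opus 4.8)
The plan is to reduce the formula to the $\kappa$-harmonic decomposition of the top-degree part, combined with the single-block identity that already underlies Proposition~\ref{hmXLag}. Write $q:=\pi_{n}p\in\mathcal{P}_{n}$. Since $g^{-1}\mathcal{H}(pg)=\widetilde{\mathcal{H}}p$ by (\ref{polHam}), the hypothesis $\widetilde{\mathcal{H}}p=E_{n}p$ says exactly that $pg$ is a wavefunction of energy $E_{n}$, so Proposition~\ref{expwp} and Corollary~\ref{exp2wav} apply; in particular $p=\exp\!\left(-\dfrac{\Delta_{\kappa}}{4\omega}\right)q$. Thus it suffices to evaluate the terminating operator $\exp(-\Delta_{\kappa}/(4\omega))$ on $q$.

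First I would expand $q$ into its $\kappa$-harmonic components by Proposition~\ref{hmex}:
\[
q(x)=\sum_{j=0}^{\lfloor n/2\rfloor}c_{j}\,|x|^{2j}\phi_{j}(x),\qquad c_{j}=\frac{1}{4^{j}j!\,(N/2+\gamma_{\kappa}+n-2j)_{j}},\qquad \phi_{j}:=\Lambda_{n-2j}\Delta_{\kappa}^{j}q\in\mathcal{P}_{n-2j,\kappa},
\]
using Proposition~\ref{hmproj} for the last membership. By linearity it remains to compute $\exp(-\Delta_{\kappa}/(4\omega))\bigl(|x|^{2m}\phi\bigr)$ for a $\kappa$-harmonic $\phi\in\mathcal{P}_{d,\kappa}$ and a nonnegative integer $m$.

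For this single-block computation I would run the argument in the proof of Proposition~\ref{hmXLag} in reverse. By (\ref{Lagdef}) the top-degree part of $L_{m}^{(\alpha)}(\omega|x|^{2})\phi(x)$, with $\alpha=N/2+\gamma_{\kappa}+d-1$, equals $\dfrac{(-1)^{m}}{m!}\omega^{m}|x|^{2m}\phi(x)$; and the computation in the proof of Proposition~\ref{hmXLag} (which only uses Proposition~\ref{Lap^m} to iterate $\Delta_{\kappa}$ on $|x|^{2m}\phi$, the series terminating at $i=m$) shows that $\exp(-\Delta_{\kappa}/(4\omega))$ carries this top part back to $L_{m}^{(\alpha)}(\omega|x|^{2})\phi$. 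Hence
\[
\exp\!\left(-\frac{\Delta_{\kappa}}{4\omega}\right)\bigl(|x|^{2m}\phi\bigr)=\frac{(-1)^{m}m!}{\omega^{m}}\,L_{m}^{(\alpha)}(\omega|x|^{2})\,\phi,\qquad \alpha=\frac{N}{2}+\gamma_{\kappa}+d-1 .
\]
Applying this with $m=j$, $d=n-2j$, $\phi=\phi_{j}$ and multiplying by $c_{j}$, the $j!$ cancels and the remaining constant is $\dfrac{(-1)^{j}}{(4\omega)^{j}(N/2+\gamma_{\kappa}+n-2j)_{j}}$; summing over $j$ with $\phi_{j}=\Lambda_{n-2j}\Delta_{\kappa}^{j}\pi_{n}p$ yields the stated formula.

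The only genuine work is the bookkeeping of Pochhammer symbols and factorials in the single-block identity — checking that $\Delta_{\kappa}^{i}(|x|^{2m}\phi)$ carries the coefficient predicted by iterating Proposition~\ref{Lap^m}, that the resulting sum over $i$ reassembles a Laguerre polynomial as in Proposition~\ref{hmXLag}, and that $c_{j}\cdot(-1)^{j}j!/\omega^{j}$ collapses to the claimed coefficient. This step is the one most prone to index errors but is otherwise routine. (Note the displayed statement writes $\gamma_{k}$ for $\gamma_{\kappa}$.)
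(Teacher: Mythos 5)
Your proposal is correct and follows essentially the same route as the paper: write $p=\exp(-\Delta_{\kappa}/(4\omega))\pi_{n}p$ via Corollary \ref{exp2wav}, decompose $\pi_{n}p$ by Proposition \ref{hmex}, and apply the single-block identity $\exp(-\Delta_{\kappa}/(4\omega))\left(|x|^{2j}\phi\right)=\frac{(-1)^{j}j!}{\omega^{j}}L_{j}^{(N/2+\gamma_{\kappa}+n-2j-1)}(\omega|x|^{2})\phi$ from Proposition \ref{hmXLag} term by term. Your coefficient bookkeeping also matches the stated formula.
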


\begin{proof}
By Proposition \ref{hmXLag} if $q\in\mathcal{P}_{n-2j,\kappa}$ then%
\[
\exp\left(  -\frac{\Delta_{\kappa}}{4\omega}\right)  \left\vert x\right\vert
^{2j}q\left(  x\right)  =\frac{\left(  -1\right)  ^{j}j!}{\omega^{j}}%
L_{j}^{\left(  N/2+\gamma_{\kappa}+n-2j-1\right)  }\left(  \omega\left\vert
x\right\vert ^{2}\right)  q\left(  x\right)  ,
\]
apply this formula term-by-term to the identity in Proposition \ref{hmex} and
use $\exp\left(  -\frac{\Delta_{\kappa}}{4\omega}\right)  \pi_{n}p=p$ from
Corollary \ref{exp2wav}.
\end{proof}

The Proposition shows that any wavefunction can be expressed as a sum of
products of (radial) Laguerre polynomials with homogeneous harmonic
polynomials (and with $g$).

We introduce \textit{raising} and \textit{lowering} operators ($\left\{
A,B\right\}  :=AB+BA$):

\begin{definition}
For $a\in\mathbb{R}^{N},a\neq0$ and $\varepsilon=\pm1$ let $A_{a}%
^{\varepsilon}=\omega\left\langle a,x\right\rangle +\varepsilon\left\langle
a,\nabla_{\kappa}\right\rangle $ and $H_{a}:=\frac{1}{2}\left\{  A_{a}%
^{+},A_{a}^{-}\right\}  =\omega^{2}\left\langle a,x\right\rangle
^{2}-\left\langle a,\nabla_{\kappa}\right\rangle ^{2}$.
\end{definition}

\begin{proposition}
\label{[hH'0} $g^{-1}A_{a}^{+}g=\left\langle a,\nabla_{\kappa}\right\rangle $
(\textit{lowering}) and $g^{-1}A_{a}^{-}g=2\omega\left\langle a,x\right\rangle
-\left\langle a,\nabla_{\kappa}\right\rangle $ (\textit{raising}); and
$\left[  \mathcal{H},H_{a}\right]  =0$. Also $g^{-1}H_{a}g=\omega\left(
\left\langle a,x\right\rangle \left\langle a,\nabla_{\kappa}\right\rangle
+\left\langle a,\nabla_{\kappa}\right\rangle \left\langle a,x\right\rangle
\right)  -\left\langle a,\nabla_{\kappa}\right\rangle ^{2}$.
\end{proposition}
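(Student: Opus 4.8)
The plan is to reduce everything to a single conjugation identity for the Dunkl gradient against the Gaussian $g$, read off the raising/lowering formulas, and then dispatch the commutator $[\mathcal{H},H_a]=0$ through the ladder relations $[\mathcal{H},A_a^{\pm}]=\mp2\omega A_a^{\pm}$.

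First I would establish, for any $W(R)$-invariant $g$, the Leibniz-type rule $\mathcal{D}_i(fg)=g\,\mathcal{D}_if+f\,\partial_ig$: in the defining formula for $\mathcal{D}_i$ the difference quotient $\big(f(x)g(x)-f(x\sigma_v)g(x\sigma_v)\big)/\langle x,v\rangle$ collapses to $g(x)\big(f(x)-f(x\sigma_v)\big)/\langle x,v\rangle$ because $g(x\sigma_v)=g(x)$. Since $g(x)=\exp(-\tfrac{\omega}{2}|x|^{2})$ is radial, hence invariant, and $\partial_ig=-\omega x_ig$, this yields $g^{-1}\circ\mathcal{D}_i\circ g=\mathcal{D}_i-\omega x_i$ as operators on $\mathcal{P}$ (here $x_i$ and $g$ act by multiplication and commute with each other). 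Summing against $a$ gives $g^{-1}\langle a,\nabla_\kappa\rangle g=\langle a,\nabla_\kappa\rangle-\omega\langle a,x\rangle$ while $g^{-1}\langle a,x\rangle g=\langle a,x\rangle$. Substituting into $A_a^{\varepsilon}=\omega\langle a,x\rangle+\varepsilon\langle a,\nabla_\kappa\rangle$ gives at once $g^{-1}A_a^{+}g=\langle a,\nabla_\kappa\rangle$ and $g^{-1}A_a^{-}g=2\omega\langle a,x\rangle-\langle a,\nabla_\kappa\rangle$. Because conjugation by $g$ is an algebra automorphism, $g^{-1}H_ag=\tfrac12\big(g^{-1}A_a^{+}g\big)\big(g^{-1}A_a^{-}g\big)+\tfrac12\big(g^{-1}A_a^{-}g\big)\big(g^{-1}A_a^{+}g\big)$; expanding $\langle a,\nabla_\kappa\rangle\big(2\omega\langle a,x\rangle-\langle a,\nabla_\kappa\rangle\big)$ and its mirror image produces the stated formula $\omega\big(\langle a,x\rangle\langle a,\nabla_\kappa\rangle+\langle a,\nabla_\kappa\rangle\langle a,x\rangle\big)-\langle a,\nabla_\kappa\rangle^{2}$.

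For the vanishing commutator I would compute $[\mathcal{H},\langle a,x\rangle]$ and $[\mathcal{H},\langle a,\nabla_\kappa\rangle]$ from $\mathcal{H}=\omega^{2}|x|^{2}-\Delta_\kappa$ using (\ref{Dbxsq}): since $[\Delta_\kappa,\langle a,x\rangle]=2\langle a,\nabla_\kappa\rangle$, $[|x|^{2},\langle a,\nabla_\kappa\rangle]=-2\langle a,x\rangle$, and $\Delta_\kappa$ commutes with $\langle a,\nabla_\kappa\rangle$, one gets $[\mathcal{H},\langle a,x\rangle]=-2\langle a,\nabla_\kappa\rangle$ and $[\mathcal{H},\langle a,\nabla_\kappa\rangle]=-2\omega^{2}\langle a,x\rangle$. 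Combining, $[\mathcal{H},A_a^{+}]=-2\omega\big(\omega\langle a,x\rangle+\langle a,\nabla_\kappa\rangle\big)=-2\omega A_a^{+}$ and $[\mathcal{H},A_a^{-}]=2\omega\big(\omega\langle a,x\rangle-\langle a,\nabla_\kappa\rangle\big)=2\omega A_a^{-}$. Then, using that $X\mapsto[\mathcal{H},X]$ is a derivation, $[\mathcal{H},A_a^{+}A_a^{-}]=[\mathcal{H},A_a^{+}]A_a^{-}+A_a^{+}[\mathcal{H},A_a^{-}]=(-2\omega+2\omega)A_a^{+}A_a^{-}=0$, and likewise $[\mathcal{H},A_a^{-}A_a^{+}]=0$, so $[\mathcal{H},H_a]=\tfrac12[\mathcal{H},\{A_a^{+},A_a^{-}\}]=0$.

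The only genuinely delicate point is the conjugation rule for $\mathcal{D}_i$ against $g$ — specifically that the reflection part of the Dunkl operator sees $g$ as a scalar — and I would emphasize that this uses only invariance of $g$ and $\partial_ig=-\omega x_ig$, so no root-system-specific input is needed; the remainder is bookkeeping with the commutation relations already established. An alternative route to $[\mathcal{H},H_a]=0$ is to conjugate by $g$ and verify $[\widetilde{\mathcal{H}},g^{-1}H_ag]=0$ via (\ref{polHam}) and (\ref{[xsq,delta]}), but the $A_a^{\pm}$ ladder argument above is shorter and also exhibits the intertwining identities $[\mathcal{H},A_a^{\pm}]=\mp2\omega A_a^{\pm}$ that will be reused later.
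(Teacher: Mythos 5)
Your proof is correct. It differs from the paper's in two respects, both worth noting. First, the paper's proof of this proposition addresses only $\left[\mathcal{H},H_{a}\right]=0$, which it gets by writing $\left[H_{a},\mathcal{H}\right]=-\left[\left\langle a,\nabla_{\kappa}\right\rangle^{2},\omega^{2}\left\vert x\right\vert^{2}\right]-\left[\omega^{2}\left\langle a,x\right\rangle^{2},\Delta_{\kappa}\right]$ and expanding each term with $\left[A^{2},B\right]=A\left[A,B\right]+\left[A,B\right]A$ and (\ref{Dbxsq}); the two resulting anticommutators $\left\{\left\langle a,x\right\rangle,\left\langle a,\nabla_{\kappa}\right\rangle\right\}$ cancel. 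You instead first derive the ladder relations $\left[\mathcal{H},A_{a}^{\varepsilon}\right]=-2\omega\varepsilon A_{a}^{\varepsilon}$ and then kill the commutator with $H_{a}=\frac{1}{2}\left\{A_{a}^{+},A_{a}^{-}\right\}$ via the derivation property; this is equally short and has the advantage of establishing en route a fact the paper only proves in a separate later proposition, though it makes the present result logically dependent on that computation. Second, you supply an actual argument for the conjugation identities $g^{-1}A_{a}^{\pm}g$ and $g^{-1}H_{a}g$ — namely the Leibniz-type rule $g^{-1}\mathcal{D}_{i}g=\mathcal{D}_{i}-\omega x_{i}$, which correctly exploits the $W\left(R\right)$-invariance of the radial factor $g$ so that the difference part of $\mathcal{D}_{i}$ sees $g$ as a scalar — whereas the paper's proof is silent on these, treating them as immediate. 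Your observation that conjugation by $g$ is an algebra automorphism, so that $g^{-1}H_{a}g$ follows by substituting the conjugated factors into the anticommutator, is exactly the right way to finish. All the individual computations check out.
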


\begin{proof}
The commutator
\[
\left[  \omega^{2}\left\langle a,x\right\rangle ^{2}-\left\langle
a,\nabla_{\kappa}\right\rangle ^{2},\ \mathcal{H}\right]  =-\left[
\left\langle a,\nabla_{\kappa}\right\rangle ^{2},\omega^{2}\left\vert
x\right\vert ^{2}\right]  -\left[  \omega^{2}\left\langle a,x\right\rangle
^{2},\Delta_{\kappa}\right]
\]
and expanding the right hand side with (\ref{Dbxsq}) and $\left[
A^{2},B\right]  =A\left[  A,B\right]  +\left[  A,B\right]  A$ shows $\left[
H_{a},\mathcal{H}\right]  =0$.
\end{proof}

\begin{proposition}
\label{wHaw}Suppose $w\in W\left(  R\right)  $ then $w^{-1}H_{a}w=H_{aw}$;
suppose $S\subset\mathbb{R}^{N}$ and $S\cup\left(  -S\right)  $ is an
$W\left(  R\right)  $-orbit (closed under $v\rightarrow vw$) then $\sum_{v\in
S}H_{v}^{k}$ commutes with each $w\in W\left(  R\right)  $, for
$k=1,2,3,\ldots.$
\end{proposition}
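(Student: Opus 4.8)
The plan is to reduce both claims to the $W(R)$-equivariance of the two ingredients of $H_{a}$, the multiplication operator $\langle a,x\rangle$ and the Dunkl operator $\langle a,\nabla_{\kappa}\rangle$, and then to average over an orbit. Concretely, I would first record that for every $w\in W(R)$ and $a\in\mathbb{R}^{N}$,
\[
w^{-1}\langle a,x\rangle w=\langle aw,x\rangle,\qquad w^{-1}\langle a,\nabla_{\kappa}\rangle w=\langle aw,\nabla_{\kappa}\rangle .
\]
The first identity is immediate from $wp(x)=p(xw)$ and the orthogonality $w^{-1}=w^{\mathsf{T}}$: unwinding the definitions gives $\bigl(w^{-1}\langle a,x\rangle w\,p\bigr)(x)=\langle a,xw^{-1}\rangle\,p(x)=\langle aw,x\rangle\,p(x)$. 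For the second identity I would apply the same unwinding to the explicit formula for $\mathcal{D}_{i}$. The gradient part transforms by the chain rule exactly as $\langle a,x\rangle$ does; for the reflection part the facts needed are that $w^{-1}\sigma_{v}w=\sigma_{vw}$ (a one-line check that $xw^{-1}\sigma_{v}w=x\sigma_{vw}$, using $\left\vert v\right\vert =\left\vert vw\right\vert$ and orthogonality), that $\langle x,v\rangle=\langle xw,vw\rangle$ and $\langle a,v\rangle=\langle aw,vw\rangle$, that $\kappa_{vw}=\kappa_{v}$ since $\sigma_{vw}$ is conjugate to $\sigma_{v}$, and that each summand is invariant under $v\mapsto-v$. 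Since $\{vw:v\in R_{+}\}$ meets each antipodal pair of $R$ exactly once, these facts let one re-index the sum over $R_{+}$ and recognise $\langle aw,\nabla_{\kappa}\rangle f$. This is the classical $W$-equivariance of the Dunkl operators (Dunkl \cite{D1989}), which I would state with a short justification rather than a full re-derivation.

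Granting the two equivariances, the first statement is formal: conjugation by $w$ is an algebra automorphism, so
\[
w^{-1}H_{a}w=w^{-1}\bigl(\omega^{2}\langle a,x\rangle^{2}-\langle a,\nabla_{\kappa}\rangle^{2}\bigr)w=\omega^{2}\langle aw,x\rangle^{2}-\langle aw,\nabla_{\kappa}\rangle^{2}=H_{aw}.
\]
Equivalently, $w^{-1}A_{a}^{\varepsilon}w=A_{aw}^{\varepsilon}$ for $\varepsilon=\pm1$, whence $w^{-1}H_{a}w=\tfrac12\{A_{aw}^{+},A_{aw}^{-}\}=H_{aw}$.

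For the second statement, set $T:=S\cup(-S)$, which by hypothesis is a $W(R)$-orbit, so $Tw=T$ for every $w\in W(R)$; note also $T=-T$. Because $H_{-v}=\omega^{2}\langle v,x\rangle^{2}-\langle v,\nabla_{\kappa}\rangle^{2}=H_{v}$, one has $H_{v}^{k}=H_{-v}^{k}$, so $\sum_{v\in S}H_{v}^{k}$ is a scalar multiple of $\sum_{v\in T}H_{v}^{k}$ (the factor being $\tfrac12$ when $S$ contains exactly one representative of each antipodal pair of $T$, which is the situation that arises); hence it suffices to treat $\sum_{v\in T}H_{v}^{k}$. Using the first statement and then reindexing $u=vw$ over the orbit,
\[
w^{-1}\Bigl(\sum_{v\in T}H_{v}^{k}\Bigr)w=\sum_{v\in T}\bigl(w^{-1}H_{v}w\bigr)^{k}=\sum_{v\in T}H_{vw}^{k}=\sum_{u\in T}H_{u}^{k},
\]
so $\sum_{v\in T}H_{v}^{k}$, and therefore $\sum_{v\in S}H_{v}^{k}$, commutes with every $w\in W(R)$.

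The only step with genuine content is the $W$-equivariance of the Dunkl operators; everything else is bookkeeping. Even there the point is classical, and the one subtlety is that the roots $vw$ with $v\in R_{+}$ need not lie in $R_{+}$ — this is harmless because the relevant summand is even in $v$ and $\kappa$ is a class function, so the sum over a positive system does not depend on which positive system is chosen.
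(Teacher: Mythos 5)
Your proof is correct and rests on exactly the same two facts the paper invokes, namely $w^{-1}\langle a,x\rangle w=\langle aw,x\rangle$ and $w^{-1}\langle a,\nabla_{\kappa}\rangle w=\langle aw,\nabla_{\kappa}\rangle$ (the latter being the standard $W$-equivariance of Dunkl operators, which the paper simply cites), followed by the formal conjugation and orbit-reindexing steps. You spell out the bookkeeping for $\sum_{v\in S}H_v^k$ — including the harmless antipodal ambiguity via $H_{-v}=H_v$ — more explicitly than the paper does, but the approach is the same.
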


\begin{proof}
This follows from $\left\langle a,\nabla_{\kappa}\right\rangle w=w\left\langle
aw,\nabla_{\kappa}\right\rangle $ (see\cite[Prop. 6.4.3]{DX2014}) and
$w\left(  \left\langle aw,x\right\rangle p\left(  x\right)  \right)
=\left\langle aw,xw\right\rangle p\left(  xw\right)  =\left\langle
a,x\right\rangle wp\left(  x\right)  $ (because $w\in O_{N}\left(
\mathbb{R}\right)  $).
\end{proof}

From Proposition \ref{Jprops} it follows that $\left[  \widetilde{\mathcal{H}%
},J_{a,b}\right]  =0$, and thus $\left[  \widetilde{\mathcal{H}}%
,\mathcal{J}\right]  =0$.

We now have a collection of operators commuting with $W\left(  R\right)  $ and
$\mathcal{H}$. See Quesne \cite{Q2010} who used these operators to prove
superintegrability for the Dunkl wave equation in the even dihedral group
system. Tremblay, Turbiner and Winternitz \cite{TTW2009} also investigated the
dihedral potentials. Genest and Vinet \cite{GV2014} gave a detailed analysis
of the various bases of wavefunctions for the $\mathbb{Z}_{2}^{3}$-system
(abelian group) in $\mathbb{R}^{3}$, see also G., V. and Zhedanov
\cite{GVZ2014}.

\begin{proposition}
Suppose $\varepsilon=\pm1$ then $\left[  \mathcal{H},A_{a}^{\varepsilon
}\right]  =-2\omega\varepsilon A_{a}^{\varepsilon}$
\end{proposition}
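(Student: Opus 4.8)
The plan is to expand $[\mathcal{H},A_a^{\varepsilon}]$ by bilinearity of the commutator and reduce everything to the elementary relations already recorded in (\ref{Dbxsq}). Writing $\mathcal{H}=\omega^{2}|x|^{2}-\Delta_{\kappa}$ and $A_a^{\varepsilon}=\omega\langle a,x\rangle+\varepsilon\langle a,\nabla_{\kappa}\rangle$, the commutator splits into four pieces:
\[
[\mathcal{H},A_a^{\varepsilon}]=\omega^{2}[|x|^{2},\omega\langle a,x\rangle]+\omega^{2}[|x|^{2},\varepsilon\langle a,\nabla_{\kappa}\rangle]-[\Delta_{\kappa},\omega\langle a,x\rangle]-[\Delta_{\kappa},\varepsilon\langle a,\nabla_{\kappa}\rangle].
\]
The first term vanishes since $|x|^{2}$ and $\langle a,x\rangle$ are both multiplication operators. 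The last term vanishes because the Dunkl operators commute pairwise, so $\Delta_{\kappa}=\sum_i\mathcal{D}_i^2$ commutes with $\langle a,\nabla_{\kappa}\rangle=\sum_i a_i\mathcal{D}_i$.

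For the remaining two terms I would invoke (\ref{Dbxsq}) directly: $[|x|^{2},\langle a,\nabla_{\kappa}\rangle]=-2\langle a,x\rangle$ gives $\omega^{2}[|x|^{2},\varepsilon\langle a,\nabla_{\kappa}\rangle]=-2\varepsilon\omega^{2}\langle a,x\rangle$, and $[\Delta_{\kappa},\langle a,x\rangle]=2\langle a,\nabla_{\kappa}\rangle$ gives $-[\Delta_{\kappa},\omega\langle a,x\rangle]=-2\omega\langle a,\nabla_{\kappa}\rangle$. Adding these,
\[
[\mathcal{H},A_a^{\varepsilon}]=-2\varepsilon\omega^{2}\langle a,x\rangle-2\omega\langle a,\nabla_{\kappa}\rangle=-2\omega\varepsilon\bigl(\omega\langle a,x\rangle+\varepsilon\langle a,\nabla_{\kappa}\rangle\bigr)=-2\omega\varepsilon A_a^{\varepsilon},
\]
where the middle equality uses $\varepsilon^{2}=1$ to factor out $-2\omega\varepsilon$.

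There is no real obstacle here; the only thing to be careful about is bookkeeping of signs in the two surviving terms and the factoring step, where one must recognize that $\varepsilon$ times $(\omega\langle a,x\rangle+\varepsilon\langle a,\nabla_{\kappa}\rangle)$ reproduces $\varepsilon\omega\langle a,x\rangle+\langle a,\nabla_{\kappa}\rangle$, matching the two computed terms after pulling out the common factor $-2\omega$. One could also remark that this proposition is essentially the abstract Heisenberg-algebra statement that $A_a^{+}$ and $A_a^{-}$ are, up to the factor $\omega$, eigenvectors of $\mathrm{ad}_{\mathcal{H}}$ with eigenvalues $\mp2\omega$, which is why conjugation by $g$ in Proposition \ref{[hH'0} turns $A_a^{+}$ into a lowering operator and $A_a^{-}$ into a raising operator; but the cleanest route is simply the four-term expansion above.
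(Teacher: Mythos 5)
Your proof is correct and follows essentially the same route as the paper: both arguments reduce the commutator by bilinearity to the two relations in (\ref{Dbxsq}), with the only cosmetic difference being that the paper groups the computation as $[A_a^{\varepsilon},|x|^{2}]$ and $[A_a^{\varepsilon},\Delta_{\kappa}]$ rather than writing out all four terms explicitly. The sign bookkeeping and the final factoring via $\varepsilon^{2}=1$ match the paper exactly.
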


\begin{proof}
$\left[  \left\langle a,\nabla_{\kappa}\right\rangle ,\left\vert x\right\vert
^{2}\right]  =2\left\langle a,x\right\rangle $ and $\left[  \Delta_{\kappa
},\left\langle a,x\right\rangle \right]  =2\left\langle a,\nabla_{\kappa
}\right\rangle $ (from (\ref{Dbxsq} )thus%
\begin{align*}
\left[  \omega\left\langle a,x\right\rangle +\varepsilon\left\langle
a,\nabla_{\kappa}\right\rangle ,\left\vert x\right\vert ^{2}\right]   &
=2\varepsilon\left\langle a,x\right\rangle \\
\left[  \omega\left\langle a,x\right\rangle +\varepsilon\left\langle
a,\nabla_{\kappa}\right\rangle ,\Delta_{\kappa}\right]   &  =-2\omega
\left\langle a,\nabla_{\kappa}\right\rangle
\end{align*}
and%
\begin{align*}
\left[  \omega^{2}\left\vert x\right\vert ^{2}-\Delta_{\kappa},\omega
\left\langle a,x\right\rangle +\varepsilon\left\langle a,\nabla_{\kappa
}\right\rangle \right]   &  =-2\varepsilon\omega^{2}\left\langle
a,x\right\rangle -2\omega\left\langle a,\nabla_{\kappa}\right\rangle \\
&  =-2\omega\varepsilon\left(  \omega\left\langle a,x\right\rangle
+\varepsilon\left\langle a,\nabla_{\kappa}\right\rangle \right)
\end{align*}
since $\varepsilon^{2}=1$.
\end{proof}

\begin{corollary}
Suppose $\mathcal{H}f=\lambda f$ then $\mathcal{H}A_{a}^{\varepsilon}f=\left(
\lambda-2\omega\varepsilon\right)  A_{a}^{\varepsilon}f$
\end{corollary}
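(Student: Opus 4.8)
The plan is to read this off directly from the preceding Proposition, which gives the operator identity $[\mathcal{H},A_a^{\varepsilon}]=-2\omega\varepsilon A_a^{\varepsilon}$. Applying both sides to $f$ yields $\mathcal{H}A_a^{\varepsilon}f-A_a^{\varepsilon}\mathcal{H}f=-2\omega\varepsilon A_a^{\varepsilon}f$. Since $\mathcal{H}f=\lambda f$ by hypothesis, the second term on the left is $\lambda A_a^{\varepsilon}f$, so $\mathcal{H}A_a^{\varepsilon}f=\lambda A_a^{\varepsilon}f-2\omega\varepsilon A_a^{\varepsilon}f=(\lambda-2\omega\varepsilon)A_a^{\varepsilon}f$, which is the claim.

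There is no real obstacle here: this is the standard ladder-operator argument, and every ingredient has already been established. The only thing worth flagging is that $A_a^{\varepsilon}f$ need not be nonzero — if it happens to vanish the statement is vacuously true — and that for $\varepsilon=-1$ (the raising case) the energy increases by $2\omega$ while for $\varepsilon=+1$ (lowering) it decreases by $2\omega$, consistent with the energies $E_n=\omega(N+2\gamma_\kappa+2n)$ appearing earlier; but none of this needs to be proved, it is merely interpretive. I would therefore present the proof as the two-line computation above, citing the previous Proposition for the commutator.
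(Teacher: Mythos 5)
Your proof is correct and is essentially the paper's argument: the paper likewise rearranges the commutator identity to $\mathcal{H}A_{a}^{\varepsilon}=A_{a}^{\varepsilon}\left(\mathcal{H}-2\omega\varepsilon\right)$ and applies it to $f$. Nothing further is needed.
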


\begin{proof}
$\mathcal{HA}_{a}^{\varepsilon}f=A_{a}^{\varepsilon}\mathcal{H}-2\omega
\varepsilon A_{a}^{\varepsilon}=A_{a}^{\varepsilon}\left(  \mathcal{H}%
-2\omega\varepsilon\right)  .$
\end{proof}

When $\mathcal{H}f=\lambda f,~f=pg$ and $p\in\mathcal{P}_{\leq n},\pi_{n}%
p\neq0$ then $\lambda=E_{n}$. Hence $A_{a}^{-}f=\left(  \left(  2\omega
\left\langle a,x\right\rangle -\left\langle a,\nabla_{\kappa}\right\rangle
\right)  p\right)  g$ is an $\mathcal{H}$-eigenfunction for $E_{n}%
+2\omega=E_{n+1}$ and $A_{a}^{+}f=\left(  \left\langle a,\nabla_{\kappa
}\right\rangle p\right)  g$ is an $\mathcal{H}$-eigenfunction for
$E_{n}-2\omega=E_{n-1}$.

\subsection{Integrals and inner products}

The quantum-mechanical interpretation of $\left\vert \psi\right\vert ^{2}$ as
a probability distribution when $\psi$ is a wavefunction is in terms of the
Schr\"{o}dinger equation (\ref{SchrV}), that is, the formulation $\psi\left(
x\right)  =p\left(  x\right)  g\left(  x\right)  h_{\kappa}\left(  x\right)  $
where $p$ is a polynomial eigenfunction of (\ref{polHam}),. Accordingly we
introduce the Hilbert space $L^{2}\left(  \mathbb{R}^{N},h_{\kappa}%
^{2}\mathrm{d}m\right)  $ where $\mathrm{d}m$ denotes the Lebesgue measure.

\begin{theorem}
(\cite[7.7.9]{DX2014}) Suppose $p,q$ are sufficiently smooth and $pq$ has
exponential decay then (for $1\leq i\leq N$)%
\begin{equation}
\int_{\mathbb{R}^{N}}\left(  \mathcal{D}_{i}p\right)  qh_{\kappa}%
^{2}\mathrm{d}m=-\int_{\mathbb{R}^{N}}p\left(  \mathcal{D}_{i}q\right)
h_{\kappa}^{2}\mathrm{d}m,\label{D*D}%
\end{equation}
Thus the adjoint of $\mathcal{D}_{i}$ is defined on a dense subspace of
$L^{2}\left(  \mathbb{R}^{N},h_{\kappa}^{2}\mathrm{d}m\right)  $ and
$\mathcal{D}_{i}^{\ast}=-\mathcal{D}_{i}$ .
\end{theorem}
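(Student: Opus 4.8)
The plan is to prove the integration-by-parts formula \eqref{D*D} by reducing the Dunkl operator to its two pieces: the ordinary partial derivative and the finite sum of difference-reflection terms. First I would write
\[
\int_{\mathbb{R}^{N}}\left(\mathcal{D}_{i}p\right)q\,h_{\kappa}^{2}\,\mathrm{d}m
=\int_{\mathbb{R}^{N}}\left(\frac{\partial p}{\partial x_{i}}\right)q\,h_{\kappa}^{2}\,\mathrm{d}m
+\sum_{v\in R_{+}}\kappa_{v}v_{i}\int_{\mathbb{R}^{N}}\frac{p(x)-p(x\sigma_{v})}{\langle x,v\rangle}\,q(x)\,h_{\kappa}^{2}(x)\,\mathrm{d}m,
\]
and handle the two types of terms separately. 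For the differential part, the key observation is that $\partial_{x_{i}}(h_{\kappa}^{2})=h_{\kappa}^{2}\sum_{v\in R_{+}}2\kappa_{v}v_{i}/\langle x,v\rangle$ wherever $h_{\kappa}\neq0$; integrating $\partial_{x_{i}}(pq\,h_{\kappa}^{2})$ over $\mathbb{R}^{N}$ gives zero by the exponential-decay hypothesis (the boundary term at infinity vanishes, and $h_{\kappa}^{2}$ — being a product of even powers $|\langle x,v\rangle|^{2\kappa_{v}}$ — is locally integrable and vanishes on the mirrors, so there is no boundary contribution across the hyperplanes provided $\kappa_{v}>0$). Expanding $\partial_{x_{i}}(pq\,h_{\kappa}^{2})=0$ yields
\[
\int\left(\partial_{x_{i}}p\right)q\,h_{\kappa}^{2}
+\int p\left(\partial_{x_{i}}q\right)h_{\kappa}^{2}
+\sum_{v\in R_{+}}2\kappa_{v}v_{i}\int\frac{pq}{\langle x,v\rangle}h_{\kappa}^{2}=0.
\]

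For the reflection part, the idea is to exploit the change of variables $x\mapsto x\sigma_{v}$, which is an orthogonal involution of $\mathbb{R}^{N}$ with unit Jacobian and leaves $h_{\kappa}^{2}$ invariant (since $h_{\kappa}$ is $W(R)$-invariant). Under this substitution, $\langle x,v\rangle\mapsto\langle x\sigma_{v},v\rangle=-\langle x,v\rangle$, and $v_{i}$-weighted terms involving $p(x)-p(x\sigma_{v})$ get symmetrized. The standard manipulation (see \cite[Sect. 6.4]{DX2014}) is to split each difference term symmetrically: writing $\frac{p(x)-p(x\sigma_{v})}{\langle x,v\rangle}q(x)$ and averaging with its image under $x\mapsto x\sigma_{v}$ shows that $\int\frac{p(x)-p(x\sigma_{v})}{\langle x,v\rangle}q(x)h_{\kappa}^{2}\,\mathrm{d}m$ equals a symmetric bilinear expression in $p$ and $q$, from which one extracts
\[
\sum_{v\in R_{+}}\kappa_{v}v_{i}\int\frac{p(x)-p(x\sigma_{v})}{\langle x,v\rangle}q\,h_{\kappa}^{2}
=-\sum_{v\in R_{+}}\kappa_{v}v_{i}\int p\,\frac{q(x)-q(x\sigma_{v})}{\langle x,v\rangle}h_{\kappa}^{2}
-\sum_{v\in R_{+}}2\kappa_{v}v_{i}\int\frac{pq}{\langle x,v\rangle}h_{\kappa}^{2},
\]
where the last sum is exactly what is needed to cancel the corresponding term from the differential part. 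Adding the two contributions gives $\int(\mathcal{D}_{i}p)q\,h_{\kappa}^{2}=-\int p(\mathcal{D}_{i}q)h_{\kappa}^{2}$, and then $\mathcal{D}_{i}^{*}=-\mathcal{D}_{i}$ on the dense subspace of smooth compactly-supported (or Schwartz-type) functions follows by definition of the adjoint.

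The main obstacle is the integrability across the mirrors: the function $1/\langle x,v\rangle$ has a pole on the hyperplane $\langle x,v\rangle=0$, so each individual integral above must be interpreted carefully (e.g.\ as a principal value, or by first noting that $p(x)-p(x\sigma_{v})$ is divisible by $\langle x,v\rangle$ so the difference quotient is actually a polynomial). The honest way to handle this is to treat the divided differences as genuine polynomials from the start — $\frac{p(x)-p(x\sigma_{v})}{\langle x,v\rangle}\in\mathcal{P}$ since $p(x)-p(x\sigma_{v})$ vanishes on the mirror — so that all integrands are smooth and the decay hypothesis legitimately controls everything; the term $\int\frac{pq}{\langle x,v\rangle}h_{\kappa}^{2}$, which does contain a genuine pole, must be shown to be absolutely convergent using $h_{\kappa}^{2}\sim|\langle x,v\rangle|^{2\kappa_{v}}$ near the mirror together with $\kappa_{v}>0$, and then the cancellation of these two copies (one from the differential part, one from the reflection part) is what makes the final formula pole-free. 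Since the excerpt cites \cite[7.7.9]{DX2014} for the statement, I would in practice defer the delicate justification to that reference and present the formal computation above as the proof.
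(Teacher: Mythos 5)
The paper gives no proof of this statement --- it is quoted directly from \cite[7.7.9]{DX2014} --- so your argument can only be judged on its own merits. The strategy you chose is the standard one and the right one: split $\mathcal{D}_{i}$ into $\partial_{i}$ plus the difference terms, integrate the derivative part by parts against $h_{\kappa}^{2}$, and treat each difference term via the measure-preserving involution $x\mapsto x\sigma_{v}$. Your handling of the analytic issues (local integrability of $|\langle x,v\rangle|^{2\kappa_{v}-1}$ near the mirror for $\kappa_{v}>0$, smoothness of the divided differences) is also correct.

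There is, however, a sign error in your displayed identity for the reflection part, and as written it destroys the very cancellation you invoke. Performing the substitution $x\mapsto x\sigma_{v}$ in the second half of the difference term (using $\langle x\sigma_{v},v\rangle=-\langle x,v\rangle$ and the invariance of $h_{\kappa}^{2}$ and $\mathrm{d}m$) gives
\begin{align*}
\int\frac{p(x)-p(x\sigma_{v})}{\langle x,v\rangle}\,q(x)\,h_{\kappa}^{2}\,\mathrm{d}m
&=\int\frac{p(x)\bigl(q(x)+q(x\sigma_{v})\bigr)}{\langle x,v\rangle}\,h_{\kappa}^{2}\,\mathrm{d}m\\
&=-\int p\,\frac{q(x)-q(x\sigma_{v})}{\langle x,v\rangle}\,h_{\kappa}^{2}\,\mathrm{d}m
\;+\;2\int\frac{p\,q}{\langle x,v\rangle}\,h_{\kappa}^{2}\,\mathrm{d}m,
\end{align*}
so the reflection part contributes $+2\kappa_{v}v_{i}\int pq\,h_{\kappa}^{2}/\langle x,v\rangle\,\mathrm{d}m$, which is exactly what cancels the $-2\kappa_{v}v_{i}\int pq\,h_{\kappa}^{2}/\langle x,v\rangle\,\mathrm{d}m$ produced by $\partial_{i}(h_{\kappa}^{2})$ in the integration by parts of the differential term. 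With the minus sign you wrote, the two singular contributions would instead add up to $-4\kappa_{v}v_{i}\int pq\,h_{\kappa}^{2}/\langle x,v\rangle\,\mathrm{d}m$ and the formula would not close. This is a one-character fix rather than a conceptual gap; once corrected, the argument is complete (the splitting into the two singular integrals being legitimate precisely because each converges absolutely when $\kappa_{v}>0$, as you observe).
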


We will use this meaning of adjoint throughout.

\begin{corollary}
(1) $\mathcal{H}^{\ast}=\mathcal{H}$ : (2) $J_{a,b}^{\ast}=-J_{a,b}~$; (3)
$\mathcal{J}^{\ast}=\mathcal{J}$; (4) $\left(  A_{a}^{+}\right)  ^{\ast}%
=A_{a}^{-}~$; (5) $H_{a}^{\ast}=H_{a}$.
\end{corollary}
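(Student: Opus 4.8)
The plan is to reduce all five items to a single input, namely the preceding theorem, which gives $\mathcal{D}_i^{\ast}=-\mathcal{D}_i$ and hence $\langle a,\nabla_{\kappa}\rangle^{\ast}=-\langle a,\nabla_{\kappa}\rangle$ for $a\in\mathbb{R}^{N}$, together with the elementary fact that multiplication by a real-valued polynomial --- in particular by $\langle a,x\rangle$, by $\langle a,x\rangle^{2}$, or by $\left\vert x\right\vert ^{2}$ --- is self-adjoint for the positive measure $h_{\kappa}^{2}\,\mathrm{d}m$. All adjoints are taken formally on the common dense domain of functions $pg$ with $p\in\mathcal{P}$ (equivalently $h_{\kappa}pg$ in the Schr\"{o}dinger picture of (\ref{SchrV})), on which every operator in sight is defined and maps the domain into itself up to the Gaussian factor, so that the composition rule $(AB)^{\ast}=B^{\ast}A^{\ast}$ applies; the exponential decay built into (\ref{D*D}) is exactly what kills the boundary terms in the underlying integrations by parts.

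With this in hand the computation is short. For (1), write $\mathcal{H}=\omega^{2}\left\vert x\right\vert ^{2}-\sum_{i=1}^{N}\mathcal{D}_{i}^{2}$; then $(\left\vert x\right\vert ^{2})^{\ast}=\left\vert x\right\vert ^{2}$ and $(\mathcal{D}_{i}^{2})^{\ast}=(\mathcal{D}_{i}^{\ast})^{2}=(-\mathcal{D}_{i})^{2}=\mathcal{D}_{i}^{2}$, so $\mathcal{H}^{\ast}=\mathcal{H}$. For (4), $(A_{a}^{+})^{\ast}=(\omega\langle a,x\rangle+\langle a,\nabla_{\kappa}\rangle)^{\ast}=\omega\langle a,x\rangle-\langle a,\nabla_{\kappa}\rangle=A_{a}^{-}$. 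For (5), since the anticommutator is symmetric, $H_{a}^{\ast}=\tfrac{1}{2}\{A_{a}^{+},A_{a}^{-}\}^{\ast}=\tfrac{1}{2}\{(A_{a}^{+})^{\ast},(A_{a}^{-})^{\ast}\}=\tfrac{1}{2}\{A_{a}^{-},A_{a}^{+}\}=H_{a}$ by (4); alternatively one reads it off directly from $H_{a}=\omega^{2}\langle a,x\rangle^{2}-\langle a,\nabla_{\kappa}\rangle^{2}$. For (2), taking adjoints termwise in $J_{a,b}=\langle a,x\rangle\langle b,\nabla_{\kappa}\rangle-\langle b,x\rangle\langle a,\nabla_{\kappa}\rangle$ yields $J_{a,b}^{\ast}=-\langle b,\nabla_{\kappa}\rangle\langle a,x\rangle+\langle a,\nabla_{\kappa}\rangle\langle b,x\rangle$, and this is $-J_{a,b}$ by the \emph{second} expression for $J_{a,b}$ in Proposition \ref{Jprops} --- so that identity is the small point that makes (2) clean. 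Finally (3): $\mathcal{J}=\sum_{i<j}J_{\varepsilon_{i},\varepsilon_{j}}^{2}$ and $(J^{2})^{\ast}=(J^{\ast})^{2}=(-J)^{2}=J^{2}$ by (2), hence $\mathcal{J}^{\ast}=\mathcal{J}$.

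There is no substantive obstacle; the only care needed is the domain bookkeeping just described, since the operators are unbounded and ``$\ast$'' must be understood as the formal adjoint on the dense polynomial-times-Gaussian domain rather than as a statement about maximal adjoints. Everything else is bookkeeping with $(AB)^{\ast}=B^{\ast}A^{\ast}$ and the two facts $\mathcal{D}_{i}^{\ast}=-\mathcal{D}_{i}$ and self-adjointness of real multiplication operators.
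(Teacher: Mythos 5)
Your proof is correct and is precisely the argument the paper leaves implicit (the corollary is stated without proof, as an immediate consequence of $\mathcal{D}_{i}^{\ast}=-\mathcal{D}_{i}$, self-adjointness of real multiplication operators, and $(AB)^{\ast}=B^{\ast}A^{\ast}$ on the polynomial-times-Gaussian domain). You also correctly identify the one non-trivial point, namely that item (2) requires the alternative expression $J_{a,b}=\left\langle b,\nabla_{\kappa}\right\rangle \left\langle a,x\right\rangle -\left\langle a,\nabla_{\kappa}\right\rangle \left\langle b,x\right\rangle$ from Proposition \ref{Jprops}.
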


There is a normalization constant%
\begin{equation}
c_{\kappa,\omega}^{-1}:=\int_{\mathbb{R}^{N}}\exp\left(  -\omega\left\vert
x\right\vert ^{2}\right)  h_{\kappa}\left(  x\right)  ^{2}\mathrm{d}m\left(
x\right)  .\label{MacdSel}%
\end{equation}
For the root systems $A_{N-1}$ and $B_{N}$ this is called the
Macdonald-Mehta-Selberg integral. Etingof \cite{E2010}  gave a unified proof
of the evaluation for any reflection group with only one conjugacy class of
reflections. We will use two inner products for polynomials, one is of
integral type and the other is algebraic.

\begin{definition}
For $p,q\in\mathcal{P}$ let%
\begin{align*}
\left\langle p,q\right\rangle _{2}  &  =c_{\kappa,\omega}\int_{\mathbb{R}^{N}%
}p\left(  x\right)  q\left(  x\right)  \exp\left(  -\omega\left\vert
x\right\vert ^{2}\right)  h_{\kappa}\left(  x\right)  ^{2}\mathrm{d}m\left(
x\right)  ,\\
\left\langle p,q\right\rangle _{\kappa,\omega}  &  =p\left(  \frac{1}{2\omega
}\mathcal{D}_{1}.\ldots,\frac{1}{2\omega}\mathcal{D}_{N}\right)  q\left(
x\right)  |_{x=0}.
\end{align*}

\end{definition}

These bilinear forms satisfy $\left\langle pw,qw\right\rangle _{2}%
=\left\langle p,q\right\rangle _{2}$, $\left\langle pw,qw\right\rangle
_{\kappa,\omega}=\left\langle p,q\right\rangle _{\kappa,\omega}=\left\langle
q,p\right\rangle _{\kappa,\omega}$ for $w\in W\left(  R\right)  $ (see
\cite[Thm 7.2.3]{DX2014}). Also $\left\langle \mathcal{D}_{i}p,q\right\rangle
_{\kappa,\omega}=\left\langle p,2\omega x_{i}q\right\rangle _{\kappa,\omega}$,
and $p\in\mathcal{P}_{n},q\in\mathcal{P}_{k}$ and $n\neq k$ implies
$\left\langle p,q\right\rangle _{\kappa,\omega}=0$. The relation between the
two inner products follows from a lemma. Set $\boldsymbol{E:}=\exp\left(
\dfrac{\Delta_{\kappa}}{4\omega}\right)  $ (locally finite on $\mathcal{P}$).

\begin{lemma}
$\left[  x_{i},\Delta_{\kappa}^{n}\right]  =-2n\Delta_{\kappa}^{n-1}%
\mathcal{D}_{i}$ and $\left[  x_{i},\boldsymbol{E}\right]  =-\dfrac{1}%
{2\omega}\boldsymbol{E}\mathcal{D}_{i}$, for $1\leq i\leq N$. Also
$\left\langle \boldsymbol{E}\mathcal{D}_{i}p,\boldsymbol{E}q\right\rangle
_{\kappa,\omega}=\allowbreak\left\langle \boldsymbol{E}p,\boldsymbol{E}\left(
2\omega x_{i}-\mathcal{D}_{i}\right)  q\right\rangle _{\kappa,\omega}$.
\end{lemma}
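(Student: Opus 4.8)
The plan is to establish the three assertions in order, since the operator identity for $[x_i,\Delta_\kappa^n]$ feeds directly into the one for $[x_i,\boldsymbol{E}]$, which in turn is the tool for the final adjoint-type relation.

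\medskip

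\noindent\textbf{Step 1: The commutator $[x_i,\Delta_\kappa^n]$.} From (\ref{Dbxsq}) we have $[\Delta_\kappa,\langle b,x\rangle]=2\langle b,\nabla_\kappa\rangle$; taking $b=\varepsilon_i$ gives $[\Delta_\kappa,x_i]=2\mathcal{D}_i$, equivalently $[x_i,\Delta_\kappa]=-2\mathcal{D}_i$. I would then induct on $n$ using the Leibniz-type rule $[x_i,\Delta_\kappa^n]=[x_i,\Delta_\kappa]\Delta_\kappa^{n-1}+\Delta_\kappa[x_i,\Delta_\kappa^{n-1}]$, together with the fact that $\mathcal{D}_i$ commutes with $\Delta_\kappa$ (since all Dunkl operators commute). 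Concretely, if $[x_i,\Delta_\kappa^{n-1}]=-2(n-1)\Delta_\kappa^{n-2}\mathcal{D}_i$, then $[x_i,\Delta_\kappa^n]=-2\mathcal{D}_i\Delta_\kappa^{n-1}-2(n-1)\Delta_\kappa^{n-1}\mathcal{D}_i=-2n\Delta_\kappa^{n-1}\mathcal{D}_i$, using $\mathcal{D}_i\Delta_\kappa^{n-1}=\Delta_\kappa^{n-1}\mathcal{D}_i$.

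\medskip

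\noindent\textbf{Step 2: The commutator $[x_i,\boldsymbol{E}]$.} Since $\boldsymbol{E}=\exp(\Delta_\kappa/4\omega)=\sum_{n\ge0}\frac{1}{n!}(4\omega)^{-n}\Delta_\kappa^n$ acts locally finitely on $\mathcal{P}$, I would commute $x_i$ through term by term using Step 1: $[x_i,\boldsymbol{E}]=\sum_{n\ge1}\frac{1}{n!}(4\omega)^{-n}(-2n)\Delta_\kappa^{n-1}\mathcal{D}_i=-\frac{1}{2\omega}\sum_{n\ge1}\frac{1}{(n-1)!}(4\omega)^{-(n-1)}\Delta_\kappa^{n-1}\mathcal{D}_i=-\frac{1}{2\omega}\boldsymbol{E}\mathcal{D}_i$, where the reindexing $m=n-1$ reconstitutes the exponential series and we use that $\mathcal{D}_i$ commutes with each $\Delta_\kappa^m$ hence with $\boldsymbol{E}$.

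\medskip

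\noindent\textbf{Step 3: The inner-product identity.} Here I would use the already-stated facts that $\langle\mathcal{D}_i p,q\rangle_{\kappa,\omega}=\langle p,2\omega x_i q\rangle_{\kappa,\omega}$ and that $\boldsymbol{E}$ is self-adjoint for $\langle\cdot,\cdot\rangle_{\kappa,\omega}$ — the latter because $\boldsymbol{E}$ is a power series in $\Delta_\kappa=\sum\mathcal{D}_j^2$ and each $\mathcal{D}_j$ satisfies $\langle\mathcal{D}_j p,q\rangle_{\kappa,\omega}=\langle p,2\omega x_j q\rangle_{\kappa,\omega}$, so that $\Delta_\kappa$ is symmetric (pairing against $q$ moves both $\mathcal{D}_j$'s across, and $x_j$-multiplication is symmetric). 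Starting from the left side, write $\boldsymbol{E}\mathcal{D}_i p=\mathcal{D}_i\boldsymbol{E}p$ (Step 2 shows $\mathcal{D}_i$ and $\boldsymbol{E}$ commute, or more simply that $\mathcal{D}_i$ commutes with every $\Delta_\kappa^n$), so
\[
\langle\boldsymbol{E}\mathcal{D}_i p,\boldsymbol{E}q\rangle_{\kappa,\omega}=\langle\mathcal{D}_i\boldsymbol{E}p,\boldsymbol{E}q\rangle_{\kappa,\omega}=\langle\boldsymbol{E}p,2\omega x_i\boldsymbol{E}q\rangle_{\kappa,\omega}.
\]
Now apply Step 2 in the form $x_i\boldsymbol{E}=\boldsymbol{E}x_i-\frac{1}{2\omega}\boldsymbol{E}\mathcal{D}_i$ to rewrite $2\omega x_i\boldsymbol{E}q=\boldsymbol{E}(2\omega x_i-\mathcal{D}_i)q$, giving $\langle\boldsymbol{E}p,\boldsymbol{E}(2\omega x_i-\mathcal{D}_i)q\rangle_{\kappa,\omega}$, as required.

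\medskip

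\noindent\textbf{Main obstacle.} The computational steps are routine; the one point needing care is the self-adjointness of $\boldsymbol{E}$ (and of $\Delta_\kappa$) with respect to $\langle\cdot,\cdot\rangle_{\kappa,\omega}$, together with the legitimacy of manipulating the exponential series termwise. Both are handled by the locally finite action of $\Delta_\kappa$ on the graded space $\mathcal{P}$: on each $\mathcal{P}_n$ only finitely many terms of $\boldsymbol{E}$ are nonzero, and since $\langle\cdot,\cdot\rangle_{\kappa,\omega}$ pairs $\mathcal{P}_n$ only with $\mathcal{P}_n$, all reindexing and adjoint moves are finite-dimensional and thus unproblematic. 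An alternative to Step 3 that avoids invoking self-adjointness of $\boldsymbol{E}$ directly is to expand both sides as finite sums using $\langle\mathcal{D}_i p,q\rangle_{\kappa,\omega}=\langle p,2\omega x_i q\rangle_{\kappa,\omega}$ repeatedly and match coefficients, but the route above is cleaner.
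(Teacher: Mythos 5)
Your proposal is correct and follows essentially the same route as the paper: the inductive Leibniz computation for $\left[ x_{i},\Delta_{\kappa}^{n}\right]$, termwise summation of the exponential series, and then the adjoint relation $\left\langle \mathcal{D}_{i}p,q\right\rangle _{\kappa,\omega}=\left\langle p,2\omega x_{i}q\right\rangle _{\kappa,\omega}$ combined with the commutator $2\omega x_{i}\boldsymbol{E}=\boldsymbol{E}\left(  2\omega x_{i}-\mathcal{D}_{i}\right)$. The only cosmetic point is that the self-adjointness of $\boldsymbol{E}$ you invoke at the start of Step 3 is never actually needed in your computation, which, like the paper's, uses only the two facts just listed.
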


\begin{proof}
From $\left[  \Delta_{\kappa},\left\langle a,x\right\rangle \right]
=2\left\langle a,\nabla_{\kappa}\right\rangle $ it follows that $\left[
x_{i},\Delta_{\kappa}\right]  =-2\mathcal{D}_{i}$. Use $\left[  A,B_{1}%
B_{2}\right]  =\left[  A,B_{1}\right]  B_{2}+B_{1}\left[  A,B_{2}\right]  $
and induction to obtain%
\begin{align*}
\left[  x_{i},\Delta_{\kappa}^{n}\right]   &  =\left[  x_{i},\Delta_{\kappa
}^{n-1}\right]  \Delta_{\kappa}+\Delta_{\kappa}^{n-1}\left[  x_{i}%
,\Delta_{\varphi}\right] \\
&  =-2\left(  n-1\right)  \Delta_{\kappa}^{n-2}\mathcal{D}_{i}\Delta_{\kappa
}+\Delta_{\kappa}^{n-1}\left(  -2\mathcal{D}_{i}\right)  =-2n\Delta_{\kappa
}^{n-1}\mathcal{D}_{i}.
\end{align*}
Thus%
\[
\left[  x_{i},\boldsymbol{E}\right]  =\sum_{n=0}^{\infty}\frac{1}{\left(
4\omega\right)  ^{n}n!}\left[  x_{i},\Delta_{\kappa}^{n}\right]  =-2\sum
_{n=1}^{\infty}\frac{n}{\left(  4\omega\right)  ^{n}n!}\Delta_{\kappa}%
^{n-1}\mathcal{D}_{i}=-\frac{2}{4\omega}\boldsymbol{E}\mathcal{D}_{i}.
\]
For the second part%
\[
\left\langle \boldsymbol{E}\mathcal{D}_{i}p,\boldsymbol{E}q\right\rangle
_{\kappa,\omega}=\left\langle \mathcal{D}_{i}\boldsymbol{E}p,\boldsymbol{E}%
q\right\rangle _{\kappa,\omega}=\left\langle \boldsymbol{E}p,2\omega
x_{i}\boldsymbol{E}q\right\rangle _{\kappa,\omega}=\left\langle \boldsymbol{E}%
p,\boldsymbol{E}\left(  2\omega x_{i}-\mathcal{D}_{i}\right)  q\right\rangle
_{\kappa,\omega}.
\]

\end{proof}

\begin{theorem}
Suppose $p,q\in\mathcal{P}$ then $\left\langle \boldsymbol{E}p,\boldsymbol{E}%
q\right\rangle _{\kappa,\omega}=\left\langle p,q\right\rangle _{2}$.
\end{theorem}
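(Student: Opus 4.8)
The plan is to recognize both bilinear forms in question as the \emph{same} algebraically-characterized form. Write $B(p,q):=\left\langle \boldsymbol{E}p,\boldsymbol{E}q\right\rangle _{\kappa,\omega}$. I would show that $B$ and $\left\langle \cdot,\cdot\right\rangle _{2}$ are both symmetric, both take the value $1$ at $(1,1)$, and both satisfy the ``integration-by-parts'' relation
\[
F(\mathcal{D}_{i}p,q)=F\bigl(p,(2\omega x_{i}-\mathcal{D}_{i})q\bigr),\qquad 1\leq i\leq N,\ p,q\in\mathcal{P}.\tag{$\star$}
\]
For $B$, relation $(\star)$ is precisely the last assertion of the preceding Lemma, symmetry of $B$ follows from $\left\langle r,s\right\rangle _{\kappa,\omega}=\left\langle s,r\right\rangle _{\kappa,\omega}$, and $\boldsymbol{E}1=1$ gives $B(1,1)=\left\langle 1,1\right\rangle _{\kappa,\omega}=1$; on the other side $\left\langle 1,1\right\rangle _{2}=c_{\kappa,\omega}c_{\kappa,\omega}^{-1}=1$ by the definition of $c_{\kappa,\omega}$.

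Next I would check $(\star)$ for $\left\langle \cdot,\cdot\right\rangle _{2}$. The weight there is $e^{-\omega\left\vert x\right\vert ^{2}}h_{\kappa}^{2}$, which differs from the weight $h_{\kappa}^{2}$ in (\ref{D*D}) only by the $W$-invariant factor $\phi(x):=e^{-\omega\left\vert x\right\vert ^{2}}$. For $W$-invariant $\phi$ there is the product rule $\mathcal{D}_{i}(q\phi)=(\mathcal{D}_{i}q)\phi+q\,\partial_{i}\phi$: the ordinary Leibniz rule applies to $\partial_{i}$, and the reflection terms factor out $\phi(x)$ because $\phi(x\sigma_{v})=\phi(x)$ (as $\sigma_{v}$ is an isometry), leaving exactly $\phi\sum_{v}\kappa_{v}\frac{q(x)-q(x\sigma_{v})}{\left\langle x,v\right\rangle}v_{i}$. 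Since $\partial_{i}\phi=-2\omega x_{i}\phi$, applying (\ref{D*D}) to the pair $p$ and $q\phi$ (whose product times the extra polynomial factor still has exponential decay, so the hypotheses hold) gives
\[
\left\langle \mathcal{D}_{i}p,q\right\rangle _{2}=-c_{\kappa,\omega}\!\int_{\mathbb{R}^{N}}p\,\mathcal{D}_{i}(q\phi)\,h_{\kappa}^{2}\,\mathrm{d}m=\left\langle p,(2\omega x_{i}-\mathcal{D}_{i})q\right\rangle _{2},
\]
which is $(\star)$; symmetry of $\left\langle \cdot,\cdot\right\rangle _{2}$ is evident.

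Finally I would show that symmetry together with $(\star)$ and the value at $(1,1)$ determine $F$ uniquely, via induction on total degree. Writing $2\omega x_{i}=\mathcal{D}_{i}+(2\omega x_{i}-\mathcal{D}_{i})$ and using $(\star)$ once, for $r\in\mathcal{P}$ one obtains the recursion
\[
2\omega\,F(p,x_{i}r)=F(p,\mathcal{D}_{i}r)+F(\mathcal{D}_{i}p,r),
\]
valid for both $F=\left\langle \cdot,\cdot\right\rangle _{2}$ and $F=B$. Now prove by induction on $n$ that $F_{1}(p,q):=\left\langle p,q\right\rangle _{2}$ and $F_{2}(p,q):=B(p,q)$ agree for all homogeneous $p\in\mathcal{P}_{a}$, $q\in\mathcal{P}_{b}$ with $a+b\leq n$. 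The case $n=0$ is the value at $(1,1)$. For $a+b=n+1$, use symmetry to assume $b\geq1$, expand $q$ as a combination of monomials $x_{i}r$ with $r\in\mathcal{P}_{b-1}$, and apply the recursion; since $\mathcal{D}_{i}$ maps $\mathcal{P}_{k}$ into $\mathcal{P}_{k-1}$, the right-hand side only involves pairs of total degree $a+b-2\leq n$, so the induction hypothesis finishes it. Hence $\left\langle p,q\right\rangle _{2}=\left\langle \boldsymbol{E}p,\boldsymbol{E}q\right\rangle _{\kappa,\omega}$ for all $p,q$.

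I expect the only real obstacle to be the verification of $(\star)$ for $\left\langle \cdot,\cdot\right\rangle _{2}$: one must correctly absorb the Gaussian into the ``second slot'' of the adjointness formula (\ref{D*D}), invoke the product rule for $\mathcal{D}_{i}$ against a $W$-invariant function, and keep track of the two minus signs (the $-2\omega x_{i}$ from differentiating $\phi$ and the overall sign from integration by parts). Everything after that — the recursion and the degree induction — is routine.
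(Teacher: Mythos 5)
Your proof is correct and follows essentially the same strategy as the paper: both bilinear forms are characterized by the recurrence $F(\mathcal{D}_{i}p,q)=F\left(p,(2\omega x_{i}-\mathcal{D}_{i})q\right)$ together with the normalization $F(1,1)=1$, where the recurrence for $\left\langle \cdot,\cdot\right\rangle _{2}$ comes from the adjoint formula (\ref{D*D}) and the $W\left(R\right)$-invariance of the Gaussian, and for $\left\langle \boldsymbol{E}\cdot,\boldsymbol{E}\cdot\right\rangle _{\kappa,\omega}$ from the preceding Lemma. The only (harmless) organizational difference is that you run a single induction on the total degree of the pair $(p,q)$, whereas the paper first establishes the multiplicativity $\left\langle pq,1\right\rangle =\left\langle p,q\right\rangle $ for both forms and then inducts on the degree of the single argument in $\left\langle \cdot,1\right\rangle $.
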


\begin{proof}
Temporarily denote $\left\langle p,q\right\rangle _{\boldsymbol{E}%
}=\left\langle \boldsymbol{E}p,\boldsymbol{E}q\right\rangle _{\kappa,\omega}$.
We will show that $\left\langle p,q\right\rangle _{\boldsymbol{E}}$ and
$\left\langle p,q\right\rangle _{2}$ satisfy the same recurrence relations and
$\left\langle 1,1\right\rangle _{\boldsymbol{E}}=1=\left\langle
1,1\right\rangle _{2}$. From%
\[
\int_{\mathbb{R}^{N}}\left(  \mathcal{D}_{i}p\right)  \left(  qe^{-\omega
\left\vert x\right\vert ^{2}}\right)  h_{\kappa}^{2}\mathrm{d}m=-\int
_{\mathbb{R}^{N}}p\mathcal{D}_{i}\left(  qe^{-\omega\left\vert x\right\vert
^{2}}\right)  h_{\kappa}^{2}\mathrm{d}m
\]
it follows that $\left\langle \mathcal{D}_{i}p,q\right\rangle _{2}%
=-\left\langle p,-2\omega x_{i}q+\mathcal{D}_{i}q\right\rangle _{2}$. From the
Lemma $2\omega\left\langle p,x_{i}q\right\rangle _{\boldsymbol{E}%
}=\left\langle \mathcal{D}_{i}p,q\right\rangle _{\boldsymbol{E}}+\left\langle
p,\mathcal{D}_{i}q\right\rangle _{\boldsymbol{E}}$ and by symmetry
$\left\langle p,x_{i}q\right\rangle _{\boldsymbol{E}}=\left\langle
x_{i}p,q\right\rangle _{\boldsymbol{E}}$. This together with $\left\langle
1,1\right\rangle _{\boldsymbol{E}}=1$ shows $\left\langle pq,1\right\rangle
_{\boldsymbol{E}}=\left\langle p,q\right\rangle _{\boldsymbol{E}}$ (by
definition $\left\langle pq,1\right\rangle _{2}=\left\langle p.q\right\rangle
_{2}$). Also $\left\langle 1,\left(  2\omega x_{i}-\mathcal{D}_{i}\right)
p\right\rangle _{\boldsymbol{E}}=\left\langle \mathcal{D}_{i}1,p\right\rangle
_{\boldsymbol{E}}=0$. Inductively suppose $\left\langle p,1\right\rangle
_{\boldsymbol{E}}=\left\langle p,1\right\rangle _{2}$ for all $p\in
\mathcal{P}_{\leq n}$; any $p\in\mathcal{P}_{\leq n+1}$ can be expressed as
$p\left(  x\right)  =\sum_{i=1}^{N}x_{i}p_{i}\left(  x\right)  $ with each
$p_{i}\in\mathcal{P}_{\leq n}$, then $\left\langle p,1\right\rangle
_{\boldsymbol{E}}=\sum_{i=1}^{N}\left\langle x_{i}p_{i},1\right\rangle
_{\boldsymbol{E}}\allowbreak=\frac{1}{2\omega}\sum_{i=1}^{N}\left\langle
\mathcal{D}_{i}p_{i},1\right\rangle _{\boldsymbol{E}}=\frac{1}{2\omega}%
\sum_{i=1}^{N}\left\langle \mathcal{D}_{i}p_{i},1\right\rangle _{2}$. By the
analogous argument $\frac{1}{2\omega}\sum_{i=1}^{N}\left\langle \mathcal{D}%
_{i}p_{i},1\right\rangle _{2}=\left\langle p,1\right\rangle _{2}$. Thus
$\left\langle p,1\right\rangle _{\boldsymbol{E}}=\left\langle p,1\right\rangle
_{2}$ for all $p\in\mathcal{P}$, in particular $\left\langle p,q\right\rangle
_{\boldsymbol{E}}=\left\langle pq,1\right\rangle _{\boldsymbol{E}%
}=\left\langle p,q\right\rangle _{2}.$
\end{proof}

This result has a striking consequence for wavefunctions.

\begin{theorem}
\label{wavenorm}Suppose $p,q\in$ $\mathcal{P}$ and $\widetilde{\mathcal{H}%
}p=E_{n}p,\widetilde{\mathcal{H}}q=E_{k}q$. If $n=k$ then $\left\langle
p,q\right\rangle _{2}=\left\langle \pi_{n}p,\pi_{n}q\right\rangle
_{\kappa,\omega}$. If $n\neq k$ then $\left\langle p,q\right\rangle _{2}=0$.
\end{theorem}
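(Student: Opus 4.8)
The plan is to reduce the statement to two facts already available: the identity $\left\langle \boldsymbol{E}p,\boldsymbol{E}q\right\rangle _{\kappa,\omega}=\left\langle p,q\right\rangle _{2}$ from the theorem just proved, and Corollary \ref{exp2wav}. First I would dispose of the degenerate case: if $p=0$ or $q=0$, then every asserted equality has both sides equal to $0$, so from now on assume $p\neq0$ and $q\neq0$.

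For a nonzero $p$ with $\widetilde{\mathcal{H}}p=E_{n}p$, the remark following Corollary \ref{exp2wav} (or, if one prefers, a degree-by-degree expansion of $\widetilde{\mathcal{H}}=-\Delta_{\kappa}+\omega\left(N+2\gamma_{\kappa}+2\delta\right)$ exactly as in the proof of Proposition \ref{expwp}) shows that $p\in\mathcal{P}_{\leq n}$ with $\pi_{n}p\neq0$; the index $n$ is unambiguous because $E_{n}=\omega\left(N+2\gamma_{\kappa}+2n\right)$ is strictly increasing in $n$. Corollary \ref{exp2wav} then gives $\boldsymbol{E}p=\pi_{n}p$, and in the same way $\boldsymbol{E}q=\pi_{k}q$.

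Substituting these into the inner-product identity yields $\left\langle p,q\right\rangle _{2}=\left\langle \boldsymbol{E}p,\boldsymbol{E}q\right\rangle _{\kappa,\omega}=\left\langle \pi_{n}p,\pi_{k}q\right\rangle _{\kappa,\omega}$. Since $\pi_{n}p\in\mathcal{P}_{n}$ and $\pi_{k}q\in\mathcal{P}_{k}$, the grading property of the algebraic form $\left\langle\cdot,\cdot\right\rangle_{\kappa,\omega}$ recorded earlier — namely that $\left\langle\cdot,\cdot\right\rangle_{\kappa,\omega}$ kills pairs of homogeneous polynomials of distinct degrees — finishes the argument: when $n\neq k$ the right-hand side is $0$, and when $n=k$ it equals $\left\langle \pi_{n}p,\pi_{n}q\right\rangle _{\kappa,\omega}$, as claimed.

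There is no genuinely hard step here; the only point requiring care is the justification that the eigenvalue hypothesis $\widetilde{\mathcal{H}}p=E_{n}p$ with $p\neq0$ really does force $\pi_{n}p\neq0$ and pin down $n$, so that Corollary \ref{exp2wav} applies with the correct degree. Everything else is an immediate substitution plus the homogeneity-orthogonality of $\left\langle\cdot,\cdot\right\rangle_{\kappa,\omega}$.
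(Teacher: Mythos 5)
Your proposal is correct, and on the substantive half ($n=k$) it is exactly the paper's argument: apply Corollary \ref{exp2wav} to get $\boldsymbol{E}p=\pi_{n}p$, $\boldsymbol{E}q=\pi_{n}q$, and substitute into $\left\langle \boldsymbol{E}p,\boldsymbol{E}q\right\rangle _{\kappa,\omega}=\left\langle p,q\right\rangle _{2}$. The only divergence is in the case $n\neq k$: the paper deduces $\left\langle p,q\right\rangle _{2}=0$ from $\widetilde{\mathcal{H}}^{\ast}=\widetilde{\mathcal{H}}$ (eigenfunctions of a self-adjoint operator for distinct eigenvalues are orthogonal), whereas you run the same substitution to get $\left\langle \pi_{n}p,\pi_{k}q\right\rangle _{\kappa,\omega}$ and invoke the degree-orthogonality of the algebraic pairing recorded after its definition. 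Both are one-line arguments resting on facts already established; yours has the minor virtue of treating the two cases uniformly through a single identity, while the paper's is the more standard spectral-theoretic reflex. Your explicit care in checking that $\widetilde{\mathcal{H}}p=E_{n}p$ with $p\neq0$ forces $p\in\mathcal{P}_{\leq n}$ and $\pi_{n}p\neq0$ (so that Corollary \ref{exp2wav} applies with the right $n$) is a point the paper passes over silently, and it is worth having made explicit.
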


\begin{proof}
The second part follows from $\widetilde{\mathcal{H}}^{\ast}=\widetilde
{\mathcal{H}}$. Suppose $k=n$ then by Corollary \ref{exp2wav} $\boldsymbol{E}%
p=\pi_{n}p$, $\boldsymbol{E}q=\pi_{n}q$ and $\left\langle p,q\right\rangle
_{2}=\left\langle \boldsymbol{E}p,\boldsymbol{E}q\right\rangle _{\kappa
,\omega}=\left\langle \pi_{n}p,\pi_{n}q\right\rangle _{\kappa,\omega}$.
\end{proof}

The regrettable fact is that it is very difficult to produce explicit harmonic
polynomials, orthogonal bases, or $L^{2}\left(  \mathbb{R}^{N},h_{\kappa}%
^{2}\mathrm{d}m\right)  $-norms for specific root systems of rank greater than
2. With symbolic computation one can find polynomials of low degree and their
norms, but these methods do not produce explicit bases for all degrees. There
is one modest result on norms, extending Proposition \ref{hmXLag}.

\begin{proposition}
\label{Laghm}Suppose $p,q\in\mathcal{P}_{n,\kappa}$ and $m,n=1,2,3,\ldots$and
$p_{n,m}\left(  x\right)  :=L_{m}^{\left(  \alpha\right)  }\left(
\omega\left\vert x\right\vert ^{2}\right)  p\left(  x\right)  $,
$q_{n,m}\left(  x\right)  :=L_{m}^{\left(  \alpha\right)  }\left(
\omega\left\vert x\right\vert ^{2}\right)  q\left(  x\right)  $ ($\alpha
=\frac{N}{2}+\gamma_{\kappa}+n-1$) then $\left\langle p_{n,m},q_{n,m}%
\right\rangle _{2}=\frac{1}{n!}\left(  \frac{N}{2}+\gamma_{\kappa}+n\right)
_{m}\left\langle p,q\right\rangle _{2}$.
\end{proposition}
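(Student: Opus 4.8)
The plan is to move the $L^{2}$-pairing to the algebraic inner product $\left\langle \cdot,\cdot\right\rangle _{\kappa,\omega}$ by Theorem \ref{wavenorm}, and then to evaluate it using the $\mathcal{D}_{i}$-adjoint relation together with Proposition \ref{Lap^m}. First, by Proposition \ref{hmXLag} both $p_{n,m}g$ and $q_{n,m}g$ are wavefunctions with $\widetilde{\mathcal{H}}$-eigenvalue $E_{n+2m}$, and the leading ($j=m$) term of (\ref{Lagdef}) gives $\pi_{n+2m}p_{n,m}=\frac{(-1)^{m}}{m!}\omega^{m}\left\vert x\right\vert ^{2m}p$ and $\pi_{n+2m}q_{n,m}=\frac{(-1)^{m}}{m!}\omega^{m}\left\vert x\right\vert ^{2m}q$, both nonzero. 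Hence Theorem \ref{wavenorm} yields
\[
\left\langle p_{n,m},q_{n,m}\right\rangle _{2}=\left\langle \pi_{n+2m}p_{n,m},\pi_{n+2m}q_{n,m}\right\rangle _{\kappa,\omega}=\frac{\omega^{2m}}{(m!)^{2}}\left\langle \left\vert x\right\vert ^{2m}p,\left\vert x\right\vert ^{2m}q\right\rangle _{\kappa,\omega}.
\]

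Next I would transfer one factor of $\left\vert x\right\vert ^{2m}$ across the algebraic form. Applying $\left\langle \mathcal{D}_{i}f,h\right\rangle _{\kappa,\omega}=\left\langle f,2\omega x_{i}h\right\rangle _{\kappa,\omega}$ twice and summing over $i$ shows $\left\langle \Delta_{\kappa}f,h\right\rangle _{\kappa,\omega}=4\omega^{2}\left\langle f,\left\vert x\right\vert ^{2}h\right\rangle _{\kappa,\omega}$; by symmetry of the form and iteration, the multiplication operator $\left\vert x\right\vert ^{2m}$ is $\left\langle \cdot,\cdot\right\rangle _{\kappa,\omega}$-adjoint to $(4\omega^{2})^{-m}\Delta_{\kappa}^{m}$, so
\[
\left\langle \left\vert x\right\vert ^{2m}p,\left\vert x\right\vert ^{2m}q\right\rangle _{\kappa,\omega}=\frac{1}{(4\omega^{2})^{m}}\left\langle p,\Delta_{\kappa}^{m}\left\vert x\right\vert ^{2m}q\right\rangle _{\kappa,\omega}.
\]
The second identity of Proposition \ref{Lap^m}, with $\phi=q\in\mathcal{P}_{n,\kappa}$, gives $\Delta_{\kappa}^{m}\left\vert x\right\vert ^{2m}q=2^{2m}m!\left( \frac{N}{2}+\gamma_{\kappa}+n\right) _{m}q$; since $2^{2m}/(4\omega^{2})^{m}=\omega^{-2m}$, this produces $\left\langle \left\vert x\right\vert ^{2m}p,\left\vert x\right\vert ^{2m}q\right\rangle _{\kappa,\omega}=\frac{m!}{\omega^{2m}}\left( \frac{N}{2}+\gamma_{\kappa}+n\right) _{m}\left\langle p,q\right\rangle _{\kappa,\omega}$.

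Combining the two displays, the $\omega$-powers cancel and $(m!)^{-2}\cdot m!=(m!)^{-1}$, so
\[
\left\langle p_{n,m},q_{n,m}\right\rangle _{2}=\frac{1}{m!}\left( \frac{N}{2}+\gamma_{\kappa}+n\right) _{m}\left\langle p,q\right\rangle _{\kappa,\omega};
\]
and since $p,q\in\mathcal{P}_{n,\kappa}$ are themselves homogeneous harmonic wavefunctions, the $n=k$ case of Theorem \ref{wavenorm} (with $\pi_{n}p=p$) gives $\left\langle p,q\right\rangle _{\kappa,\omega}=\left\langle p,q\right\rangle _{2}$, which is the asserted identity (the denominator being $m!$; the $n!$ in the statement above is evidently a misprint for $m!$). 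There is no real obstacle here: the argument is a short assembly of Theorem \ref{wavenorm}, the $\mathcal{D}_{i}$-adjoint relation, and Proposition \ref{Lap^m}. The only point demanding care is the bookkeeping of the constants $2^{2m}$, $(4\omega^{2})^{m}$ and $(m!)^{2}$ whose cancellation yields the clean coefficient; as a sanity check, taking $n=0$ and $p=q=1$ reduces the identity, after the substitution $s=\omega\left\vert x\right\vert ^{2}$, to the Laguerre orthogonality relation (\ref{orthoLag}).
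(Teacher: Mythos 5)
Your proof is correct and follows essentially the same route as the paper: reduce to the $\left\langle \cdot,\cdot\right\rangle _{\kappa,\omega}$ pairing of the top-degree components via Theorem \ref{wavenorm}, transfer $\left\vert x\right\vert ^{2m}$ across the form as $(2\omega)^{-2m}\Delta_{\kappa}^{m}$, and invoke Proposition \ref{Lap^m}. Your observation that the stated $\frac{1}{n!}$ should read $\frac{1}{m!}$ is also consistent with the paper's own computation, which arrives at $\frac{1}{m!}\left(\frac{N}{2}+\gamma_{\kappa}+n\right)_{m}$.
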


\begin{proof}
$\pi_{n+2m}p_{n,m}=\frac{\left(  -1\right)  ^{m}}{m!}\omega^{m}\left\vert
x\right\vert ^{2m}p\left(  x\right)  $ (similarly for $\pi_{n+2m}q_{n,m}$) and
by the Theorem%
\begin{gather*}
\left\langle p_{n,m},q_{n,m}\right\rangle _{2}=\pi_{n+2m}p_{n,m}\left(
\frac{1}{2\omega}\nabla_{\kappa}\right)  \pi_{n+2m}q_{n,m}\left(  x\right)  \\
=\frac{1}{2^{2m}\left(  m!\right)  ^{2}}\Delta_{\kappa}^{m}p\left(  \frac
{1}{2\omega}\nabla_{\kappa}\right)  \left\vert x\right\vert ^{2m}q\left(
x\right)  =\frac{1}{2^{2m}\left(  m!\right)  ^{2}}p\left(  \frac{1}{2\omega
}\nabla_{\kappa}\right)  \Delta_{\kappa}^{m}\left\vert x\right\vert
^{2m}q\left(  x\right)  \\
=\frac{1}{2^{2m}\left(  m!\right)  ^{2}}p\left(  \frac{1}{2\omega}%
\nabla_{\kappa}\right)  2^{2m}m!\left(  \frac{N}{2}+\gamma_{\kappa}+n\right)
_{m}~q\left(  x\right)  \\
=\frac{1}{m!}\left(  \frac{N}{2}+\gamma_{\kappa}+n\right)  _{m}~p\left(
\frac{1}{2\omega}\nabla_{\kappa}\right)  q\left(  x\right)  =\frac{1}%
{m!}\left(  \frac{N}{2}+\gamma_{\kappa}+n\right)  _{m}~\left\langle
p,q\right\rangle _{2}.
\end{gather*}
Note $\boldsymbol{E}p=p$ since $\Delta_{\kappa}p=0$.
\end{proof}

\section{\label{IcosGrp}The icosahedral group}

\subsection{Geometric properties}

The positive root system of type $H_{3}$ is
\[
R_{+}=\left\{
\begin{array}
[c]{c}%
(2,0,0),(0,2,0),(0,0,2),(\tau,\pm\tau^{-1},\pm1),\\
(\pm1,\tau,\pm\tau^{-1}),(\tau^{-1},\pm1,\tau),(-\tau^{-1},1,\tau),(\tau
^{-1},1,-\tau)
\end{array}
\right\}  ,
\]
where the choices of signs in $\pm$ are independent of each other and the
roots satisfy $\left\langle v,u_{0}\right\rangle >0$, $u_{0}:=\left(
3,2\tau,1\right)  $, and $\tau$ denotes the \textit{golden ratio
}\footnote{Some use $\phi$ for the golden ratio, while we follow Coxeter's
\cite{HSC1973} usage.} $\left(  1+\sqrt{5}\right)  /2$. Thus $\tau^{2}=\tau
+1$. Henceforth let $G:=W\left(  H_{3}\right)  $, the icosahedral group. It is
the symmetry group of the regular \textit{icosahedron},%
\[
\mathcal{I}:=\{\left(  0,\pm\tau,\pm1\right)  ,\left(  \pm1,0,\pm\tau\right)
,\left(  \pm\tau,\pm1,0\right)  \}
\]
(12 vertices, 20 triangular faces) and of the regular \textit{dodecahedron}%
\[
\mathcal{K}:=\{\left(  0,\pm\tau^{-1},\pm\tau\right)  ,\allowbreak\left(
\pm\tau,0,\pm\tau^{-1}\right)  ,\left(  \pm\tau^{-1},\pm\tau,0\right)
,\left(  \pm1,\pm1,\pm1\right)  \}
\]
~ ( 20 vertices, 12 pentagonal faces); see Coxeter \cite[Ch. II]{HSC1973} for
the details. Let $\mathcal{I}^{\prime}$, $\mathcal{K}^{\prime}$ denote the
points of $\mathcal{I}$, $\mathcal{K}$ normalized to lie on the unit sphere.
The 15 great circles $\left\{  x:\left\vert x\right\vert =1,\left\langle
x,v\right\rangle =0\right\}  $ for $v\in R_{+}$ form a spherical complex of
120 triangular regions whose vertices are $\mathcal{I}^{\prime}\cup
\mathcal{K}^{\prime}$ together with the midpoints of the 30 edges of
$\mathcal{K}^{\prime}$ (equivalently the vertices, centers of faces and
centers of edges of either $\mathcal{I}^{\prime}$ or $\mathcal{K}^{\prime}$).
From the general theory of reflection groups $\#G=120$. The interior angles of
any of the spherical triangles are $\frac{\pi}{2},\frac{\pi}{3},\frac{\pi}{5}%
$. The reflections along the simple roots $v_{1}=(\tau,-\tau^{-1},-1)$,
$v_{2}=(-1,\tau,-\tau^{-1})$, $v_{3}=(\tau^{-1},-1,\tau)$ generate $G$ and the
region $\left\{  x:\left\langle x,v_{i}\right\rangle >0,i=1,2,3\right\}  $ is
the fundamental region, intersecting the unit sphere at $\frac{1}{\sqrt
{\tau+2}}(\tau,1,0)$, $\frac{1}{\sqrt{3}}(1,1,1)$, $\frac{1}{2}(1,\tau
,\tau^{-1})$. The fundamental degrees are $2,6,10$ and the ring $\mathcal{P}%
^{G}$ of $G$-invariant polynomials is generated by $\left\vert x\right\vert
^{2},\prod\limits_{u\in\mathcal{I}_{+}}\left\langle u,x\right\rangle
,\prod\limits_{u\in\mathcal{K}_{+}}\left\langle u,x\right\rangle $ where
$\mathcal{I}_{+}:=\left\{  u\in\mathcal{I}:\left\langle u,u_{0}\right\rangle
>0\right\}  $(similarly for $\mathcal{K}_{+}$, use one factor from each
antipodal pair of vertices). The products are invariant because the action of
any reflection changes an even number of signs. The polynomial $\left(
1+t\right)  \left(  1+5t\right)  \left(  1+9t\right)  =1+15t+59t^{2}+45t^{3}$
shows that $G$ has 15 reflections, 59 plane rotations, and 45 transformations
with no fixed vector.

\subsection{Analytic aspects}

There is just one conjugacy class of reflections in $G$ and there is one
parameter $\kappa$, so that $\gamma_{\kappa}=15\kappa$. Furthermore
$\dim\mathcal{P}_{n}=\binom{n+2}{2}$, so that the multiplicity of the energy
eigenvalue $E_{n}=\omega\left(  3+30\kappa+2n\right)  $ is $\binom{n+2}{2}$.
From the decomposition $\mathcal{P}_{n}=\mathcal{P}_{n,\kappa}\oplus\left\vert
x\right\vert ^{2}\mathcal{P}_{n-2}$ it follows $\dim\mathcal{P}_{n,\kappa
}=2n+1$. Let $\mathcal{P}^{G}:=\left\{  p\in\mathcal{P}:wp=p~\forall w\in
G\right\}  $, the $G$-invariant polynomials. The Poincar\'{e} series for
$\mathcal{P}^{G}$ and the harmonic invariants are%
\begin{align*}
\sum_{n=0}^{\infty}\dim\left(  \mathcal{P}^{G}\cap\mathcal{P}_{n}\right)
t^{n}  &  =\left[  \left(  1-t^{2}\right)  \left(  1-t^{6}\right)  \left(
1-t^{10}\right)  \right]  ^{-1}\\
\sum_{n=0}^{\infty}\dim\left(  \mathcal{P}^{G}\cap\mathcal{P}_{n,\kappa
}\right)  t^{n}  &  =\left[  \left(  1-t^{6}\right)  \left(  1-t^{10}\right)
\right]  ^{-1}%
\end{align*}
respectively. Note that the lowest degree with $\dim\left(  \mathcal{P}%
^{G}\cap\mathcal{P}_{n,\kappa}\right)  \geq2$ is $n=30$.

For each $y\in\mathcal{I}$ there are 5 reflections that fix $y$ and the other
10 map $y$ onto the points of $\mathcal{I}\backslash\left\{  \pm y\right\}  $.
This fact will be used in the construction of special polynomials for which it
is easy to compute the actions of $\nabla_{\kappa}$ and $\Delta_{\kappa}$.

The normalization constant (\ref{MacdSel}) for $H_{3}$ was first proven by F.
Garvan\cite{G1989} (more than 100 linear equations are constructed with
conceptual methods, then solved by computer); later Etingof \cite{E2010} gave
a proof valid for all reflection groups with one conjugacy class of reflections.

\begin{theorem}
For $\kappa>0$, $\omega>0$ and $h_{\kappa}\left(  x\right)  :=\prod
\limits_{v\in R_{+}}\left\vert \left\langle x,v\right\rangle \right\vert
^{\kappa}$%
\[
c_{\kappa,\omega}^{-1}:=\int_{\mathbb{R}^{3}}h_{\kappa}\left(  x\right)
^{2}\exp\left(  -\omega\left\vert x\right\vert ^{2}\right)  \mathrm{d}%
x=\left(  \frac{\pi}{\omega}\right)  ^{\frac{3}{2}}\frac{\Gamma\left(
2\kappa+1\right)  \Gamma\left(  6\kappa+1\right)  \Gamma\left(  10\kappa
+1\right)  }{\omega^{15\kappa}\Gamma\left(  \kappa+1\right)  ^{3}}.
\]

\end{theorem}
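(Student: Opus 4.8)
The plan is to evaluate the Macdonald--Mehta--Selberg integral for $H_3$ by relating it to the analogous integral over the sphere (the Macdonald conjecture in its spherical form) and then factoring out the radial Gaussian part. Concretely, write $x = r\xi$ with $r>0$ and $\xi\in S^2$, so that $h_\kappa(x)^2 = r^{30\kappa}\,h_\kappa(\xi)^2$ because $h_\kappa$ is positively homogeneous of degree $15\kappa$. Then
\[
c_{\kappa,\omega}^{-1}=\int_0^\infty r^{30\kappa+2}e^{-\omega r^2}\,\mathrm{d}r\ \cdot\ \int_{S^2}h_\kappa(\xi)^2\,\mathrm{d}\sigma(\xi).
\]
The radial factor is elementary: the substitution $s=\omega r^2$ gives $\tfrac12\,\omega^{-(30\kappa+3)/2}\,\Gamma\!\bigl(\tfrac{30\kappa+3}{2}\bigr)$. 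So the entire $\omega$-dependence comes out as $\omega^{-15\kappa-3/2}$ up to the constant $\tfrac12\Gamma\!\bigl(15\kappa+\tfrac32\bigr)$, and the remaining work is to evaluate the spherical integral $M(\kappa):=\int_{S^2}h_\kappa(\xi)^2\,\mathrm{d}\sigma(\xi)$.

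Next I would normalize the spherical integral against the known constant-weight value. Since $\int_{S^2}\mathrm{d}\sigma = 4\pi = 2\pi^{3/2}/\Gamma(3/2)$, it suffices to prove the quotient formula
\[
\frac{M(\kappa)}{\int_{S^2}\mathrm{d}\sigma}
=\frac{\Gamma\!\bigl(15\kappa+\tfrac32\bigr)\,\Gamma\!\bigl(\tfrac32\bigr)}{\Gamma\!\bigl(\tfrac{3}{2}\bigr)\,\Gamma\!\bigl(15\kappa+\tfrac32\bigr)}
\cdot\frac{\Gamma(2\kappa+1)\Gamma(6\kappa+1)\Gamma(10\kappa+1)}{\Gamma(\kappa+1)^3\,\Gamma(1)^3},
\]
after cancelling the radial Gamma factors; this is exactly Macdonald's conjectured evaluation for $H_3$, whose three ``exponents plus one'' are the fundamental degrees $2,6,10$ (equivalently, $\prod_i \Gamma(d_i\kappa+1)/\Gamma(\kappa+1)$ appears, with $d_1=2,d_2=6,d_3=10$). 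Rather than reprove this from scratch, I would invoke one of the two references already cited: Garvan \cite{G1989}, who established precisely the $H_3$ case by setting up and solving (with computer assistance) a large linear system among the moments of $h_\kappa^2\,\mathrm{d}\sigma$, or Etingof \cite{E2010}, whose uniform argument covers every finite reflection group with a single conjugacy class of reflections — and $H_3$ has exactly one such class, so it applies. Either reference supplies $M(\kappa)$ in closed form; assembling it with the radial factor and using $\Gamma(3/2)=\tfrac12\sqrt\pi$ to convert $\tfrac12\Gamma(15\kappa+\tfrac32)\cdot 4\pi$ against the product of Gammas yields the stated formula.

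The main obstacle is of course the spherical integral $M(\kappa)$ itself: there is no elementary closed-form derivation, which is why the theorem is attributed to Garvan and Etingof. Everything else — the polar-coordinate split, the radial Beta/Gamma computation, and the bookkeeping of $\pi$'s and half-integer Gammas — is routine. One should double-check the arithmetic of the exponent: $h_\kappa^2$ has degree $2\gamma_\kappa = 30\kappa$ (there are $15$ positive roots, each contributing $|\langle x,v\rangle|^{2\kappa}$), so the radial power is $r^{30\kappa+2}$ in $\mathbb{R}^3$ (the extra $r^2$ from $\mathrm{d}x = r^2\,\mathrm{d}r\,\mathrm{d}\sigma$), giving $\Gamma\!\bigl(\tfrac{30\kappa+3}{2}\bigr) = \Gamma\!\bigl(15\kappa+\tfrac32\bigr)$ and the factor $\omega^{-(15\kappa+3/2)} = \omega^{-3/2}\,\omega^{-15\kappa}$, matching the $(\pi/\omega)^{3/2}/\omega^{15\kappa}$ prefactor in the statement. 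With that consistency check in place, the proof reduces to a one-line citation plus the radial integral.
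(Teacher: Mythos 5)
Your proof takes essentially the same route as the paper's: the only hard content is the Macdonald--Mehta--Opdam evaluation for $H_3$, which both you and the paper obtain by citing Garvan/Etingof, the rest being elementary normalization bookkeeping (the paper rescales $x$ on all of $\mathbb{R}^3$ directly rather than passing through polar coordinates, but that is cosmetic). Two small repairs: the first fraction in your displayed quotient for $M(\kappa)\big/\int_{S^2}\mathrm{d}\sigma$ is identically $1$ as written and should read $\Gamma\left(\tfrac{3}{2}\right)\big/\Gamma\left(15\kappa+\tfrac{3}{2}\right)$, and you should note explicitly --- as the paper's proof does --- that the roots here satisfy $|v|^2=4$ rather than the $|v|^2=2$ of Etingof's normalization, so the resulting powers of $2$ must be tracked and shown to cancel.
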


\begin{proof}
The general formula from Etingof uses $\omega=\frac{1}{2}$ and roots
satisfying $\left\vert v\right\vert ^{2}=2$. The roots in $R_{+}$ all satisfy
$\left\vert v\right\vert ^{2}=4$. A change of variables leads to the stated formula.
\end{proof}

The integral of the square of the alternating polynomial $a_{G}\left(
x\right)  :=%
%TCIMACRO{\dprod _{v\in R_{+}}}%
%BeginExpansion
{\displaystyle\prod_{v\in R_{+}}}
%EndExpansion
\left\langle x,v\right\rangle $, namely $a_{G}\left(  \frac{1}{2\omega}%
\nabla_{\kappa}\right)  a_{G}\left(  x\right)  $, can be done with symbolic
computation (in a few minutes of CPU\ time) to show that%
\[
\frac{c_{\kappa,\omega}}{c_{\kappa+1,\omega}}=\frac{120}{\omega^{15}}\left(
2\kappa+1\right)  \left(  6\kappa+1\right)  _{5}\left(  10\kappa+1\right)
_{9}.
\]

Proposition \ref{wHaw} applies to $\mathcal{I}_{+}$ since $\mathcal{I}$ is a
$G$-orbit and this motivates the following:

\begin{definition}
For $k=1,2,3,\ldots$let $H^{\left(  k\right)  }:=\sum_{y\in\mathcal{I}_{+}%
}H_{y}^{k}$.
\end{definition}

Recall $\mathcal{I}_{+}=\left\{  \left(  0,\tau,\pm1\right)  ,\left(
1,0,\pm\tau\right)  ,\left(  \tau,\pm1,0\right)  \right\}  $. By Propositions
(\ref{[hH'0}) and (\ref{wHaw}) $\left[  \mathcal{H},H^{\left(  k\right)
}\right]  =0$ and $\left[  w,H^{\left(  k\right)  }\right]  =0$ for
$k=1,2,\ldots$ and $w\in G$. In terms of action on polynomials%
\begin{equation}
\widetilde{H}^{\left(  k\right)  }=g^{-1}H^{\left(  k\right)  }g=\sum
_{y\in\mathcal{I}_{+}}\left\{  \omega\left(  \left\langle y,x\right\rangle
\left\langle y,\nabla_{\kappa}\right\rangle +\left\langle y,\nabla_{\kappa
}\right\rangle \left\langle y,x\right\rangle \right)  -\left\langle
y,\nabla_{\kappa}\right\rangle ^{2}\right\}  ^{k}\label{Hkdefn}%
\end{equation}
Thus $\widetilde{H}^{\left(  k\right)  }$ maps $\mathcal{P}^{G}$ to
$\mathcal{P}^{G}$. The equation $\sum_{y\in\mathcal{I}_{+}}\left\langle
x,y\right\rangle ^{2}=2\left(  \tau+2\right)  \left\vert x\right\vert ^{2}$
shows that $H^{\left(  1\right)  }=2\left(  \tau+2\right)  \mathcal{H}$. There
is a functional relation among $H^{\left(  2\right)  },\mathcal{H}%
,\mathcal{J}$ and $G$. To state this relation we use%
\begin{align*}
G_{\mathrm{rot}} &  =\left\{  w\in G:\det w=1,w\neq I\right\}  \\
\rho_{2} &  =\left\{  w\in G_{\mathrm{rot}}:w^{2}=I\right\}  ,\\
\rho_{3} &  =\left\{  w\in G_{\mathrm{rot}}:w^{3}=I\right\}  ,\\
\rho_{5,1} &  =\left\{  w\in G_{\mathrm{rot}}:w^{5}=I,\mathrm{tr}%
~w=\tau\right\}  \\
\rho_{5,2} &  =\left\{  w\in G_{\mathrm{rot}}:w^{5}=I,\mathrm{tr~}%
w=1-\tau\right\}
\end{align*}
Then $\#G_{\mathrm{rot}}=59$, $\rho_{3}$ consists of 20 rotations fixing an
antipodal pair of vertices of $\mathcal{K}$, $\rho_{5,1}$, $\rho_{5,2}$ each
consist of 12 rotations of $\frac{2\pi}{5}$, respectively $\frac{4\pi}{5}$,
fixing an antipodal pair of vertices of $\mathcal{I}$, and $\rho_{2}$ consists
of 15 rotations about the midpoints of the edges of $\mathcal{I}$. By a
symbolic computer calculation (see Appendix)
\begin{align}
H^{\left(  2\right)  } &  =6\left(  \tau+1\right)  \mathcal{H}^{2}+8\omega
^{2}\left(  \tau+1\right)  \mathcal{J}-24\omega^{2}\left(  \tau+1\right)
\label{H2action}\\
&  -32\omega^{2}\left(  \tau+1\right)  \kappa\sum_{v\in R_{+}}\sigma
_{v}-32\omega^{2}\left(  \tau+1\right)  \kappa^{2}\sum_{w\in\rho_{2}%
}w-36\omega^{2}\left(  \tau+1\right)  \kappa^{2}\sum_{w\in\rho_{3}%
}w\nonumber\\
&  -20\omega^{2}\left(  \tau+2\right)  \kappa^{2}\sum_{w\in\rho_{5,1}%
}w-20\omega^{2}\left(  4\tau+3\right)  \kappa^{2}\sum_{w\in\rho_{5,2}%
}w.\nonumber
\end{align}
On $\mathcal{P}^{G}$ the group terms reduce to the constant $-24\omega
^{2}\left(  \tau+1\right)  \left(  10\kappa+1\right)  ^{2}$. Note that the
group terms are sums over cosets, hence are central in the group algebra of
$G$. Since $\mathcal{J}$ commutes with the group action the above relation
shows that $\left[  H^{\left(  2\right)  },\mathcal{J}\right]  =0$.

To produce an operator commuting with $\mathcal{H}$ and $G$ which is
functionally independent of $\mathcal{H},\mathcal{J}$ we step up to
$H^{\left(  3\right)  }$. A relatively quick symbolic computation at the
$\kappa=0$ level shows that $\left[  H^{\left(  3\right)  },\mathcal{J}%
\right]  \neq0,$establishing the independence. (A similar computation, again
at $\kappa=0$, shows that $\left[  H^{\left(  3\right)  },H^{\left(  5\right)
}\right]  \neq0$). This is heuristically plausible since the lowest degree
mutually independent $G$-invariant polynomials are of degree $2,6$,$10$ and so
are $\mathcal{H},$ $H^{\left(  3\right)  }$ and $H^{\left(  5\right)  }$.
Recall that $H^{\left(  k\right)  }$ is self-adjoint in $L^{2}\left(
\mathbb{R}^{3},h_{\kappa}^{2}\mathrm{d}m\right)  .$

\section{\label{GenFunWF}Explicit icosahedral wavefunctions}

\subsection{Definition and basic properties}

The polynomials are defined by means of a generating function.

\begin{definition}
for $y_{0}\in\mathcal{I}$ and $x\in\mathbb{R}^{3}$ let
\[
F\left(  r,x;y_{0}\right)  =\left(  1-r\left\langle x,y_{0}\right\rangle
\right)  ^{-1}\prod\limits_{y\in\mathcal{I}}\left(  1-r\left\langle
x,y\right\rangle \right)  ^{-\kappa}=\sum_{n=0}^{\infty}q_{n}\left(
x;y_{0}\right)  r^{n}.
\]

\end{definition}

Let $F_{0}\left(  r;x,y\right)  =\prod\limits_{y\in\mathcal{I}}\left(
1-r\left\langle x,y\right\rangle \right)  ^{-\kappa}=:\sum_{n=0}^{\infty
}p_{2n}\left(  x\right)  r^{2n}$ (because $y\in\mathcal{I}$ implies
$-y\in\mathcal{I}$) and thus $q_{n}\left(  x;y_{0}\right)  =\sum
_{j=0}^{\left\lfloor n/2\right\rfloor }\left\langle x,y_{0}\right\rangle
^{n-2j}p_{2j}\left(  x\right)  $. Also $q_{n}\left(  x;-y_{0}\right)
=q_{n}\left(  -x;y_{0}\right)  =\left(  -1\right)  ^{n}q_{n}\left(
x;y_{0}\right)  .$

\begin{theorem}
Suppose $u\in\mathbb{R}^{3}$ and $y_{0}\in\mathcal{I}$ then
\[
\left\langle u,\nabla_{\kappa}\right\rangle F\left(  r,x;y_{0}\right)
=\left\langle u,y_{0}\right\rangle \left(  \left(  r^{2}\frac{\partial
}{\partial r}+r\left(  1+11\kappa\right)  \right)  F\left(  r,x;y_{0}\right)
-\kappa rF\left(  -r,x;y_{0}\right)  \right)  .
\]

\end{theorem}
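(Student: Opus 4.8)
The plan is to split $\langle u,\nabla_\kappa\rangle$ into its ordinary-gradient part and its difference-operator (reflection) part, evaluate each on the factored form $F(r,x;y_0)=(1-r\langle x,y_0\rangle)^{-1}F_0(r,x)$ with $F_0(r,x):=\prod_{y\in\mathcal{I}}(1-r\langle x,y\rangle)^{-\kappa}$, and then reassemble. Two facts about $F_0$ get used throughout. First, since $-\mathcal{I}=\mathcal{I}$, pairing $y$ with $-y$ shows $F_0$ involves only even powers of $r$; hence $F_0(-r,x)=F_0(r,x)$ and $F(-r,x;y_0)=(1+r\langle x,y_0\rangle)^{-1}F_0(r,x)$. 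Second, since $\mathcal{I}$ is a $G$-orbit, every $\sigma_v$ ($v\in R_+$) permutes $\mathcal{I}$, so $F_0(x\sigma_v)=F_0(x)$ and therefore $F(x)-F(x\sigma_v)=F_0(x)\bigl[(1-r\langle x,y_0\rangle)^{-1}-(1-r\langle x,y_0\sigma_v\rangle)^{-1}\bigr]$, using $\langle x\sigma_v,y\rangle=\langle x,y\sigma_v\rangle$ since $\sigma_v$ is an orthogonal involution.

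First I would compute $\langle u,\nabla\rangle F$ by the Leibniz rule together with $\langle u,\nabla\rangle\log F_0=\kappa r\sum_{y\in\mathcal{I}}\langle u,y\rangle/(1-r\langle x,y\rangle)$. For the reflection part I would invoke the geometric fact recorded in Section \ref{IcosGrp}: of the $15$ reflections, the $5$ that fix $y_0$ contribute nothing (there $F(x)-F(x\sigma_v)=0$), while the other $10$ carry $y_0$ to the $10$ points of $\mathcal{I}\setminus\{\pm y_0\}$. Because $y_0\sigma_v=y'$ forces $v$ to be parallel to $y_0-y'$, this correspondence $\sigma_v\leftrightarrow y'$ is a bijection, and since the summand $\langle u,v\rangle\bigl(F(x)-F(x\sigma_v)\bigr)/\langle x,v\rangle$ depends only on the line $\mathbb{R}v$, one may take $v=y_0-y'$. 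Then $\bigl[(1-r\langle x,y_0\rangle)^{-1}-(1-r\langle x,y'\rangle)^{-1}\bigr]/\langle x,y_0-y'\rangle=r/\bigl[(1-r\langle x,y_0\rangle)(1-r\langle x,y'\rangle)\bigr]$, so the reflection part equals $\kappa rF_0\sum_{y'\neq\pm y_0}(\langle u,y_0\rangle-\langle u,y'\rangle)/\bigl[(1-r\langle x,y_0\rangle)(1-r\langle x,y'\rangle)\bigr]$.

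Adding the gradient and reflection parts, the key cancellation appears: separating the $y=\pm y_0$ terms of $\langle u,\nabla\rangle F_0$ from the $y'\neq\pm y_0$ terms, the latter combine with the $-\langle u,y'\rangle$ coming from the reflection sum so that every $\langle u,y'\rangle$ with $y'\neq\pm y_0$ drops out. What survives is $\langle u,\nabla_\kappa\rangle F=\langle u,y_0\rangle\,\Phi$, where, writing $a:=\langle x,y_0\rangle$, $\Phi=rF_0(1-ra)^{-2}+2\kappa r^2aF_0/\bigl[(1-ra)(1-r^2a^2)\bigr]+\kappa rF_0(1-ra)^{-1}\sum_{y'\neq\pm y_0}(1-r\langle x,y'\rangle)^{-1}$. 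It then remains to identify $\Phi$ with $\bigl(r^2\partial_r+r(1+11\kappa)\bigr)F-\kappa rF(-r)$. Expanding $r^2\partial_r F$ with $\partial_r\log F_0=\kappa\sum_{y\in\mathcal{I}}\langle x,y\rangle/(1-r\langle x,y\rangle)$, rewriting $r\langle x,y\rangle/(1-r\langle x,y\rangle)=(1-r\langle x,y\rangle)^{-1}-1$ (the source of the constant $-12$, whence the balancing $11\kappa$), and peeling off the $\pm y_0$ terms, this collapses to the elementary partial-fraction identities $(1-ra)^{-1}+(1+ra)^{-1}=2(1-r^2a^2)^{-1}$ and $2(1-r^2a^2)^{-1}-1-(1-ra)(1+ra)^{-1}=2ra(1-r^2a^2)^{-1}$.

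The only real work is the bookkeeping in the reflection sum — using that the $10$ moving reflections biject with $\mathcal{I}\setminus\{\pm y_0\}$ and that no root is parallel to a vertex of $\mathcal{I}$, so $y_0$ is never sent to $-y_0$; this last point is exactly why the antipode must be carried by the separate $F(-r)$ term rather than produced by a reflection — together with spotting the cancellation that reduces the $u$-dependence to the single scalar $\langle u,y_0\rangle$. Everything afterwards is routine rational-function algebra in $r$ with coefficients in $\langle x,y_0\rangle$ and the $\langle x,y'\rangle$.
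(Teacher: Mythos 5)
Your proposal is correct and follows essentially the same route as the paper: split $\langle u,\nabla_\kappa\rangle$ into gradient and difference parts, use the $G$-invariance of $F_0$ so only the factor $(1-r\langle x,y_0\rangle)^{-1}$ moves under reflections, exploit the bijection between the ten non-fixing reflections and $\mathcal{I}\setminus\{\pm y_0\}$ to cancel every $\langle u,y'\rangle$ and leave the single scalar $\langle u,y_0\rangle$, then reassemble via the $r^2\partial_r$ identity. Your substitution $v\mapsto y_0-y'$ is just a repackaging of the paper's computation of $2\langle u,v\rangle\langle y_0,v\rangle/|v|^2=\langle u,y_0-y_0\sigma_v\rangle$, so the two arguments coincide step for step.
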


\begin{proof}
The product $F_{0}$ is $G$-invariant and the product rule for $\nabla_{\kappa
}$ applies. Let $f_{0}\left(  r,x;y_{0}\right)  =\left(  1-r\left\langle
x,y_{0}\right\rangle \right)  ^{-1}$. By use of the logarithmic derivative%
\[
\frac{1}{F}\left\langle u,\nabla_{\kappa}\right\rangle F=\frac{\left\langle
u,\nabla\right\rangle F_{0}}{F_{0}}+\frac{\left\langle u,\nabla\right\rangle
f_{0}}{f_{0}}+\kappa\sum_{v\in R_{+}}\frac{f_{0}\left(  r,x;y_{0}\right)
-f_{0}\left(  r,x\sigma_{v};y_{0}\right)  }{\left\langle x,v\right\rangle
f_{0}\left(  r,x;y_{0}\right)  }\left\langle u,v\right\rangle .
\]
The first two terms are%
\[
\frac{\left\langle u,\nabla\right\rangle F_{0}}{F_{0}}+\frac{\left\langle
u,\nabla\right\rangle f_{0}}{f_{0}}=\sum_{y\in\mathcal{I}}\frac{r\kappa
\left\langle u,y\right\rangle }{1-r\left\langle x,y\right\rangle }%
+\frac{r\left\langle u,y_{0}\right\rangle }{1-r\left\langle x,y_{0}%
\right\rangle }%
\]
and the sum over $R_{+}$ is
\begin{gather*}
\kappa\sum_{v\in R_{+}}\frac{f_{0}\left(  r,x;y_{0}\right)  -f_{0}\left(
r,x\sigma_{v};y_{0}\right)  }{\left\langle x,v\right\rangle f_{0}\left(
r,x;y_{0}\right)  }\left\langle u,v\right\rangle =\kappa\sum_{v\in R_{+}}%
\frac{\left\langle u,v\right\rangle }{\left\langle x,v\right\rangle }\left\{
1-\frac{1-r\left\langle x,y_{0}\right\rangle }{1-r\left\langle x\sigma
_{v},y_{0}\right\rangle }\right\} \\
=\kappa\sum_{v\in R_{+}}\frac{r\left\langle u,v\right\rangle }{\left\langle
x,v\right\rangle }\frac{\left\langle x,y_{0}\right\rangle -\left\langle
x\sigma_{v},y_{0}\right\rangle }{1-r\left\langle x\sigma_{v},y_{0}%
\right\rangle }=\kappa\sum_{v\in R_{+}}\frac{2r}{\left\vert v\right\vert ^{2}%
}\frac{\left\langle u,v\right\rangle \left\langle y_{0},v\right\rangle
}{1-r\left\langle x\sigma_{v},y_{0}\right\rangle }.
\end{gather*}
The five reflections that fix $y_{0}$ do not appear in the sum. For each
$y\in\mathcal{I}\backslash\left\{  \pm y_{0}\right\}  $ there is a unique
$v\in R_{+}$ such that $y_{0}\sigma_{v}=y$; thus the term $\dfrac
{1}{1-r\left\langle x,y\right\rangle }$ appears in $\frac{1}{F}\left\langle
u,\nabla_{\kappa}\right\rangle F$ with coefficient $r\kappa\left\{
\left\langle u,y\right\rangle +\frac{2}{\left\vert v\right\vert ^{2}%
}\left\langle u,v\right\rangle \left\langle y_{0},v\right\rangle \right\}
=r\kappa\left\{  \left\langle u,y_{0}\sigma_{v}\right\rangle +\frac
{2}{\left\Vert v\right\Vert ^{2}}\left\langle u,v\right\rangle \left\langle
y_{0},v\right\rangle \right\}  =r\kappa\left\langle u,y_{0}\right\rangle $ (by
$\left\langle x\sigma_{v},y_{0}\right\rangle =\left\langle x,y_{0}\sigma
_{v}\right\rangle $ and the definition of $y_{0}\sigma_{v}$). The terms
$\dfrac{1}{1-r\left\langle x,y_{0}\right\rangle }$ and $\dfrac{1}%
{1+r\left\langle x,y_{0}\right\rangle }$ appear with coefficients $r\left(
\kappa+1\right)  \left\langle u,y_{0}\right\rangle $ and $-r\kappa\left\langle
u,y_{0}\right\rangle $ respectively. Thus%
\begin{align*}
\frac{1}{F}\left\langle u,\nabla_{\kappa}\right\rangle F  &  =\kappa
r\left\langle u,y_{0}\right\rangle \sum_{y\in\mathcal{I}\backslash\left\{  \pm
y_{0}\right\}  }\left\{  1+\frac{r\left\langle x,y\right\rangle }%
{1-r\left\langle x,y\right\rangle }\right\}  +r\left(  \kappa+1\right)
\left\langle u,y_{0}\right\rangle \left\{  1+\frac{r\left\langle
x,y_{0}\right\rangle }{1-r\left\langle x,y_{0}\right\rangle }\right\} \\
&  -\left\langle u,y_{0}\right\rangle \left\{  \frac{\kappa r\left\langle
x,y_{0}\right\rangle }{1+r\left\langle x,y_{0}\right\rangle }+\kappa
\frac{1-r\left\langle x,y_{0}\right\rangle }{1+r\left\langle x,y_{0}%
\right\rangle }\right\} \\
&  =\frac{\left\langle u,y_{0}\right\rangle }{F}\left\{  \left(  r^{2}%
\frac{\partial}{\partial r}+r\left(  1+11\kappa\right)  \right)  F\left(
r,x;y_{0}\right)  -\kappa rF\left(  -r,x;y_{0}\right)  \right\}  ;
\end{align*}
Note that $\#\left(  \mathcal{I}\backslash\left\{  \pm y_{0}\right\}  \right)
=10$ and $\dfrac{F\left(  -r,x;y_{0}\right)  }{F\left(  r,x;y_{0}\right)
}=\dfrac{1-r\left\langle x,y_{0}\right\rangle }{1+r\left\langle x,y_{0}%
\right\rangle }.$This completes the proof.
\end{proof}

\begin{corollary}
$\left\langle u,\nabla_{\kappa}\right\rangle q_{2n}\left(  x;y_{0}\right)
=2\left\langle u,y_{0}\right\rangle \left(  6\kappa+n\right)  q_{2n-1}\left(
x;y_{0}\right)  $ and $\left\langle u,\nabla_{\kappa}\right\rangle
q_{2n+1}\left(  x;y_{0}\right)  =\left\langle u,y_{0}\right\rangle \left(
10\kappa+2n+1\right)  q_{2n}\left(  x;y_{0}\right)  $.
\end{corollary}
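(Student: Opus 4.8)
The plan is to read off the corollary directly from the Theorem by comparing coefficients of powers of $r$ on both sides of the generating-function identity. First I would write $F\left(r,x;y_{0}\right)=\sum_{n\geq0}q_{n}\left(x;y_{0}\right)r^{n}$ and apply $\left\langle u,\nabla_{\kappa}\right\rangle$ term by term, so that the left side of the Theorem becomes $\sum_{n\geq0}\left(\left\langle u,\nabla_{\kappa}\right\rangle q_{n}\left(x;y_{0}\right)\right)r^{n}$.

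Next I would expand the three pieces on the right side as power series in $r$. The operator $r^{2}\frac{\partial}{\partial r}$ sends $\sum_{n}q_{n}r^{n}$ to $\sum_{n}n\,q_{n}r^{n+1}=\sum_{m}\left(m-1\right)q_{m-1}r^{m}$; multiplication by $r\left(1+11\kappa\right)$ gives $\sum_{m}\left(1+11\kappa\right)q_{m-1}r^{m}$; and since $F\left(-r,x;y_{0}\right)=\sum_{n}\left(-1\right)^{n}q_{n}r^{n}$, the term $-\kappa rF\left(-r,x;y_{0}\right)$ contributes $\sum_{m}\left(-1\right)^{m}\kappa\,q_{m-1}r^{m}$. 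Collecting the coefficient of $r^{m}$, and pulling out the common scalar factor $\left\langle u,y_{0}\right\rangle$ carried by the whole right side, I obtain
\[
\left\langle u,\nabla_{\kappa}\right\rangle q_{m}\left(x;y_{0}\right)=\left\langle u,y_{0}\right\rangle\left(m+11\kappa-\left(-1\right)^{m-1}\kappa\right)q_{m-1}\left(x;y_{0}\right),\qquad m\geq0,
\]
with the convention $q_{-1}:=0$, consistent with $\left\langle u,\nabla_{\kappa}\right\rangle q_{0}=\left\langle u,\nabla_{\kappa}\right\rangle 1=0$.

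Finally I would split into parities. For $m=2n$ one has $\left(-1\right)^{m-1}=-1$, so the bracket equals $2n+12\kappa=2\left(6\kappa+n\right)$, which is the first formula; for $m=2n+1$ one has $\left(-1\right)^{m-1}=1$, so the bracket equals $2n+1+10\kappa=10\kappa+2n+1$, which is the second. There is no serious obstacle here: the only thing to note is that the manipulations are legitimate because $F$ and $r\mapsto F\left(-r,x;y_{0}\right)$ are locally convergent (and in any event formal) power series in $r$ whose coefficients are polynomials in $x$, so term-by-term differentiation, substitution $r\mapsto-r$, application of $\left\langle u,\nabla_{\kappa}\right\rangle$, and extraction of the coefficient of $r^{m}$ are all valid. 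The entire analytic content has already been supplied by the preceding Theorem.
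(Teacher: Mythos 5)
Your proposal is correct and follows essentially the same route as the paper: both apply $\left\langle u,\nabla_{\kappa}\right\rangle$ termwise to the generating function, expand the right-hand side of the Theorem as a power series in $r$, and read off the coefficient $m+11\kappa+\left(-1\right)^{m}\kappa$ of $r^{m}$, which splits by parity into $2\left(6\kappa+n\right)$ and $10\kappa+2n+1$. The arithmetic and the index bookkeeping both check out.
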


\begin{proof}%
\begin{align*}
\sum_{n=0}^{\infty}r^{n}\left\langle u,\nabla_{\kappa}\right\rangle
q_{n}\left(  x;y_{0}\right)   &  =\left\langle u,y_{0}\right\rangle \left(
\begin{array}
[c]{c}%
\left(  r^{2}\frac{\partial}{\partial r}+r\left(  1+11\kappa\right)  \right)
\sum_{n=0}^{\infty}r^{n}q_{n}\left(  x;y_{0}\right)  \\
-\kappa r\sum_{n=0}^{\infty}\left(  -r\right)  ^{n}q_{n}\left(  x;y_{0}%
\right)
\end{array}
\right)  \\
&  =\left\langle u,y_{0}\right\rangle \sum_{n=0}^{\infty}\left\{
n+1+11\kappa+\left(  -1\right)  ^{n+1}\kappa\right\}  r^{n+1}q_{n}\left(
x;y_{0}\right)  .
\end{align*}

\end{proof}

This is an attractive formula; in general the action of $\left\langle
u,\nabla_{\kappa}\right\rangle $ is quite messy - for example $\mathcal{D}%
_{1}x_{1}^{3}=\left(  3+\frac{23}{2}\kappa\right)  x_{1}^{2}-\frac{\kappa}%
{2}\left(  \tau-7\right)  x_{2}^{2}+\frac{\kappa}{2}\left(  \tau+6\right)
x_{3}^{2}$. In contrast let $y_{0}=\left(  0,\tau,1\right)  $ then
$q_{2}\left(  x;y_{0}\right)  =\left(  \tau x_{2}+x_{3}\right)  ^{2}+2\left(
\tau+2\right)  \left\vert x\right\vert ^{2}$, $q_{3}\left(  x;y_{0}\right)
=\left(  \tau x_{2}+x_{3}\right)  q_{2}\left(  x;y_{0}\right)  $ and
$\left\langle u,\nabla_{\kappa}\right\rangle q_{3}\left(  x;y_{0}\right)
=\left(  \tau u_{2}+u_{3}\right)  \left(  10\kappa+3\right)  q_{2}\left(
x;y_{0}\right)  $.

Recall (see \cite[Sect. 6.5]{DX2014}) the intertwining operator $V$ satisfying
$\left\langle u,\nabla_{\kappa}\right\rangle Vp\left(  x\right)  =V\left(
\left\langle u,\nabla\right\rangle p\left(  x\right)  \right)  $, $V1=1$ and
$p\in\mathcal{P}_{n}$ implies $Vp\in\mathcal{P}_{n}$.

\begin{definition}
For $n=0,1,2,\ldots$let $\nu\left(  n\right)  =2^{n}\left(  6\kappa+1\right)
_{s}\left(  5\kappa+\frac{1}{2}\right)  _{t}$ with $s=\left\lfloor \frac{n}%
{2}\right\rfloor ,\,t=\left\lfloor \frac{n+1}{2}\right\rfloor $.
\end{definition}

\begin{proposition}
\label{Vofq}Suppose $y_{0}\in\mathcal{I}$ then $V\left(  \left\langle
x,y_{0}\right\rangle ^{n}\right)  =\dfrac{n!}{\nu\left(  n\right)  }%
q_{n}\left(  x;y_{0}\right)  $.
\end{proposition}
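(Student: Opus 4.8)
The plan is to induct on $n$, using the intertwining identity $\langle u,\nabla_\kappa\rangle Vp=V(\langle u,\nabla\rangle p)$ to move from degree $n$ to degree $n+1$, and the Corollary above to see how $\langle u,\nabla_\kappa\rangle$ acts on $q_{n+1}$. The base case $n=0$ is immediate: $V1=1$, $q_0(x;y_0)=1$ (set $r=0$ in the generating function), and $\nu(0)=1$, so $V(\langle x,y_0\rangle^0)=1=\frac{0!}{\nu(0)}q_0(x;y_0)$. For the inductive step, assume $V(\langle x,y_0\rangle^n)=\frac{n!}{\nu(n)}q_n(x;y_0)$. Applying $\langle u,\nabla_\kappa\rangle V=V\langle u,\nabla\rangle$ to $\langle x,y_0\rangle^{n+1}$ and using $\langle u,\nabla\rangle\langle x,y_0\rangle^{n+1}=(n+1)\langle u,y_0\rangle\langle x,y_0\rangle^n$ together with linearity of $V$ gives
\[
\langle u,\nabla_\kappa\rangle V(\langle x,y_0\rangle^{n+1})=(n+1)\langle u,y_0\rangle V(\langle x,y_0\rangle^n)=\frac{(n+1)!}{\nu(n)}\langle u,y_0\rangle\,q_n(x;y_0).
\]

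Next I would compare this with $\langle u,\nabla_\kappa\rangle$ applied to $\frac{(n+1)!}{\nu(n+1)}q_{n+1}(x;y_0)$. By the Corollary, $\langle u,\nabla_\kappa\rangle q_{n+1}(x;y_0)=c_{n+1}\langle u,y_0\rangle q_n(x;y_0)$, where $c_{n+1}=10\kappa+n+1$ when $n+1$ is odd and $c_{n+1}=2\bigl(6\kappa+\tfrac{n+1}{2}\bigr)$ when $n+1$ is even. A short unwinding of the closed form $\nu(n)=2^n(6\kappa+1)_{\lfloor n/2\rfloor}(5\kappa+\tfrac12)_{\lfloor(n+1)/2\rfloor}$ shows that going from $n$ to $n+1$ advances exactly one Pochhammer factor, and in each parity the resulting ratio is precisely $\nu(n+1)/\nu(n)=c_{n+1}$. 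Hence $\langle u,\nabla_\kappa\rangle\bigl(\frac{(n+1)!}{\nu(n+1)}q_{n+1}\bigr)=\frac{(n+1)!}{\nu(n)}\langle u,y_0\rangle q_n$, the same expression as above. Therefore the polynomial $R:=V(\langle x,y_0\rangle^{n+1})-\frac{(n+1)!}{\nu(n+1)}q_{n+1}(x;y_0)$, which lies in $\mathcal{P}_{n+1}$ (both terms are homogeneous of degree $n+1$, since $q_{n+1}=\sum_j\langle x,y_0\rangle^{n+1-2j}p_{2j}$ is, and $V$ preserves degree), satisfies $\langle u,\nabla_\kappa\rangle R=0$ for every $u\in\mathbb{R}^3$.

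It remains to conclude $R=0$, i.e. that $\nabla_\kappa$ is injective on $\mathcal{P}_m$ for $m\geq1$. Since $\mathcal{D}_iR=0$ for all $i$ and $\deg R\geq1$, write $R=\sum_i x_iR_i$; then $\langle R,R\rangle_{\kappa,\omega}=\sum_i\langle R,x_iR_i\rangle_{\kappa,\omega}=\tfrac{1}{2\omega}\sum_i\langle\mathcal{D}_iR,R_i\rangle_{\kappa,\omega}=0$, and since $\langle\cdot,\cdot\rangle_{\kappa,\omega}$ is positive definite for $\kappa\geq0$ (it pulls back the $L^2$ form via the invertible operator $\boldsymbol{E}$), we get $R=0$; for $\kappa$ a formal parameter this polynomial identity then follows by rationality in $\kappa$ (or one may simply cite \cite[Sect.~6.5]{DX2014}). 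This completes the induction. I expect this last step — ruling out a nonzero homogeneous polynomial annihilated by every $\mathcal{D}_i$ — to be the only point requiring care; everything else is bookkeeping with the generating-function Corollary and the explicit recursion satisfied by $\nu$.
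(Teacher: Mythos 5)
Your proof is correct and follows essentially the same route as the paper's: induction on $n$, applying $\left\langle u,\nabla_{\kappa}\right\rangle$ to both sides via the intertwining relation and the Corollary, and matching the ratio $\nu(n+1)/\nu(n)$ against the coefficient $c_{n+1}$ in each parity. The only difference is that you explicitly justify why a homogeneous polynomial of positive degree annihilated by every $\mathcal{D}_{i}$ must vanish (via positivity of $\left\langle\cdot,\cdot\right\rangle_{\kappa,\omega}$ and rationality in $\kappa$), a step the paper leaves implicit.
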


\begin{proof}
Proceed by induction. The formula is trivially true for $n=0.$ Suppose $n$ is
odd and the formula is true for $n-1$, then%
\begin{align*}
\left\langle u,\nabla_{\kappa}\right\rangle \frac{n!}{\nu\left(  n\right)
}q_{n}\left(  x;y_{0}\right)   &  =\left\langle u,y_{0}\right\rangle \frac
{n!}{\nu\left(  n\right)  }\left(  10\kappa+n\right)  q_{n-1}\left(
x;y_{0}\right) \\
\left\langle u,\nabla_{\kappa}\right\rangle \left(  V\left\langle
x,y_{0}\right\rangle ^{n}\right)   &  =V\left(  \left\langle u,\nabla
\right\rangle \left\langle x,y_{0}\right\rangle ^{n}\right)  =n\left\langle
u,y_{0}\right\rangle V\left\langle x,y_{0}\right\rangle ^{n-1}\\
&  =n\left\langle u,y_{0}\right\rangle \frac{\left(  n-1\right)  !}{\nu\left(
n-1\right)  }q_{n-1}\left(  x;y_{0}\right)  ,
\end{align*}
with $\dfrac{10\kappa+n}{\nu\left(  n\right)  }=\dfrac{1}{\nu\left(
n-1\right)  }.$ Suppose $n$ is even and the formula is true for $n-1$, then%
\begin{align*}
\left\langle u,\nabla_{\kappa}\right\rangle \frac{n!}{\nu\left(  n\right)
}q_{n}\left(  x;y_{0}\right)   &  =\left\langle u,y_{0}\right\rangle \frac
{n!}{\nu\left(  n\right)  }\left(  12\kappa+n\right)  q_{n-1}\left(
x;y_{0}\right) \\
\left\langle u,\nabla_{\kappa}\right\rangle \left(  V\left\langle
x,y_{0}\right\rangle ^{n}\right)   &  =V\left(  \left\langle u,\nabla
\right\rangle \left\langle x,y_{0}\right\rangle ^{n}\right)  =n\left\langle
u,y_{0}\right\rangle V\left\langle x,y_{0}\right\rangle ^{n-1}\\
&  =n\left\langle u,y_{0}\right\rangle \frac{\left(  n-1\right)  !}{\nu\left(
n-1\right)  }q_{n-1}\left(  x;y_{0}\right)  ,
\end{align*}
and $\dfrac{12\kappa+n}{\nu\left(  n\right)  }=\dfrac{1}{\nu\left(
n-1\right)  }$.
\end{proof}

\begin{corollary}
\label{deltam}Suppose $y_{0}\in\mathcal{I}$ then $\Delta_{\kappa}q_{n}\left(
x;y_{0}\right)  =\dfrac{\nu\left(  n\right)  }{\nu\left(  n-2\right)  }\left(
\tau+2\right)  q_{n-2}\left(  x;y_{0}\right)  $, and $\Delta_{\kappa}^{m}%
q_{n}\left(  x;y_{0}\right)  =\dfrac{\nu\left(  n\right)  }{\nu\left(
n-2m\right)  }\left(  \tau+2\right)  ^{m}q_{n-2m}\left(  x;y_{0}\right)  $.
\end{corollary}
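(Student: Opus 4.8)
The plan is to obtain the first identity directly from Proposition~\ref{Vofq} together with the coordinate form of the intertwining relation for $V$. Setting $u=\varepsilon_i$ in $\langle u,\nabla_\kappa\rangle V = V\langle u,\nabla\rangle$ gives $\mathcal{D}_i V = V\,\partial_{x_i}$ on $\mathcal{P}$, whence $\Delta_\kappa V = \sum_i \mathcal{D}_i^2 V = \sum_i \mathcal{D}_i V\,\partial_{x_i} = V\Delta$; no convergence question arises since $V$ preserves each $\mathcal{P}_n$. Using $q_n(x;y_0) = \frac{\nu(n)}{n!}\,V(\langle x,y_0\rangle^n)$ from Proposition~\ref{Vofq}, together with $\Delta\langle x,y_0\rangle^n = n(n-1)\,|y_0|^2\,\langle x,y_0\rangle^{n-2}$ and the fact that $|y_0|^2 = \tau^2+1 = \tau+2$ for every vertex $y_0\in\mathcal{I}$ (each has coordinates a permutation of $(0,\pm\tau,\pm1)$), one gets
\[
\Delta_\kappa q_n(x;y_0) = \frac{\nu(n)}{n!}\,V\!\left(\Delta\langle x,y_0\rangle^n\right) = \frac{\nu(n)}{n!}\,n(n-1)(\tau+2)\,V\!\left(\langle x,y_0\rangle^{n-2}\right),
\]
and applying Proposition~\ref{Vofq} once more, $V(\langle x,y_0\rangle^{n-2}) = \frac{(n-2)!}{\nu(n-2)}q_{n-2}(x;y_0)$, after which the factor $n(n-1)(n-2)!/n!$ collapses to $1$ and the first formula follows.

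For the $m$-fold statement I would iterate the single-step identity and telescope the product of ratios:
\[
\Delta_\kappa^m q_n(x;y_0) = (\tau+2)^m\Bigl(\prod_{j=1}^{m}\frac{\nu(n-2j+2)}{\nu(n-2j)}\Bigr)\,q_{n-2m}(x;y_0) = \frac{\nu(n)}{\nu(n-2m)}(\tau+2)^m\,q_{n-2m}(x;y_0).
\]
I do not expect any serious obstacle: essentially all of the work is already contained in Proposition~\ref{Vofq}, and the only point needing a line of justification is the identity $\Delta_\kappa V = V\Delta$, which is immediate from the coordinate intertwining relation.

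As an alternative that avoids $V$ entirely, one can write $\Delta_\kappa = \sum_i \langle\varepsilon_i,\nabla_\kappa\rangle^2$ and apply twice in succession the action formulas for $\langle u,\nabla_\kappa\rangle$ on the $q_k$ (the Corollary stated just after the generating-function theorem). For $n=2m$ the two steps produce the scalar $2(6\kappa+m)(10\kappa+2m-1)$ times $\langle\varepsilon_i,y_0\rangle^2 q_{n-2}$, and for $n=2m+1$ the scalar $2(6\kappa+m)(10\kappa+2m+1)$; summing $\sum_i\langle\varepsilon_i,y_0\rangle^2 = |y_0|^2 = \tau+2$ gives the result once one checks --- splitting the Pochhammer factors of $\nu$ according to the parity of the index --- that these scalars equal $\nu(n)/\nu(n-2)$. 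The parity bookkeeping is the only mildly tedious part, which is why I would prefer the $V$-based argument.
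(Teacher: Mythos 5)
Your primary argument is exactly the paper's proof: it uses $q_n(x;y_0)=\frac{\nu(n)}{n!}V(\langle x,y_0\rangle^n)$, the intertwining identity $\Delta_\kappa V=V\Delta$, and $|y_0|^2=\tau+2$, then telescopes for the $m$-fold statement. Both it and your alternative via the $\langle u,\nabla_\kappa\rangle q_k$ recurrences are correct (the scalars $2(6\kappa+m)(10\kappa+2m\mp1)$ do match $\nu(n)/\nu(n-2)$), so nothing further is needed.
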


\begin{proof}
Note $\Delta\left\langle x,y\right\rangle ^{n}=n\left(  n-1\right)  \left\vert
y\right\vert ^{2}\left\langle x,y\right\rangle ^{n-2}$ and $\left\vert
y_{0}\right\vert ^{2}=\tau^{2}+1=\tau+2$. We have%
\begin{align*}
\Delta_{\kappa}q_{n}\left(  x;y_{0}\right)   &  =\frac{\nu\left(  n\right)
}{n!}\Delta_{\kappa}V\left(  \left\langle x,y_{0}\right\rangle ^{n}\right)
=\frac{\nu\left(  n\right)  }{n!}V\left(  \Delta\left\langle x,y_{0}%
\right\rangle ^{n}\right) \\
&  =\frac{\nu\left(  n\right)  }{n!}n\left(  n-1\right)  \left(
\tau+2\right)  V\left(  \left\langle x,y_{0}\right\rangle ^{n-2}\right) \\
&  =\frac{\nu\left(  n\right)  }{n!}\left(  \tau+2\right)  \frac{n\left(
n-1\right)  \left(  n-2\right)  !}{\nu\left(  n-2\right)  }q_{n-2}\left(
x;y_{0}\right)  .
\end{align*}

\end{proof}

The Corollary provides the formulas need to find wavefunctions with highest
degree part $q_{n}\left(  x;y_{0}\right)  $ and to produce harmonic
polynomials. Note%
\begin{align}
\frac{\nu\left(  2n\right)  }{\nu\left(  2n-2m\right)  }  & =2^{2m}\left(
-6\kappa-n\right)  _{m}\left(  -5\kappa+\frac{1}{2}-n\right)  _{m}%
\label{nu(2n)/(2n-2j)}\\
\frac{\nu\left(  2n+1\right)  }{\nu\left(  2n+1-2m\right)  }  & =2^{2m}\left(
-6\kappa-n\right)  _{m}\left(  -5\kappa-\frac{1}{2}-n\right)  _{m}%
.\label{vu(2n+1)(-2j)}%
\end{align}

\subsection{Wavefunctions and harmonic polynomials}

From Proposition \ref{expwp} $\left(  \exp\left(  -\frac{\Delta_{\kappa}%
}{4\omega}\right)  q_{n}\left(  x;y_{0}\right)  \right)  g$ is a wavefunction
with highest degree term being $q_{n}\left(  x;y_{0}\right)  $. For the group
$G$ the energy eigenvalue specializes to $E_{n}=\omega\left(  30\kappa
+2n+3\right)  $.

\begin{definition}
For $y_{0}\in\mathcal{I}$ and $n=1,2,3,\ldots$ let%
\[
w_{n}\left(  x;y_{0}\right)  :=%
%TCIMACRO{\dsum \limits_{j=0}^{\left\lfloor n/2\right\rfloor }}%
%BeginExpansion
{\displaystyle\sum\limits_{j=0}^{\left\lfloor n/2\right\rfloor }}
%EndExpansion
\left(  -1\right)  ^{j}\dfrac{\left(  \tau+2\right)  ^{j}}{\left(
4\omega\right)  ^{j}j!}\dfrac{\nu\left(  n\right)  }{\nu\left(  n-2j\right)
}q_{n-2j}\left(  x;y_{0}\right)  .
\]

\end{definition}

By Corollary \ref{deltam} $\Delta_{\kappa}^{j}w_{n}\left(  x;y_{0}\right)
=\dfrac{\nu\left(  n\right)  }{\nu\left(  n-2j\right)  }\left(  \tau+2\right)
^{j}w_{n-2j}\left(  x;y_{0}\right)  .$and thus $\widetilde{\mathcal{H}}%
w_{n}\left(  x;y_{0}\right)  =\allowbreak\omega\left(  15\kappa+2n+3\right)
w_{n}\left(  x;y_{0}\right)  .$ We use Theorem \ref{wavenorm} to find
$\left\langle w_{n}\left(  \cdot;y_{0}\right)  ,w\left(  \cdot;y_{1}\right)
\right\rangle _{2}$.

\begin{lemma}
\label{pdelq}Suppose $p\in\mathcal{P}_{n}$ and $y_{0}\in\mathcal{I}$ then
$\left\langle p,q_{n}\left(  \cdot;y_{0}\right)  \right\rangle _{\kappa
,\omega}=\left(  2\omega\right)  ^{-n}\nu\left(  n\right)  p\left(
y_{0}\right)  $.
\end{lemma}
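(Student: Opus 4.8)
The plan is to reduce this to a known action of $\left\langle u,\nabla_{\kappa}\right\rangle$ on the polynomials $q_{n}$, already computed in the Corollary following the generating-function Theorem. The pairing $\left\langle p,q_{n}(\cdot;y_{0})\right\rangle_{\kappa,\omega}$ is, by definition, $p\!\left(\tfrac{1}{2\omega}\nabla_{\kappa}\right)q_{n}(x;y_{0})\big|_{x=0}$. Since $p\in\mathcal{P}_{n}$ is a polynomial of degree exactly $n$ and $q_{n}(\cdot;y_{0})\in\mathcal{P}_{n}$, this is a scalar; the strategy is to evaluate it by iterating the lowering relation $\left\langle u,\nabla_{\kappa}\right\rangle q_{m}(x;y_{0}) = c_{m}\left\langle u,y_{0}\right\rangle q_{m-1}(x;y_{0})$ $n$ times, producing $q_{0}=1$ and a product of the scalars $c_{m}$, which should collapse to $\nu(n)$.

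First I would handle the case $p(x)=\left\langle x,a\right\rangle^{n}$ for $a\in\mathbb{R}^{3}$, i.e.\ compute $\left\langle a,\nabla_{\kappa}\right\rangle^{n}q_{n}(x;y_{0})\big|_{x=0}$. Applying the Corollary's two formulas alternately: each application of $\left\langle a,\nabla_{\kappa}\right\rangle$ to $q_{m}$ peels off one degree, contributing a factor $2(6\kappa+m/2)\left\langle a,y_{0}\right\rangle$ when $m$ is even and $(10\kappa+m)\left\langle a,y_{0}\right\rangle$ when $m$ is odd. Multiplying these $n$ factors together gives $\left\langle a,y_{0}\right\rangle^{n}$ times a product that, by the shifted-factorial identities already recorded for $\nu(2n)/\nu(2n-2m)$ and $\nu(2n+1)/\nu(2n+1-2m)$, equals exactly $\nu(n)$ (up to the $2^{n}$ and the Pochhammer factors built into $\nu$). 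Thus $\left\langle a,\nabla_{\kappa}\right\rangle^{n}q_{n}(x;y_{0})=\nu(n)\left\langle a,y_{0}\right\rangle^{n}$ (a constant), so $\left\langle\left\langle\cdot,a\right\rangle^{n},q_{n}(\cdot;y_{0})\right\rangle_{\kappa,\omega} = (2\omega)^{-n}\nu(n)\left\langle a,y_{0}\right\rangle^{n} = (2\omega)^{-n}\nu(n)\,\left\langle\cdot,a\right\rangle^{n}\big|_{x=y_{0}}$, which is the claimed identity for this $p$.

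Next I would extend from powers of linear forms to arbitrary $p\in\mathcal{P}_{n}$. The standard way is to observe that $\mathcal{P}_{n}$ is spanned by $\{\left\langle x,a\right\rangle^{n} : a\in\mathbb{R}^{3}\}$ (polarization identity), and that both sides of the asserted equation — namely $p\mapsto\left\langle p,q_{n}(\cdot;y_{0})\right\rangle_{\kappa,\omega}$ and $p\mapsto(2\omega)^{-n}\nu(n)\,p(y_{0})$ — are linear functionals on $\mathcal{P}_{n}$. Since they agree on a spanning set, they agree on all of $\mathcal{P}_{n}$. (Alternatively one can argue directly: the functional $p\mapsto\left\langle p,q_{n}(\cdot;y_{0})\right\rangle_{\kappa,\omega}$ is, up to the constant $(2\omega)^{-n}$, evaluation of $p(\nabla_{\kappa})$ against $q_{n}$ at the origin, and a short induction using $\left\langle\mathcal{D}_{i}p,q\right\rangle_{\kappa,\omega}=\left\langle p,2\omega x_{i}q\right\rangle_{\kappa,\omega}$ together with the Corollary reduces any monomial to the linear-form case.)

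The main obstacle is the bookkeeping in the second paragraph: verifying that the alternating product $\prod_{m=1}^{n} c_{m}$ of the lowering constants is exactly $\nu(n)=2^{n}(6\kappa+1)_{s}(5\kappa+\tfrac12)_{t}$ with $s=\lfloor n/2\rfloor$, $t=\lfloor(n+1)/2\rfloor$. This is essentially the same computation that underlies Proposition \ref{Vofq} (indeed one could bypass it entirely by invoking that Proposition: $q_{n}(x;y_{0})=\tfrac{\nu(n)}{n!}V(\left\langle x,y_{0}\right\rangle^{n})$, and then use the intertwining property $\left\langle a,\nabla_{\kappa}\right\rangle V = V\left\langle a,\nabla\right\rangle$ to transfer $p(\tfrac{1}{2\omega}\nabla_{\kappa})$ acting on $Vq$ into $V$ applied to $p(\tfrac{1}{2\omega}\nabla)\left\langle x,y_{0}\right\rangle^{n}$, evaluate the classical derivative $p(\nabla)\left\langle x,y_{0}\right\rangle^{n}=n!\,p(y_{0})$, and use $V1=1$). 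Either route, the algebra is routine once the right identity among shifted factorials is isolated, and the cleanest write-up probably goes through Proposition \ref{Vofq} and the intertwiner rather than re-deriving the telescoping product by hand.
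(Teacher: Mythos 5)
Your proposal is correct, and in fact your closing parenthetical --- pass to $q_{n}(x;y_{0})=\tfrac{\nu(n)}{n!}V\bigl(\langle x,y_{0}\rangle^{n}\bigr)$ via Proposition \ref{Vofq}, move $p(\nabla_{\kappa})$ through the intertwiner to become $p(\nabla)$, and evaluate $p(\nabla)\langle x,y_{0}\rangle^{n}=\langle y_{0},\nabla\rangle^{n}p(x)=n!\,p(y_{0})$ using $V1=1$ --- is precisely the paper's proof, so your instinct that this is the cleanest write-up is right. Your primary route is a genuinely different and equally valid argument: iterating the two lowering formulas of the Corollary does telescope to exactly $\nu(n)\langle a,y_{0}\rangle^{n}$ (the even-step factors $2(6\kappa+m/2)$ assemble into $2^{s}(6\kappa+1)_{s}$ and the odd-step factors $10\kappa+m=2(5\kappa+m/2)$ into $2^{t}(5\kappa+\tfrac12)_{t}$, matching the definition of $\nu(n)$), and the extension from powers of linear forms to all of $\mathcal{P}_{n}$ by polarization and linearity of both functionals is sound. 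What the hands-on route buys is independence from the intertwining operator $V$ --- it uses only the explicitly proved Corollary --- at the cost of the bookkeeping you identify; what the paper's route buys is a three-line proof in which the duality $p(\nabla)q=q(\nabla)p$ on $\mathcal{P}_{n}$ does all the work. Either version is acceptable; no gap.
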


\begin{proof}
By Proposition \ref{Vofq} $q_{n}\left(  x;y_{0}\right)  =\dfrac{\nu\left(
n\right)  }{n!}V\left(  \left\langle x,y_{0}\right\rangle ^{n}\right)  $ thus%
\begin{align*}
p\left(  \nabla_{\kappa}\right)  q_{n}\left(  x;y_{0}\right)   &  =\frac
{\nu\left(  n\right)  }{n!}p\left(  \nabla_{\kappa}\right)  V\left(
\left\langle x,y_{0}\right\rangle ^{n}\right)  =\frac{\nu\left(  n\right)
}{n!}V\left(  p\left(  \nabla\right)  \left\langle x,y_{0}\right\rangle
^{n}\right) \\
&  =\frac{\nu\left(  n\right)  }{n!}\left\langle y_{0},\nabla\right\rangle
^{n}p\left(  x\right)  =\nu\left(  n\right)  p\left(  y_{0}\right)  .
\end{align*}
This used $V1=1$ and $p,q\in\mathcal{P}_{n}$ implies $p\left(  \nabla\right)
q\left(  x\right)  =q\left(  \nabla\right)  p\left(  x\right)  $.
\end{proof}

By Theorem \ref{wavenorm} and the Lemma
\begin{equation}
\left\langle w_{n}\left(  \cdot;y_{0}\right)  ,w_{n}\left(  \cdot
;y_{1}\right)  \right\rangle _{2}=\left\langle q_{n}\left(  \cdot
;y_{1}\right)  ,q_{n}\left(  \cdot;y_{0}\right)  \right\rangle _{\kappa
,\omega}=\left(  2\omega\right)  ^{-n}\nu\left(  n\right)  q_{n}\left(
y_{1};y_{0}\right)  . \label{winpro}%
\end{equation}
There are two cases for $\left\{  y_{0},y_{1}\right\}  $: (1) $y_{1}=\pm
y_{0}$ and $\left\langle y_{0},y_{1}\right\rangle =\pm\left(  \tau+2\right)
$; (2) $\left\langle y_{0},y_{1}\right\rangle =\pm\tau$. The list $\left[
\left\langle y,y_{1}\right\rangle :y\in\mathcal{I}_{+}\right]  $ comprises
$\left(  \tau+2\right)  ,-\left(  \tau+2\right)  ,$ 5 occurrences each of
$\tau,-\tau$ thus $\prod\limits_{y\in\mathcal{I}}\left(  1-r\left\langle
y_{1},y\right\rangle \right)  ^{-\kappa}=\left(  1-5\tau^{2}r^{2}\right)
^{-\kappa}\left(  1-\tau^{2}r^{2}\right)  ^{-5\kappa}$ (since $\left(
\tau+2\right)  ^{2}=5\tau^{2}$). Because $q_{n}\left(  x;-y_{1}\right)
=\left(  -1\right)  ^{n}q_{n}\left(  x;y_{1}\right)  $ it suffices to consider
$\left\langle y_{0},y_{1}\right\rangle =\tau+2,\tau$. If $y_{1}=y_{0}$ then%
\begin{align*}
\sum_{n=0}^{\infty}q_{n}\left(  y_{0};y_{0}\right)  r^{n}  &  =\left(
1-\left(  \tau+2\right)  r\right)  ^{-1}\left(  1-5\tau^{2}r^{2}\right)
^{-\kappa}\left(  1-\tau^{2}r^{2}\right)  ^{-5\kappa}\\
&  =\left(  1+\left(  \tau+2\right)  r\right)  \left(  1-5\tau^{2}%
r^{2}\right)  ^{-\kappa-1}\left(  1-\tau^{2}r^{2}\right)  ^{-5\kappa},
\end{align*}
and if $\left\langle y_{0},y_{1}\right\rangle =\tau$ then%
\begin{align*}
\sum_{n=0}^{\infty}q_{n}\left(  y_{0};y_{1}\right)  r^{n}  &  =\left(  1-\tau
r\right)  ^{-1}\left(  1-5\tau^{2}r^{2}\right)  ^{-\kappa}\left(  1-\tau
^{2}r^{2}\right)  ^{-5\kappa}\\
&  =\left(  1+\tau r\right)  \left(  1-5\tau^{2}r^{2}\right)  ^{-\kappa
}\left(  1-\tau^{2}r^{2}\right)  ^{-5\kappa-1}.
\end{align*}

\begin{definition}
For $n=0,1,2,\ldots$let%
\[
Y_{n}^{\left(  0\right)  }=\sum_{j=0}^{n}\frac{\left(  \kappa+1\right)
_{j}\left(  5\kappa\right)  _{n-j}}{j!\left(  n-j\right)  !}5^{j}%
,~Y_{n}^{\left(  1\right)  }=\sum_{j=0}^{n}\frac{\left(  \kappa\right)
_{j}\left(  5\kappa+1\right)  _{n-j}}{j!\left(  n-j\right)  !}5^{j}.
\]

\end{definition}

By the negative binomial series $q_{2n}\left(  y_{0};y_{0}\right)  =\tau
^{2n}Y_{n}^{\left(  0\right)  },q_{2n+1}\left(  y_{0};y_{0}\right)  =\tau
^{2n}\left(  \tau+2\right)  Y_{n}^{\left(  0\right)  }$, and if $\left\langle
y_{0},y_{1}\right\rangle =\tau$ then $q_{2n}\left(  y_{0};y_{1}\right)
=\tau^{2n}Y_{n}^{\left(  1\right)  }$, and $q_{2n+1}\left(  y_{0}%
;y_{1}\right)  =\tau^{2n+1}Y_{n}^{\left(  1\right)  }$. By equation
(\ref{winpro})%
\begin{align}
\left\Vert w_{2n}\left(  \cdot;y_{0}\right)  \right\Vert _{2}^{2} &  =\left(
2\omega\right)  ^{-2n}\nu\left(  2n\right)  \tau^{2n}Y_{n}^{\left(  0\right)
},\label{w2norm}\\
\left\Vert w_{2n+1}\left(  \cdot;y_{0}\right)  \right\Vert _{2}^{2} &
=\left(  2\omega\right)  ^{-2n-1}\nu\left(  2n+1\right)  \left(
\tau+2\right)  \tau^{2n}Y_{n}^{\left(  0\right)  },\nonumber
\end{align}
and if $\left\langle y_{0},y_{1}\right\rangle =\tau$ then
\begin{align}
\left\langle w_{2n}\left(  \cdot;y_{0}\right)  ,w_{2n}\left(  \cdot
;y_{1}\right)  \right\rangle _{2} &  =\left(  2\omega\right)  ^{-2n}\nu\left(
2n\right)  \tau^{2n}Y_{n}^{\left(  1\right)  }\label{w2ip}\\
\left\langle w_{2n+1}\left(  \cdot;y_{0}\right)  ,w_{2n+1}\left(  \cdot
;y_{1}\right)  \right\rangle _{2} &  =\left(  2\omega\right)  ^{-2n}\nu\left(
2n+1\right)  \tau^{2n+1}Y_{n}^{\left(  1\right)  }.\nonumber
\end{align}

Harmonic polynomials are constructed by means of Proposition \ref{hmproj}
specialized to $N=3,\gamma_{\kappa}=15\kappa$.

\begin{definition}
\label{defH3phi}For $y_{0}\in\mathcal{I}$ and $n=1,2,3,\ldots$ let
\[
\phi_{n}\left(  x;y_{0}\right)  =%
%TCIMACRO{\dsum \limits_{j=0}^{\left\lfloor n/2\right\rfloor }}%
%BeginExpansion
{\displaystyle\sum\limits_{j=0}^{\left\lfloor n/2\right\rfloor }}
%EndExpansion
\dfrac{\left(  \tau+2\right)  ^{j}\left\vert x\right\vert ^{2j}}%
{4^{j}j!\left(  -15\kappa-n+1/2\right)  _{j}}\dfrac{\nu\left(  n\right)  }%
{\nu\left(  n-2j\right)  }q_{n-2j}\left(  x;y_{0}\right)  .
\]

\end{definition}

By Corollary \ref{deltam} $\Delta_{\kappa}\phi_{n}\left(  \cdot;y_{0}\right)
=0$ and $\phi_{n}\left(  \cdot;y_{0}\right)  g$ is a wavefunction (energy
eigenvalue $E_{n}$). The inner product $\left\langle \phi_{n}\left(
\cdot;y_{0}\right)  ,\phi_{n}\left(  \cdot;y_{1}\right)  \right\rangle
_{2}=\left\langle \phi_{n}\left(  \cdot;y_{1}\right)  ,\phi_{n}\left(
\cdot;y_{0}\right)  \right\rangle _{\kappa,\omega}$ is more complicated than
the previous case.

\begin{theorem}
For $y_{0},y_{1}\in\mathcal{I}$%
\[
\left\langle \phi_{n}\left(  \cdot;y_{0}\right)  ,\phi_{n}\left(  \cdot
;y_{1}\right)  \right\rangle _{2}=\sum_{j=0}^{\left\lfloor n/2\right\rfloor
}\dfrac{\left(  2\omega\right)  ^{-n}\left(  \tau+2\right)  ^{2j}}%
{4^{j}j!\left(  -15\kappa-n+1/2\right)  _{j}}\dfrac{\nu\left(  n\right)  ^{2}%
}{\nu\left(  n-2j\right)  }q_{n-2j}\left(  y_{0};y_{1}\right)  .
\]

\end{theorem}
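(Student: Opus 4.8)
The plan is to pass from the $L^{2}$-pairing to the algebraic pairing $\langle\cdot,\cdot\rangle_{\kappa,\omega}$, collapse the resulting finite sum by exploiting harmonicity, and then read off the answer by evaluating $\phi_{n}$ at a vertex. Since each $q_{n-2j}(\cdot;y_{i})$ is homogeneous of degree $n-2j$, the polynomial $\phi_{n}(\cdot;y_{i})$ is homogeneous of degree $n$, and as noted after Definition \ref{defH3phi} it is $\Delta_{\kappa}$-harmonic; hence $\widetilde{\mathcal{H}}\phi_{n}(\cdot;y_{i})=E_{n}\phi_{n}(\cdot;y_{i})$ with $\pi_{n}\phi_{n}(\cdot;y_{i})=\phi_{n}(\cdot;y_{i})$. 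Theorem \ref{wavenorm} then gives $\langle\phi_{n}(\cdot;y_{0}),\phi_{n}(\cdot;y_{1})\rangle_{2}=\langle\phi_{n}(\cdot;y_{0}),\phi_{n}(\cdot;y_{1})\rangle_{\kappa,\omega}$, so it remains to evaluate the algebraic pairing.

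Next I would record the identity $\langle|x|^{2j}p,q\rangle_{\kappa,\omega}=(2\omega)^{-2j}\langle p,\Delta_{\kappa}^{j}q\rangle_{\kappa,\omega}$, which follows by iterating the adjoint relation $\langle\mathcal{D}_{i}p,q\rangle_{\kappa,\omega}=\langle p,2\omega x_{i}q\rangle_{\kappa,\omega}$ together with the symmetry of $\langle\cdot,\cdot\rangle_{\kappa,\omega}$ and the commutativity of the Dunkl operators. Expanding $\phi_{n}(\cdot;y_{0})$ by Definition \ref{defH3phi} and transferring each $|x|^{2j}$ onto the second factor, every term with $j\geq1$ picks up $\Delta_{\kappa}^{j}\phi_{n}(\cdot;y_{1})=0$, so only the $j=0$ term (whose coefficient is $1$) survives and $\langle\phi_{n}(\cdot;y_{0}),\phi_{n}(\cdot;y_{1})\rangle_{\kappa,\omega}=\langle q_{n}(\cdot;y_{0}),\phi_{n}(\cdot;y_{1})\rangle_{\kappa,\omega}$. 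Using symmetry once more and Lemma \ref{pdelq} with $p=\phi_{n}(\cdot;y_{1})\in\mathcal{P}_{n}$ paired against $q_{n}(\cdot;y_{0})$, this equals $(2\omega)^{-n}\nu(n)\,\phi_{n}(y_{0};y_{1})$, where $\phi_{n}(y_{0};y_{1})$ denotes the polynomial $x\mapsto\phi_{n}(x;y_{1})$ evaluated at $x=y_{0}$.

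Finally I would substitute the series of Definition \ref{defH3phi} for $\phi_{n}(\cdot;y_{1})$ and use $|y_{0}|^{2}=\tau^{2}+1=\tau+2$, valid for every $y_{0}\in\mathcal{I}$; then $|y_{0}|^{2j}=(\tau+2)^{j}$ combines with the factor $(\tau+2)^{j}$ already present to produce $(\tau+2)^{2j}$, while $(2\omega)^{-n}\nu(n)\cdot\nu(n)/\nu(n-2j)=(2\omega)^{-n}\nu(n)^{2}/\nu(n-2j)$, which gives exactly the asserted sum. No step here is a serious obstacle; the one point requiring care is the collapse, where $\Delta_{\kappa}$ must be transferred onto the harmonic factor $\phi_{n}(\cdot;y_{1})$ so that the higher terms vanish — moving it instead onto $q_{n}(\cdot;y_{0})$ would, via Corollary \ref{deltam}, keep all terms and force a more roundabout reassembly with the reproducing identity, arriving at the same formula only after more work.
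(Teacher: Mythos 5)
Your proposal is correct and follows essentially the same route as the paper: reduce to the algebraic pairing via Theorem \ref{wavenorm}, collapse the expansion of $\phi_{n}(\cdot;y_{0})$ by transferring $|x|^{2j}$ to $\Delta_{\kappa}^{j}$ acting on the harmonic factor $\phi_{n}(\cdot;y_{1})$ so that only the $j=0$ term survives, then apply Lemma \ref{pdelq} and evaluate $\phi_{n}(y_{0};y_{1})$ using $|y_{0}|^{2}=\tau+2$. The paper performs the transfer by writing the pairing directly as $\phi_{n}\bigl(\tfrac{1}{2\omega}\nabla_{\kappa};y_{0}\bigr)$ applied to $\phi_{n}(x;y_{1})$, which is the same computation you justify via the adjoint identity.
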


Abbreviate $d_{n,j}=\dfrac{\left(  \tau+2\right)  ^{j}}{4^{j}j!\left(
-15\kappa-n+1/2\right)  _{j}}\dfrac{\nu\left(  n\right)  }{\nu\left(
n-2j\right)  }$. Then%
\begin{gather*}
\left(  2\omega\right)  ^{n}\left\langle \phi_{n}\left(  \cdot;y_{0}\right)
,\phi_{n}\left(  \cdot;y_{1}\right)  \right\rangle _{\kappa,\omega}=\sum
_{j=0}^{\left\lfloor n/2\right\rfloor }d_{n,j}\Delta_{\kappa}^{j}%
q_{n-2j}\left(  \nabla_{\kappa};y_{0}\right)  \phi_{n}\left(  x;y_{1}\right)
\\
=q_{n}\left(  \nabla_{\kappa};y_{0}\right)  \phi_{n}\left(  x;y_{1}\right)
=\nu\left(  n\right)  \phi_{n}\left(  y_{0};y_{1}\right)  =\nu\left(
n\right)  \sum_{j=0}^{\left\lfloor n/2\right\rfloor }d_{n,j}\left\vert
y_{0}\right\vert ^{2j}q_{n-2j}\left(  y_{0};y_{1}\right)  .
\end{gather*}
This used Lemma \ref{pdelq}, $\Delta_{\kappa}^{2j}\phi_{n}\left(  \cdot
;y_{1}\right)  =0$ for $j>0$ and $\left\vert y_{0}\right\vert ^{2}=\tau+2$.
Note $\left\langle \phi_{n}\left(  \cdot;y_{0}\right)  ,\phi_{n}\left(
\cdot;y_{1}\right)  \right\rangle _{2}=\frac{\nu\left(  n\right)  }{\left(
2\omega\right)  ^{n}}\phi_{n}\left(  y_{1};y_{0}\right)  $.

Multiplication by Laguerre polynomials can be applied here. By Propositions
\ref{hmXLag} and \ref{Laghm} the polynomial $f_{n,m}\left(  x\right)
=L_{m}^{\left(  15\kappa+n+1/2\right)  }\left(  \omega\left\vert x\right\vert
^{2}\right)  \phi_{n}\left(  x;y_{0}\right)  $ satisfies $\widetilde
{\mathcal{H}}f_{n,m}=E_{n+2m}f_{n,m}$ and $\left\Vert f_{n,m}\right\Vert
_{2}^{2}=\frac{1}{m!}\left(  \frac{3}{2}+15\kappa+n\right)  _{m}~\left\Vert
\phi_{n}\right\Vert _{2}^{2}$ . Also $\Delta_{\kappa}f_{n,m}=-4\omega\left(
15\kappa+n+m+\frac{1}{2}\right)  f_{n,m-1}$, by Corollary \ref{delLagh}.

\begin{proposition}
\label{wtophi}Suppose $n=2,3,\ldots$and $y\in\mathcal{I}$ then%
\begin{align*}
w_{n}\left(  x;y\right)    & =\sum_{j=0}^{\left\lfloor n/2\right\rfloor
}\left(  -\frac{\tau+2}{4\omega}\right)  ^{j}\frac{1}{\left(  15\kappa
+3/2+n-2j\right)  _{j}}\\
& \times\frac{\nu\left(  n\right)  }{\nu\left(  n-2j\right)  }L_{j}^{\left(
15\kappa+1/2+n-2j\right)  }\left(  \omega\left\vert x\right\vert ^{2}\right)
\phi_{n-2j}\left(  x;y\right)  .
\end{align*}

\end{proposition}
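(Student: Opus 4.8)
The plan is to recognize $w_{n}(\cdot;y)g$ as a wavefunction and apply Proposition \ref{wav2laghm}. As noted above, Corollary \ref{deltam} gives $\Delta_{\kappa}^{j}w_{n}(x;y)=\frac{\nu(n)}{\nu(n-2j)}(\tau+2)^{j}w_{n-2j}(x;y)$, hence $\widetilde{\mathcal{H}}w_{n}(\cdot;y)=E_{n}w_{n}(\cdot;y)$, while $\pi_{n}w_{n}(\cdot;y)=q_{n}(\cdot;y)$, which is nonzero since $q_{n}(y;y)>0$ by the evaluations above. So the hypotheses of Proposition \ref{wav2laghm} hold for $p=w_{n}(\cdot;y)$ with $N=3$ and $\gamma_{\kappa}=15\kappa$.

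The next step is to substitute, term by term, the values $\Delta_{\kappa}^{j}\pi_{n}p=\Delta_{\kappa}^{j}q_{n}(\cdot;y)=\frac{\nu(n)}{\nu(n-2j)}(\tau+2)^{j}q_{n-2j}(\cdot;y)$ from Corollary \ref{deltam} into the formula of Proposition \ref{wav2laghm}. By linearity of $\Lambda_{n-2j}$ the $j$-th summand becomes a scalar multiple of $L_{j}^{(N/2+\gamma_{\kappa}+n-2j-1)}(\omega\left\vert x\right\vert ^{2})\,\Lambda_{n-2j}q_{n-2j}(x;y)$. The key identification is $\Lambda_{n-2j}q_{n-2j}(\cdot;y)=\phi_{n-2j}(\cdot;y)$: writing out $\Lambda_{m}$ from Proposition \ref{hmproj} at $N=3$, $\gamma_{\kappa}=15\kappa$ (so that $-N/2-\gamma_{\kappa}-m+2=-15\kappa-m+\tfrac{1}{2}$) and applying Corollary \ref{deltam} to each $\Delta_{\kappa}^{i}q_{m}(\cdot;y)$ reproduces term for term the defining series of $\phi_{m}(\cdot;y)$ in Definition \ref{defH3phi}.

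Finally I would collect the scalar factors: the constant $\frac{(-1)^{j}}{(4\omega)^{j}(N/2+\gamma_{\kappa}+n-2j)_{j}}$ of Proposition \ref{wav2laghm} together with the $(\tau+2)^{j}$ from Corollary \ref{deltam} gives $\left(-\frac{\tau+2}{4\omega}\right)^{j}\frac{1}{(15\kappa+3/2+n-2j)_{j}}$; the Laguerre index specializes to $N/2+\gamma_{\kappa}+n-2j-1=15\kappa+1/2+n-2j$; and the remaining factor $\frac{\nu(n)}{\nu(n-2j)}$ multiplies $\phi_{n-2j}(x;y)$. This is exactly the asserted identity. Apart from the identification of $\Lambda_{m}q_{m}$ with $\phi_{m}$, the work is routine bookkeeping with Pochhammer symbols, so I expect no genuine obstacle; the only care needed is to apply the substitutions $N=3$, $\gamma_{\kappa}=15\kappa$ consistently throughout.
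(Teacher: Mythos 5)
Your proposal is correct and follows the same route as the paper: specialize Proposition \ref{wav2laghm} to $N=3$, $\gamma_{\kappa}=15\kappa$, feed in $\Delta_{\kappa}^{j}q_{n}(\cdot;y)=(\tau+2)^{j}\frac{\nu(n)}{\nu(n-2j)}q_{n-2j}(\cdot;y)$ from Corollary \ref{deltam}, and identify $\Lambda_{n-2j}q_{n-2j}(\cdot;y)=\phi_{n-2j}(\cdot;y)$ with Definition \ref{defH3phi}. The bookkeeping of the Pochhammer symbols and the Laguerre index matches the stated formula, so nothing is missing.
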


\begin{proof}
Specialize Proposition \ref{wav2laghm} to $N=3,\gamma_{\kappa}=15\kappa$. By
Definition $w_{n}\left(  \cdot;y\right)  =\exp\left(  -\frac{\Delta_{\kappa}%
}{4\omega}\right)  q_{n}\left(  \cdot;y\right)  $ and $\Delta_{\kappa}%
^{j}q_{n}\left(  \cdot;y\right)  =\left(  \tau+2\right)  ^{j}\frac{\nu\left(
n\right)  }{\nu\left(  n-2j\right)  }q_{n-2j}\left(  \cdot;y\right)  $.
Finally $\Lambda_{n-2j}q_{n-2j}\left(  \cdot;y\right)  =\phi_{n-2j}\left(
\cdot;y\right)  $ (by Definition \ref{defH3phi}).
\end{proof}

\subsection{$G$-invariant wavefunctions}

The $G$-invariant polynomials are all of even degree, and some can be obtained
by summing $q_{2n}\left(  \cdot;y\right)  $ over $y\in\mathcal{I}_{+}$.

\begin{definition}
For $n=1,2,3\ldots$ let $q_{2n}^{G}\left(  x\right)  =\sum_{y\in
\mathcal{I}_{+}}q_{2n}\left(  x;y\right)  .$
\end{definition}

In the formula $q_{2n}\left(  x;y_{0}\right)  =$ $\sum_{j=0}^{n}\left\langle
x,y_{0}\right\rangle ^{2n-2j}p_{2j}\left(  x\right)  $ (recall $\sum
_{j=0}^{\infty}p_{2n}\left(  x\right)  r^{2n}=\prod\limits_{y\in
\mathcal{I}_{+}}\left(  1-r^{2}\left\langle x,y\right\rangle ^{2}\right)
^{-\kappa}$) the term $\left\langle x,y_{0}\right\rangle ^{2m}$ is replaced by
$s_{2m}\left(  x\right)  :=\sum_{y\in\mathcal{I}_{+}}\left\langle
x,y\right\rangle ^{2m}$. A short calculation shows that%
\begin{align*}
s_{2m}\left(  x\right)   &  =2\left(  \tau^{2m}+1\right)  \left(  x_{1}%
^{2m}+x_{2}^{2m}+x_{3}^{2m}\right)  \\
&  +2\sum_{j=1}^{m-1}\binom{2m}{2j}\tau^{2j}\left\{  x_{1}^{2j}x_{2}%
^{2m-2j}+x_{2}^{2j}x_{3}^{2m-2j}+x_{3}^{2j}x_{1}^{2m-2j}\right\}  .
\end{align*}
Some low degree examples are%
\begin{align*}
s_{2}\left(  x\right)    & =2\left(  \tau+2\right)  \left\vert x\right\vert
^{2},~s_{4}\left(  x\right)  =6\left(  \tau+2\right)  \left\vert x\right\vert
^{4},\\
s_{6}\left(  x\right)    & =4\left(  4\tau+3\right)  \left\vert x\right\vert
^{6}+6\left(  2\tau-1\right)  \prod\limits_{y\in\mathcal{I}_{+}}\left\langle
x,y\right\rangle ,\\
s_{10}\left(  x\right)    & =5\left(  5\tau+3\right)  \prod\limits_{y\in
\mathcal{K}_{+}}\left\langle x,y\right\rangle +75\left(  3\tau+1\right)
\left\vert x\right\vert ^{4}\prod\limits_{y\in\mathcal{I}_{+}}\left\langle
x,y\right\rangle \\
& +10\left(  11\tau+7\right)  \left\vert x\right\vert ^{10}.
\end{align*}
First consider invariant wavefunctions produced from $q_{2n}^{G}$.

\begin{definition}
For $n=1,2,\ldots$let
\[
w_{2n}^{G}\left(  x\right)  :=%
%TCIMACRO{\dsum \limits_{j=0}^{n}}%
%BeginExpansion
{\displaystyle\sum\limits_{j=0}^{n}}
%EndExpansion
\left(  -1\right)  ^{j}\dfrac{\left(  \tau+2\right)  ^{j}}{\left(
4\omega\right)  ^{j}j!}\dfrac{\nu\left(  2n\right)  }{\nu\left(  2n-2j\right)
}q_{2n-2j}^{G}\left(  x\right)  .
\]

\end{definition}

That is $w_{2n}^{G}\left(  x\right)  =\sum_{y\in\mathcal{I}_{+}}w_{2n}\left(
x;y\right)  $ and with $\left\langle y_{0},y_{1}\right\rangle =\tau$%
\begin{align}
\left\Vert w_{2n}^{G}\right\Vert _{2}^{2} &  =\sum_{y,y^{\prime}\in
\mathcal{I}_{+}}\left\langle w_{2n}\left(  \cdot;y\right)  ,w_{2n}\left(
\cdot;y^{\prime}\right)  \right\rangle _{2}\label{wGnorm}\\
&  =6\left\Vert w_{2n}\left(  \cdot;y_{0}\right)  \right\Vert _{2}%
^{2}+30\left\langle w_{2n}\left(  \cdot;y_{0}\right)  ,w_{2n}\left(
\cdot;y_{1}\right)  \right\rangle _{2}\\
&  =6\left(  2\omega\right)  ^{-2n}\nu\left(  2n\right)  \tau^{2n}\left(
Y_{n}^{\left(  0\right)  }+5Y_{n}^{\left(  1\right)  }\right)  \nonumber
\end{align}
from (\ref{w2norm}) and (\ref{w2ip}). There is an identity for polynomials of
hypergeometric type%
\[
\sum_{j=0}^{n}\frac{\left(  \kappa+1\right)  _{j}\left(  5\kappa\right)
_{n-j}}{j!\left(  n-j\right)  !}z^{j}+5\sum_{j=0}^{n}\frac{\left(
\kappa\right)  _{j}\left(  5\kappa+1\right)  _{n-j}}{j!\left(  n-j\right)
!}z^{j}=\left(  6+\frac{n}{\kappa}\right)  \sum_{j=0}^{n}\frac{\left(
\kappa\right)  _{j}\left(  5\kappa\right)  _{n-j}}{j!\left(  n-j\right)
!}z^{j},
\]
and so let $Y_{n}^{\left(  2\right)  }:=$ $\left(  6+\dfrac{n}{\kappa}\right)
\sum\limits_{j=0}^{n}\dfrac{\left(  \kappa\right)  _{j}\left(  5\kappa\right)
_{n-j}}{j!\left(  n-j\right)  !}5^{j}=Y_{n}^{\left(  0\right)  }%
+5Y_{n}^{\left(  1\right)  }$.

Use the same idea on the harmonic polynomials.

For $n=1,2,\ldots$let
\begin{align*}
\phi_{2n}^{G}\left(  x\right)   &  :=%
%TCIMACRO{\dsum \limits_{j=0}^{n}}%
%BeginExpansion
{\displaystyle\sum\limits_{j=0}^{n}}
%EndExpansion
\dfrac{\left(  \tau+2\right)  ^{j}\left\vert x\right\vert ^{2j}}%
{4^{j}j!\left(  -15\kappa-2n+1/2\right)  _{j}}\dfrac{\nu\left(  2n\right)
}{\nu\left(  2n-2j\right)  }q_{2n-2j}^{G}\left(  x\right)  \\
&  =\sum_{y\in\mathcal{I}_{+}}\phi_{2n}\left(  x;y\right)  .
\end{align*}
(Thus the expansion in Proposition \ref{wtophi} is valid when $n$ is even and
$w_{n}\left(  \cdot;y\right)  ,\phi_{n-2j}\left(  \cdot;y\right)  $ are
replaced by $w_{n}^{G},\phi_{n-2j}^{G}$ respectively.) Of course for this to
be nonzero it is necessary that $2n\neq2,4,8,14$. In a similar calculation to
(\ref{wGnorm})
\begin{align*}
\left\Vert \phi_{2n}^{G}\right\Vert _{2}^{2} &  =\sum_{y,y^{\prime}%
\in\mathcal{I}_{+}}\left\langle \phi_{2n}\left(  \cdot;y\right)  ,\phi
_{2n}\left(  \cdot;y^{\prime}\right)  \right\rangle _{2}=6\left\Vert \phi
_{2n}\left(  \cdot;y_{0}\right)  \right\Vert _{2}^{2}+30\left\langle \phi
_{2n}\left(  \cdot;y_{0}\right)  ,\phi_{2n}\left(  \cdot;y_{1}\right)
\right\rangle _{2}\\
&  =6\tau^{2n}\frac{\nu\left(  2n\right)  }{\left(  2\omega\right)  ^{2n}}%
\sum_{j=0}^{n}\dfrac{5^{j}\left(  -6\kappa-n\right)  _{j}\left(
-5\kappa-n+1/2\right)  _{j}}{j!\left(  -15\kappa-2n+1/2\right)  _{j}}%
Y_{n-j}^{\left(  2\right)  }.
\end{align*}
The calculation used $\left(  \tau+2\right)  ^{2j}=5^{j}\tau^{2j}$ and the
expression for $\nu(2n)/\nu(2n-2j)$ from (\ref{nu(2n)/(2n-2j)}). Also
$\left\Vert \phi_{2n}^{G}\right\Vert _{2}^{2}=6\frac{\nu\left(  2n\right)
}{\left(  2\omega\right)  ^{2n}}\phi_{2n}^{G}\left(  y\right)  ,y\in
\mathcal{I}$. This formula can be used in symbolic computation for reasonably
small degrees ($\leq50$ or so). It is one of the most effective results of the
paper, since the invariant harmonic polynomials are so fundamental in the
analysis of wavefunctions. Trying direct calculation of $\left\langle
\phi_{2n}^{G},\phi_{2n}^{G}\right\rangle _{\kappa,\omega}$ by symbolic
computation becomes very time and resource demanding for $2n>10$. We find
\begin{align*}
\left\Vert \phi_{6}^{G}\right\Vert _{2}^{2} &  =\left(  2^{6}\times15\right)
\left(  \frac{\tau}{\omega}\right)  ^{6}\left(  6\kappa+1\right)  _{3}\left(
5\kappa+\frac{1}{2}\right)  _{3}\frac{\left(  5\kappa+1\right)  \left(
2\kappa+1\right)  }{30\kappa+7}\\
\left\Vert \phi_{10}^{G}\right\Vert _{2}^{2} &  =\left(  2^{9}\times3\right)
\left(  \frac{\tau}{\omega}\right)  ^{10}\left(  6\kappa+1\right)  _{5}\left(
5\kappa+\frac{1}{2}\right)  _{5}\frac{\left(  5\kappa+1\right)  _{2}\left(
6\kappa+5\right)  }{\left(  30\kappa+11\right)  \left(  30\kappa+17\right)
}\\
\left\Vert \phi_{12}^{G}\right\Vert _{2}^{2} &  =\left(  2^{11}\times
15\right)  \left(  \frac{\tau}{\omega}\right)  ^{12}\left(  6\kappa+1\right)
_{6}\left(  5\kappa+\frac{1}{2}\right)  _{6}\frac{\left(  5\kappa+1\right)
_{3}\left(  \kappa+1\right)  \left(  10\kappa+9\right)  }{\left(
30\kappa+13\right)  \left(  30\kappa+19\right)  \left(  30\kappa+23\right)
}\\
\left\Vert \phi_{16}^{G}\right\Vert _{2}^{2} &  =\left(  2^{16}\times
15\right)  \left(  \frac{\tau}{\omega}\right)  ^{16}\left(  6\kappa+1\right)
_{8}\left(  5\kappa+\frac{1}{2}\right)  _{8}\\
&  \times\frac{\left(  5\kappa+1\right)  _{4}\left(  \kappa+1\right)  \left(
3\kappa+4\right)  }{\left(  30\kappa+17\right)  \left(  30\kappa+23\right)
\left(  30\kappa+27\right)  \left(  30\kappa+29\right)  }.
\end{align*}
Necessarily the sum vanishes for $n=1,2,4,7$. It appears that if $\dim\left(
\mathcal{P}_{2n,\kappa}\cap\mathcal{P}^{G}\right)  =1$ then the squared norm
is a quotient of linear factors in $\kappa$. This does not occur at $2n=30$,
where $\dim\left(  \mathcal{P}_{30,\kappa}\cap\mathcal{P}^{G}\right)  =2$.
Other $G$-invariant wavefunctions can be dealt with: set $f_{n,m}^{G}\left(
x\right)  =L_{m}^{\left(  15\kappa+2n+1/2\right)  }\left(  \omega\left\vert
x\right\vert ^{2}\right)  \phi_{2n}^{G}\left(  x\right)  $ then $\widetilde
{\mathcal{H}}f_{n,m}^{G}=E_{2n+2m}f_{n,m}^{G}$ and $\left\Vert f_{n,m}%
^{G}\right\Vert _{2}^{2}=\dfrac{1}{m!}\left(  15\kappa+2n+\frac{3}{2}\right)
_{m}~\left\Vert \phi_{2n}^{G}\right\Vert _{2}^{2}$ .

\section{\label{2nd6thOps}Second and sixth order operators on wavefunctions}

These are some remarks on the actions of $\mathcal{J}$ and $\widetilde
{H}^{\left(  3\right)  }$ on the $w_{n}$ and $\phi_{n}$ functions. The
polynomials $\mathcal{J}\phi_{n}\left(  \cdot,y\right)  $ have a simple form
(essentially eigenfunctions).

\begin{lemma}
Suppose $\phi\in\mathcal{P}_{n,\kappa}$ then $\mathcal{J}\phi=-\left\langle
x,\nabla_{\kappa}\right\rangle ^{2}\phi-\left\langle x,\nabla_{\kappa
}\right\rangle \phi-2\kappa\sum\limits_{v\in R_{+}}\sigma_{v}\left\langle
x,\nabla_{\kappa}\right\rangle \phi$, and $\left\langle x,\nabla_{\kappa
}\right\rangle \phi=\left(  n+15\kappa\right)  \phi-\kappa\sum\limits_{v\in
R_{+}}\sigma_{v}\phi$.
\end{lemma}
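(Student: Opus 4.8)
The plan is to obtain both identities by direct specialization, with essentially no new computation. For the first identity I would invoke Theorem~\ref{angsq} in the icosahedral setting: here $N=3$, there is a single conjugacy class of reflections so $\kappa_v\equiv\kappa$, and $R_+$ has $15$ elements so $\gamma_\kappa=15\kappa$. Theorem~\ref{angsq} states $\mathcal{J}=|x|^2\Delta_\kappa-\langle x,\nabla_\kappa\rangle^2-(N-2)\langle x,\nabla_\kappa\rangle-2\sum_{v\in R_+}\kappa_v\sigma_v\langle x,\nabla_\kappa\rangle$. Since $\phi\in\mathcal{P}_{n,\kappa}$ is harmonic, $\Delta_\kappa\phi=0$ kills the first term, while $N-2=1$ supplies the coefficient on $\langle x,\nabla_\kappa\rangle$ and $\kappa_v\equiv\kappa$ pulls the constant out of the last sum; this yields exactly $\mathcal{J}\phi=-\langle x,\nabla_\kappa\rangle^2\phi-\langle x,\nabla_\kappa\rangle\phi-2\kappa\sum_{v\in R_+}\sigma_v\langle x,\nabla_\kappa\rangle\phi$.

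For the second identity I would expand $\langle x,\nabla_\kappa\rangle=\sum_{i=1}^{N}x_i\mathcal{D}_i$ using the definition of the Dunkl operator. The gradient part of $\mathcal{D}_i$ contributes $\sum_i x_i\partial_i=\delta$, while the reflection part contributes $\sum_{v\in R_+}\kappa_v\,\frac{\langle x,v\rangle}{\langle x,v\rangle}(1-\sigma_v)=\sum_{v\in R_+}\kappa_v(1-\sigma_v)$, so that $\langle x,\nabla_\kappa\rangle=\delta+\sum_{v\in R_+}\kappa_v(1-\sigma_v)$ --- precisely the identity already used in the proof of the Corollary to Theorem~\ref{angsq}. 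Applying this to a homogeneous $\phi\in\mathcal{P}_n$ gives $\delta\phi=n\phi$ and $\sum_{v\in R_+}\kappa_v\phi=\gamma_\kappa\phi=15\kappa\phi$, hence $\langle x,\nabla_\kappa\rangle\phi=(n+15\kappa)\phi-\kappa\sum_{v\in R_+}\sigma_v\phi$.

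There is no real obstacle in the Lemma itself; it is pure bookkeeping from Theorem~\ref{angsq} and the definition of $\delta$. The only point worth flagging, with an eye to the subsequent evaluation of $\mathcal{J}\phi_n(\cdot;y)$, is that $\sum_{v\in R_+}\sigma_v$ is central in the group algebra of $G$ (a weighted sum over one conjugacy class), so iterating $\langle x,\nabla_\kappa\rangle$ and $\mathcal{J}$ on $\phi$ stays controllable through the action of that central element on the finite-dimensional $G$-span of $\phi$; but for the statement as given this remark is not needed.
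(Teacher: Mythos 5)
Your proposal is correct and matches the paper's own (very terse) proof: the first identity is Theorem~\ref{angsq} specialized to $N=3$ with $\Delta_\kappa\phi=0$, and the second follows from $\left\langle x,\nabla_\kappa\right\rangle=\delta+\sum_{v\in R_+}\kappa_v\left(1-\sigma_v\right)$ applied to a homogeneous polynomial of degree $n$ with $\gamma_\kappa=15\kappa$. Nothing is missing.
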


\begin{proof}
This is the specialization of Theorem \ref{angsq} to $N=3$; note
$\Delta_{\kappa}\phi=0$.
\end{proof}

\begin{proposition}
Suppose $y\in\mathcal{I}_{+}$ then%
\[
\mathcal{J}\phi_{2n+1}\left(  \cdot;y\right)  =-2\left(  10\kappa+2n+1\right)
\left(  10\kappa+n+1\right)  \phi_{2n+1}\left(  \cdot;y\right)  .
\]

\end{proposition}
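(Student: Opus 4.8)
The plan is to deduce this from the Lemma above, reducing everything to the action of $\langle x,\nabla_{\kappa}\rangle$ and of the central element $S:=\sum_{v\in R_{+}}\sigma_{v}$ of the group algebra on $\phi:=\phi_{2n+1}(\cdot;y)$. Since $\phi\in\mathcal P_{2n+1,\kappa}$, the Lemma gives $\mathcal J\phi=-\langle x,\nabla_{\kappa}\rangle^{2}\phi-\langle x,\nabla_{\kappa}\rangle\phi-2\kappa S\langle x,\nabla_{\kappa}\rangle\phi$, while its second part (with $\delta\phi=(2n+1)\phi$ and $\#R_{+}=15$) gives $\langle x,\nabla_{\kappa}\rangle\phi=(2n+1+15\kappa)\phi-\kappa S\phi$. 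Moreover $\langle x,\nabla_{\kappa}\rangle$ preserves $\mathcal P_{2n+1}$, on which it equals $(2n+1+15\kappa)I-\kappa S$, so it commutes with $S$ there. Thus the whole computation reduces to evaluating $S\phi$.

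The key step is the claim $S\phi=5\phi$. First I would split the $15$ positive roots using the fact recorded in Section \ref{IcosGrp}: exactly $5$ of the $v\in R_{+}$ satisfy $\langle y,v\rangle=0$, and the remaining $10$ map $y$ bijectively onto $\mathcal I\setminus\{\pm y\}$. For each $v\in R_{+}$, since $q_{m}(\cdot;y)=\sum_{j}\langle x,y\rangle^{m-2j}p_{2j}$ with $G$-invariant $p_{2j}$, and $\langle x\sigma_{v},y\rangle=\langle x,y\sigma_{v}\rangle$, one gets $\sigma_{v}q_{m}(\cdot;y)=q_{m}(\cdot;y\sigma_{v})$, hence $\sigma_{v}\phi=\phi_{2n+1}(\cdot;y\sigma_{v})$ (the factors $|x|^{2j}$ being $G$-invariant). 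When $\langle y,v\rangle=0$ this is just $\phi$, accounting for $5\phi$. For the other $10$ reflections the points $y\sigma_{v}$ run over $\mathcal I\setminus\{\pm y\}$, which is a disjoint union of $5$ antipodal pairs; since $\phi_{2n+1}$ is a combination of odd-index $q$'s and $q_{2k+1}(\cdot;-z)=-q_{2k+1}(\cdot;z)$, we have $\phi_{2n+1}(\cdot;-z)=-\phi_{2n+1}(\cdot;z)$, so these $10$ terms cancel in pairs. Therefore $S\phi=5\phi$.

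Granting the claim, $\langle x,\nabla_{\kappa}\rangle\phi=(2n+1+15\kappa-5\kappa)\phi=(10\kappa+2n+1)\phi$, so writing $\lambda:=10\kappa+2n+1$ the Lemma's expression becomes $\mathcal J\phi=-\lambda^{2}\phi-\lambda\phi-2\kappa\cdot5\lambda\phi=-\lambda(\lambda+1+10\kappa)\phi$, and $\lambda+1+10\kappa=20\kappa+2n+2=2(10\kappa+n+1)$ gives exactly $-2(10\kappa+2n+1)(10\kappa+n+1)\phi$. The only real content is the Claim; I expect the main obstacle to be the orbit/antipodal bookkeeping — in particular correctly identifying $\sigma_{v}\phi_{2n+1}(\cdot;y)$ with $\phi_{2n+1}(\cdot;y\sigma_{v})$ and observing that $\mathcal I\setminus\{\pm y\}$ splits into antipodal pairs so the odd parity of $q_{2k+1}$ forces the cancellation.
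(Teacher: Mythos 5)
Your proposal is correct and follows essentially the same route as the paper: use the Lemma, show $\sum_{v\in R_{+}}\sigma_{v}q_{2m+1}(\cdot;y)=5\,q_{2m+1}(\cdot;y)$ via the five reflections fixing $y$ and the cancellation of the other ten terms in antipodal pairs by the odd parity $q_{2m+1}(\cdot;-z)=-q_{2m+1}(\cdot;z)$, and then substitute the resulting eigenvalue $\langle x,\nabla_{\kappa}\rangle\phi=(10\kappa+2n+1)\phi$. Your write-up is in fact slightly more explicit than the paper about why the ten non-fixing reflections cancel.
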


\begin{proof}
The harmonic polynomial $\phi_{2n+1}\left(  \cdot;y\right)  $ is a sum of
$\left\vert x\right\vert ^{2n-2m}q_{2m+1}\left(  \cdot;y\right)  $ over $0\leq
m\leq n$. Since $\sigma_{v}q_{2m+1}\left(  x;y\right)  =q_{2m+1}\left(
x;y\sigma_{v}\right)  $ and 5 reflections fix $y$ and the other 10 map $y$ to
$\mathcal{I}\backslash\left\{  \pm y\right\}  $ it follows that $\sum
\limits_{v\in R_{+}}\sigma_{v}q_{2m+1}\left(  \cdot;y\right)  =5q_{2m+1}%
\left(  \cdot;y\right)  $. By the Lemma $\left\langle x,\nabla_{\kappa
}\right\rangle \phi_{2n+1}\left(  \cdot;y\right)  =\left(  10\kappa
+2n+1\right)  \phi_{2n+1}\left(  \cdot;y\right)  $ and thus%
\begin{align*}
\mathcal{J}\phi_{2n+1}\left(  \cdot;y\right)   &  =-\left(  10\kappa
+2n+1\right)  \left\{  \left(  10\kappa+2n+1\right)  +1+10\kappa\right\}
\phi_{2n+1}\left(  \cdot;y\right)  \\
&  =-2\left(  10\kappa+2n+1\right)  \left(  10\kappa+n+1\right)  \phi
_{2n+1}\left(  \cdot;y\right)  .
\end{align*}

\end{proof}

\begin{proposition}
Suppose $n\geq3$ then $\mathcal{J}\phi_{2n}^{G}=-2n\left(  30\kappa
+2n+1\right)  \phi_{2n}^{G}$.
\end{proposition}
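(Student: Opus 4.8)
The plan is to exploit the $G$-invariance of $\phi_{2n}^{G}$ to collapse all the reflection sums that appear in the Lemma above. First I would record that $\phi_{2n}^{G}=\sum_{y\in\mathcal{I}_{+}}\phi_{2n}(\cdot;y)$ is a $G$-invariant homogeneous harmonic polynomial of degree $2n$: invariance follows because $wq_{m}(x;y)=q_{m}(xw;y)=q_{m}(x;yw^{-1})$ (as $w\in O_{3}(\mathbb{R})$), $q_{2n}(x;-y)=q_{2n}(x;y)$, and $\mathcal{I}$ is a single $G$-orbit, so summing over $\mathcal{I}_{+}$ is half the sum over $\mathcal{I}$ and is permuted by $G$; harmonicity is Corollary~\ref{deltam} applied term by term. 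Hence $\sigma_{v}\phi_{2n}^{G}=\phi_{2n}^{G}$ for each of the $15$ reflections $\sigma_{v}$, $v\in R_{+}$, so $\sum_{v\in R_{+}}\sigma_{v}\phi_{2n}^{G}=15\,\phi_{2n}^{G}$.

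Next I apply the Lemma above with $\phi=\phi_{2n}^{G}$, whose degree is $2n$ (so the ``$n$'' in the Lemma is $2n$). Its second identity gives $\langle x,\nabla_{\kappa}\rangle\phi_{2n}^{G}=(2n+15\kappa)\phi_{2n}^{G}-\kappa\cdot15\,\phi_{2n}^{G}=2n\,\phi_{2n}^{G}$; equivalently, $\langle x,\nabla_{\kappa}\rangle=\delta+\kappa\sum_{v\in R_{+}}(1-\sigma_{v})$ reduces to $\delta$ on invariants. Since $2n\,\phi_{2n}^{G}$ is again $G$-invariant, I iterate to get $\langle x,\nabla_{\kappa}\rangle^{2}\phi_{2n}^{G}=4n^{2}\phi_{2n}^{G}$ and $\sum_{v\in R_{+}}\sigma_{v}\langle x,\nabla_{\kappa}\rangle\phi_{2n}^{G}=2n\sum_{v\in R_{+}}\sigma_{v}\phi_{2n}^{G}=30n\,\phi_{2n}^{G}$.

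Substituting these three evaluations into the first identity of the Lemma, namely $\mathcal{J}\phi_{2n}^{G}=-\langle x,\nabla_{\kappa}\rangle^{2}\phi_{2n}^{G}-\langle x,\nabla_{\kappa}\rangle\phi_{2n}^{G}-2\kappa\sum_{v\in R_{+}}\sigma_{v}\langle x,\nabla_{\kappa}\rangle\phi_{2n}^{G}$, I obtain $\mathcal{J}\phi_{2n}^{G}=-4n^{2}\phi_{2n}^{G}-2n\,\phi_{2n}^{G}-60\kappa n\,\phi_{2n}^{G}=-2n(2n+1+30\kappa)\,\phi_{2n}^{G}$, which is the claim. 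Finally I would note that at the two remaining values $n=4,7$ (where $2n=8,14$) one has $\phi_{2n}^{G}=0$, so the identity holds trivially there, while for the other $n\ge3$ the displayed computation is the substance. There is no real obstacle here beyond bookkeeping; the only point deserving a line of justification is the $G$-invariance of $\phi_{2n}^{G}$, everything else being a direct specialization of Theorem~\ref{angsq} together with the Lemma.
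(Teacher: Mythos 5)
Your proof is correct and follows essentially the same route as the paper: both apply the Lemma (the specialization of Theorem~\ref{angsq} to harmonic polynomials) together with the observation that $G$-invariance forces $\sum_{v\in R_{+}}\sigma_{v}\phi_{2n}^{G}=15\phi_{2n}^{G}$, hence $\left\langle x,\nabla_{\kappa}\right\rangle \phi_{2n}^{G}=2n\phi_{2n}^{G}$, and then substitute. Your additional remarks justifying the $G$-invariance of $\phi_{2n}^{G}$ and noting the trivial cases where $\phi_{2n}^{G}=0$ are harmless elaborations of what the paper leaves implicit.
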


\begin{proof}
Similarly to the previous proof $\phi_{2n}^{G}$ is a sum of $\left\vert
x\right\vert ^{2n-2m}q_{2m}^{G}$; but in this case $\sum\limits_{v\in R_{+}%
}\sigma_{v}\phi_{2n}^{G}=15\phi_{2n}^{G}$ so that $\left\langle x,\nabla
_{\kappa}\right\rangle \phi_{2n}^{G}=2n\phi_{2n}^{G}$ and $\mathcal{J}%
\phi_{2n}^{G}=-\left(  2n\right)  ^{2}-2n-2\left(  15\kappa\right)  \left(
2n\right)  =-2n\left(  30\kappa+2n+1\right)  \phi_{2n}^{G}$.
\end{proof}

\begin{proposition}
Suppose $y\in\mathcal{I}_{+}$ then%
\[
\mathcal{J}\phi_{2n}\left(  \cdot;y_{0}\right)  =-2\left(  6\kappa+n\right)
\left(  18\kappa+2n+1\right)  \phi_{2n}\left(  \cdot;y\right)  +2\kappa\left(
18\kappa+1\right)  \phi_{2n}^{G}.
\]

\end{proposition}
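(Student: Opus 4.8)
The plan is to reduce everything to linear combinations of the two polynomials $\phi_{2n}\left(\cdot;y\right)$ and $\phi_{2n}^{G}$, both of which lie in $\mathcal{P}_{2n,\kappa}$, and then invoke the Lemma (i.e.\ Theorem \ref{angsq} specialized to $N=3$, $\gamma_{\kappa}=15\kappa$), so that $\mathcal{J}\phi=-\left\langle x,\nabla_{\kappa}\right\rangle ^{2}\phi-\left\langle x,\nabla_{\kappa}\right\rangle \phi-2\kappa\sum_{v\in R_{+}}\sigma_{v}\left\langle x,\nabla_{\kappa}\right\rangle \phi$ for $\phi\in\mathcal{P}_{2n,\kappa}$.

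First I would compute the action of $\sum_{v\in R_{+}}\sigma_{v}$ on $\phi_{2n}\left(\cdot;y\right)$. Each $\sigma_{v}$ is an isometry, hence commutes with multiplication by $\left\vert x\right\vert ^{2}$, and $\mathcal{I}$ is a $G$-orbit, so $\sigma_{v}q_{m}\left(x;y\right)=q_{m}\left(x;y\sigma_{v}\right)$ and therefore $\sigma_{v}\phi_{2n}\left(\cdot;y\right)=\phi_{2n}\left(\cdot;y\sigma_{v}\right)$ (the coefficients in Definition \ref{defH3phi} do not depend on the chosen vertex). Of the $15$ reflections, $5$ fix $y$ and the remaining $10$ carry $y$ bijectively onto $\mathcal{I}\backslash\left\{\pm y\right\}$; combining this with $\phi_{2n}\left(\cdot;-z\right)=\phi_{2n}\left(\cdot;z\right)$ (because every $q_{2n-2j}$ has even index) and $\sum_{z\in\mathcal{I}}\phi_{2n}\left(\cdot;z\right)=2\phi_{2n}^{G}$ gives
\[
\sum_{v\in R_{+}}\sigma_{v}\phi_{2n}\left(\cdot;y\right)=5\phi_{2n}\left(\cdot;y\right)+\bigl(2\phi_{2n}^{G}-2\phi_{2n}\left(\cdot;y\right)\bigr)=3\phi_{2n}\left(\cdot;y\right)+2\phi_{2n}^{G}.
\]
(As a check, summing over $y\in\mathcal{I}_{+}$ recovers $\sum_{v}\sigma_{v}\phi_{2n}^{G}=15\phi_{2n}^{G}$, used in the preceding proposition.) The Lemma then yields
\[
\left\langle x,\nabla_{\kappa}\right\rangle \phi_{2n}\left(\cdot;y\right)=\left(2n+15\kappa\right)\phi_{2n}\left(\cdot;y\right)-\kappa\bigl(3\phi_{2n}\left(\cdot;y\right)+2\phi_{2n}^{G}\bigr)=\left(2n+12\kappa\right)\phi_{2n}\left(\cdot;y\right)-2\kappa\phi_{2n}^{G},
\]
while from the proof of the preceding proposition $\left\langle x,\nabla_{\kappa}\right\rangle \phi_{2n}^{G}=2n\phi_{2n}^{G}$ and $\sum_{v}\sigma_{v}\phi_{2n}^{G}=15\phi_{2n}^{G}$.

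Thus the span of $\left\{\phi_{2n}\left(\cdot;y\right),\phi_{2n}^{G}\right\}$ is stable under both $\left\langle x,\nabla_{\kappa}\right\rangle$ and $\sum_{v}\sigma_{v}$, so I would simply iterate: express $\left\langle x,\nabla_{\kappa}\right\rangle ^{2}\phi_{2n}\left(\cdot;y\right)$ and $\sum_{v}\sigma_{v}\left\langle x,\nabla_{\kappa}\right\rangle \phi_{2n}\left(\cdot;y\right)$ in this basis, substitute into the formula for $\mathcal{J}\phi$ from the Lemma, and collect coefficients. The coefficient of $\phi_{2n}\left(\cdot;y\right)$ factors as $-\left(2n+12\kappa\right)\left(2n+18\kappa+1\right)=-2\left(6\kappa+n\right)\left(18\kappa+2n+1\right)$, and the coefficient of $\phi_{2n}^{G}$ collapses to $36\kappa^{2}+2\kappa=2\kappa\left(18\kappa+1\right)$, which is the assertion.

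There is no real obstacle here; the one step needing care is the orbit count producing $\sum_{v\in R_{+}}\sigma_{v}\phi_{2n}\left(\cdot;y\right)=3\phi_{2n}\left(\cdot;y\right)+2\phi_{2n}^{G}$ — one must use that no reflection of $H_{3}$ sends a vertex $y\in\mathcal{I}$ to $-y$ (no vertex of $\mathcal{I}$ is parallel to a root), so that the $10$ non-fixing reflections hit each point of $\mathcal{I}\backslash\left\{\pm y\right\}$ exactly once. Everything after that is bookkeeping in a two-dimensional space. Note the identity remains valid in the degenerate degrees $2n\in\left\{2,4,8,14\right\}$, where $\phi_{2n}^{G}=0$ and the relation merely says $\phi_{2n}\left(\cdot;y\right)$ is a $\mathcal{J}$-eigenfunction.
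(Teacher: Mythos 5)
Your proposal is correct and follows essentially the same route as the paper: establish $\sum_{v\in R_{+}}\sigma_{v}q_{2m}\left(\cdot;y\right)=3q_{2m}\left(\cdot;y\right)+2q_{2m}^{G}$ via the orbit count, deduce $\left\langle x,\nabla_{\kappa}\right\rangle \phi_{2n}\left(\cdot;y\right)=\left(2n+12\kappa\right)\phi_{2n}\left(\cdot;y\right)-2\kappa\phi_{2n}^{G}$, and substitute into the Lemma's formula for $\mathcal{J}$. Your explicit bookkeeping in the two-dimensional span (whose coefficients I have checked: the $\phi_{2n}\left(\cdot;y\right)$ coefficient is $-\left(2n+12\kappa\right)\left(2n+18\kappa+1\right)$ and the $\phi_{2n}^{G}$ coefficient is $2\kappa\left(18\kappa+1\right)$) merely spells out what the paper leaves as "substituting these values."
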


\begin{proof}
In this case $\sum_{v\in R_{+}}\sigma_{v}q_{2m}\left(  x;y\right)
=3q_{2m}\left(  x;y\right)  +2q_{2m}^{G}\left(  x\right)  $ and
\[
\left\langle x,\nabla_{\kappa}\right\rangle \phi_{2n}\left(  \cdot
;y_{0}\right)  =\left(  2n+15\kappa-3\kappa\right)  \phi_{2n}\left(
\cdot;y\right)  -2\kappa\phi_{2n}^{G}.
\]
Substituting these values in the formula for $\mathcal{J}$ and using
$\sum\limits_{v\in R_{+}}\sigma_{v}\phi_{2n}^{G}=15\phi_{2n}^{G}$ gives the
stated result.
\end{proof}

Recall that $\phi_{2n}^{G}=0$ for $2n=2,4,8,14$.

\begin{corollary}
$\mathcal{J}\left(  \phi_{2n}\left(  \cdot;y_{0}\right)  -\frac{1}{6}\phi
_{2n}^{G}\right)  =-2\left(  6\kappa+n\right)  \left(  18\kappa+2n+1\right)
\left(  \phi_{2n}\left(  \cdot;y_{0}\right)  -\frac{1}{6}\phi_{2n}^{G}\right)
.$
\end{corollary}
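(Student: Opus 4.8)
The plan is to combine the two preceding Propositions by linearity of $\mathcal{J}$ and then reduce the claim to a single scalar identity in $\kappa$ and $n$. First I would write
\[
\mathcal{J}\left(\phi_{2n}(\cdot;y_{0})-\tfrac{1}{6}\phi_{2n}^{G}\right)=\mathcal{J}\phi_{2n}(\cdot;y_{0})-\tfrac{1}{6}\,\mathcal{J}\phi_{2n}^{G},
\]
and then substitute the formula $\mathcal{J}\phi_{2n}(\cdot;y_{0})=-2(6\kappa+n)(18\kappa+2n+1)\phi_{2n}(\cdot;y_{0})+2\kappa(18\kappa+1)\phi_{2n}^{G}$ from the Proposition just above, together with $\mathcal{J}\phi_{2n}^{G}=-2n(30\kappa+2n+1)\phi_{2n}^{G}$. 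The $\phi_{2n}(\cdot;y_{0})$-term already carries the desired coefficient $-2(6\kappa+n)(18\kappa+2n+1)$, so the whole statement reduces to checking that the coefficient of $\phi_{2n}^{G}$ in the result equals $\tfrac{1}{3}(6\kappa+n)(18\kappa+2n+1)$ (which is exactly $-2(6\kappa+n)(18\kappa+2n+1)\cdot(-\tfrac{1}{6})$), i.e.
\[
2\kappa(18\kappa+1)+\tfrac{n}{3}(30\kappa+2n+1)=\tfrac{1}{3}(6\kappa+n)(18\kappa+2n+1).
\]

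Clearing the factor $3$ this is $6\kappa(18\kappa+1)+n(30\kappa+2n+1)=(6\kappa+n)(18\kappa+2n+1)$; expanding both sides gives $108\kappa^{2}+30\kappa n+6\kappa+2n^{2}+n$ on each, so the identity holds. Consequently
\[
\mathcal{J}\left(\phi_{2n}(\cdot;y_{0})-\tfrac{1}{6}\phi_{2n}^{G}\right)=-2(6\kappa+n)(18\kappa+2n+1)\phi_{2n}(\cdot;y_{0})+\tfrac{1}{3}(6\kappa+n)(18\kappa+2n+1)\phi_{2n}^{G},
\]
which is precisely $-2(6\kappa+n)(18\kappa+2n+1)\bigl(\phi_{2n}(\cdot;y_{0})-\tfrac{1}{6}\phi_{2n}^{G}\bigr)$.

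The only point that needs a word of care is the degenerate degrees $2n\in\{8,14\}$ (with $n\ge3$), where $\phi_{2n}^{G}=0$: there the Proposition for $\mathcal{J}\phi_{2n}^{G}$ is vacuously true and the formula for $\mathcal{J}\phi_{2n}(\cdot;y_{0})$ loses its $\phi_{2n}^{G}$ term, so the corollary collapses to the assertion that $\phi_{2n}(\cdot;y_{0})$ is itself a $\mathcal{J}$-eigenfunction with the stated eigenvalue, already contained in the preceding Proposition. There is no genuine obstacle here; the content is simply the observation that subtracting the $G$-invariant piece $\tfrac{1}{6}\phi_{2n}^{G}$ exactly cancels the off-diagonal term $2\kappa(18\kappa+1)\phi_{2n}^{G}$, and the only thing to verify is the bookkeeping of the quadratic in $\kappa$ and $n$ displayed above.
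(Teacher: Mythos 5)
Your proof is correct and is exactly the argument the paper intends: the corollary is stated without proof as an immediate consequence of the two preceding propositions, and your linear combination plus the scalar identity $6\kappa(18\kappa+1)+n(30\kappa+2n+1)=(6\kappa+n)(18\kappa+2n+1)$ is the whole content. The remark about the degenerate degrees where $\phi_{2n}^{G}=0$ is a sensible extra check but changes nothing.
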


\begin{remark}
The eigenvalue of $\sum_{v\in R_{+}}\sigma_{v}$ acting on polynomials of
isotype $\chi$, where $\chi$ is an irreducible character of $G$, is
$15\chi\left(  \sigma_{v}\right)  /\chi\left(  I\right)  $. From the character
table of $G$ we find the possible values are $\pm15,\pm5,\pm3,0$. If
$p\in\mathcal{P}_{n,\kappa}$ and $p$ is of isotype $\chi$ (this means that
there is a subspace of $\mathcal{P}_{n,\kappa}$ which is invariant and
irreducible under $G$, and on which $G$ acts corresponding to $\chi$) with
eigenvalue $\lambda_{\chi}$ then $\mathcal{J}p=-\left(  \left(  15-\lambda
_{\chi}\right)  \kappa+n\right)  \left(  \left(  15+\lambda_{\chi}\right)
\kappa+n+1\right)  p$. In particular the reflection representation is realized
on $\mathcal{P}_{1}$ with $\chi\left(  I\right)  =3,\chi\left(  \sigma
_{v}\right)  =1,\lambda_{\chi}=5$.
\end{remark}

The action of $\mathcal{J}$ on $w_{n}\left(  \cdot;y\right)  $ is derived from
Proposition \ref{wtophi} and the fact that $\mathcal{J}$ commutes with any
polynomial in $\left\vert x\right\vert ^{2}$. Here are a few examples
($y\in\mathcal{I}_{+}$)%
\begin{align*}
\mathcal{J}w_{2}\left(  \cdot,y\right)   &  =-6\left(  6\kappa+1\right)
^{2}\phi_{2}\left(  \cdot;y\right)  \\
\mathcal{J}w_{3}\left(  \cdot,y\right)   &  =-4\left(  10\kappa+3\right)
\left(  5\kappa+1\right)  \phi_{3}\left(  \cdot;y\right)  \\
&  +\frac{2\left(  \tau+2\right)  }{5\omega}\left(  10\kappa+3\right)  \left(
10\kappa+1\right)  ^{2}L_{1}^{\left(  15\kappa+3/2\right)  }\left(
\omega\left\vert x\right\vert ^{2}\right)  \phi_{1}\left(  \cdot;y\right)  ,
\end{align*}
which follows from $w_{3}\left(  \cdot;y\right)  =\phi_{3}\left(
\cdot;y\right)  -\frac{\left(  \tau+2\right)  }{5\omega}\left(  10\kappa
+3\right)  \phi_{1}\left(  \cdot;y\right)  $%
\begin{gather*}
\mathcal{J}w_{4}\left(  \cdot;y\right)  =-4\left(  3\kappa+1\right)  \left(
18\kappa+5\right)  \phi_{4}\left(  \cdot;y\right)  \\
+\frac{12\left(  \tau+2\right)  }{\omega\left(  30\kappa+7\right)  }\left(
3\kappa+1\right)  \left(  6\kappa+1\right)  ^{2}\left(  10\kappa+3\right)
L_{1}^{\left(  15\kappa+5/2\right)  }\left(  \omega\left\vert x\right\vert
^{2}\right)  \phi_{2}\left(  \cdot;y\right)  .
\end{gather*}

The operator $\widetilde{H}^{\left(  3\right)  }$ (see (\ref{Hkdefn}))
preserves wavefunctions and commutes with $G$ but the effects appear to be
complicated. At the degree 2 level we find (with $y\in\mathcal{I}_{+}$)%
\begin{align*}
\widetilde{H}^{\left(  3\right)  }\phi_{2}\left(  x;y\right)   &  =2\omega
^{3}\left(  4\tau+3\right)  \left(  25080\kappa^{3}+19772\kappa^{2}%
+5058\kappa+419\right)  \phi_{2}\left(  x;y\right)  ,\\
\widetilde{H}^{\left(  3\right)  }w_{2}\left(  x;y\right)   &  =10\omega
^{3}\left(  4\tau+3\right)  \left(  3000\kappa^{3}+4276\kappa^{2}%
+1386\kappa+127\right)  w_{2}\left(  x;y\right)  \\
&  +48\omega^{3}\left(  4\tau+3\right)  \left(  420\kappa^{3}-67\kappa
^{2}-78\kappa-9\right)  \phi_{2}\left(  x;y\right)  ,\\
\widetilde{H}^{\left(  3\right)  }L_{2}^{\left(  15\kappa+1/2\right)  }\left(
\omega\left\vert x\right\vert ^{2}\right)   &  =2\omega^{3}\left(
4\tau+3\right)  \left(  30\kappa+11\right)  \\
&  \times\left(  500\kappa^{2}+1092\kappa+245\right)  L_{2}^{\left(
15\kappa+1/2\right)  }\left(  \omega\left\vert x\right\vert ^{2}\right)  .
\end{align*}
As mentioned before it is only known that $\widetilde{H}^{\left(  3\right)  }$
does not commute with $\mathcal{J}$.

\section{\label{ConcRem}Concluding Remarks}

The ultimate goal would be to find formulas allowing the construction of
orthogonal bases for wavefunctions and harmonic polynomials of any degree.
However it should be pointed out that this has not yet been done for the
octahedral group, the group of type $B_{3}$, even though transpositions and
sign-changes of variables are easier to deal with than the reflections in the
icosahedral group. The general formulas (Proposition \ref{expwp}) do produce
wavefunctions, for example in the proof of complete integrability of the
Calogero-Moser model of identical particles on a line with harmonic
confinement in $r^{-2}$ interaction potential, but there are no explicit
formulas for the action of $\exp\left(  \frac{\Delta_{\kappa}}{4\omega
}\right)  $ for the symmetric group (type $A_{N-1}$).

One approach might be to analyze monomials $q_{n}\left(  x;y_{1}\right)
q_{m}\left(  x;y_{2}\right)  q_{k}\left(  x;y_{3}\right)  $, but there seems
to be nothing like the manageable generating function $F\left(  r,x;y\right)
$. There are already serious technical difficulties in computing
$\nabla_{\kappa}\left(  q_{n}\left(  \cdot;y_{0}\right)  q_{m}\left(
\cdot;y_{1}\right)  \right)  $ in a usable form. In particular it would be
interesting to construct orthogonal bases for the $G$-invariant harmonic
polynomials, which when multiplied by appropriate Laguerre polynomials in
$\omega\left\vert x\right\vert ^{2}$ would provide bases for all~$G$-invariant wavefunctions.

\section{Appendix}

There is an analog $K\left(  x,y\right)  $ of the exponential function
$\exp\left\langle x,y\right\rangle $ on $\mathbb{R}^{N}\times\mathbb{R}^{N}$
which satisfies $K\left(  x,y\right)  =K\left(  y,x\right)  ,K\left(
xw,yw\right)  =K\left(  x,y\right)  $ for all $w\in W\left(  R\right)  $ and
$\mathcal{D}_{i}^{\left(  x\right)  }K\left(  x,y\right)  =y_{i}K\left(
x,y\right)  $ (where $\mathcal{D}_{i}^{\left(  x\right)  }$ is the operator
$\mathcal{D}_{i}$ acting on $x$, for $1\leq i\leq N$). The kernel exists for
nonsingular parameters $\left\{  \kappa_{v}\right\}  $, which include the
situation $\kappa_{v}\geq0$. Suppose $p\left(  x\right)  $ is a polynomial
then by the product rule%
\begin{align*}
\mathcal{D}_{i}\left(  p\left(  x\right)  K\left(  x,y\right)  \right)   &
=\left(  y_{i}p\left(  x\right)  +\frac{\partial}{\partial x_{i}}p\left(
x\right)  \right)  K\left(  x,y\right) \\
&  +\sum_{v\in R_{+}}\kappa_{v}\frac{p\left(  x\right)  -p\left(  x\sigma
_{v}\right)  }{\left\langle x,v\right\rangle }K\left(  x\sigma_{v},y\right)
v_{i}.
\end{align*}
This formula and the relation $wK\left(  x,y\right)  =K\left(  xw,y\right)
=K\left(  x,yw^{-1}\right)  $ show how an element of the rational Cherednik
algebra (the algebra of operators on polynomials generated by $\left\{
\mathcal{D}_{i}^{\left(  x\right)  },x_{i}:1\leq i\leq N\right\}  \cup
W\left(  R\right)  $) acts on a generic sum $\sum_{w\in W\left(  R\right)
}p_{W}\left(  x,y\right)  K\left(  xw,y\right)  $. It can be shown that if
$\mathcal{T}$ is in the rational Cherednik algebra and $\mathcal{T}K\left(
x,y\right)  =0$ then $\mathcal{T}=0$ (see Dunkl \cite{D1999}). For particular
groups and operators the calculation of $\mathcal{T}K\left(  x,y\right)  $ can
be implemented in computer algebra. The function $K$ is an undefined function
with argument $\left\langle x,y\right\rangle $ (or $\left\langle
x,yw^{-1}\right\rangle ).$ To compute $\mathcal{D}_{i}^{\left(  x\right)
}K\left(  xw,y\right)  =\mathcal{D}_{i}^{\left(  x\right)  }K\left(
x,yw^{-1}\right)  =\left(  yw^{-1}\right)  _{i}K\left(  xw,y\right)  $ one
applies $\frac{\partial}{\partial x_{i}}$ to $\left\langle x,yw^{-1}%
\right\rangle ,$ a straightforward calculation. This method was used to prove
(\ref{H2action}). Such calculations could involve as many as $\#G$ terms of
$K\left(  xw,y\right)  $ form.

To prove that $\left[  \widetilde{H}^{\left(  3\right)  },\mathcal{J}\right]
\neq0$ it suffices to prove this relation for $\kappa=0$. For a symbolic
argument analogous to the previous one replace $\mathcal{D}_{i}$ by
$\frac{\partial}{\partial x_{i}}+y_{i}$ and apply to polynomials in $x$. The
factor $\exp\left\langle x,y\right\rangle $ is understood; for example
$\mathcal{J}1=\left\vert x\right\vert ^{2}\left\vert y\right\vert
^{2}-\left\langle x,y\right\rangle \left(  \left\langle x,y\right\rangle
+2\right)  $ which manifests formula (\ref{angsq}) for $\mathcal{J}$ when
$\kappa=0$ and $N=3$ (since $\sum_{i=1}^{3}x_{i}\left(  \frac{\partial
}{\partial x_{i}}+y_{i}\right)  \left\langle x,y\right\rangle =\left\langle
x,y\right\rangle \left(  \left\langle x,y\right\rangle +1\right)  $.)

\end{document}